\documentclass[aps,prx,twocolumn,superscriptaddress,longbibliography,floatfix]{revtex4-2}
\usepackage{placeins}
\usepackage{amsmath,amssymb,amsthm}
\usepackage{graphicx}
\usepackage{dcolumn}
\usepackage{bm}
\usepackage{hyperref}
\hypersetup{hypertexnames=false}
\usepackage{enumitem}
\usepackage{algorithm}
\usepackage{algpseudocode}
\usepackage{tikz}

\usepackage{float}
\usepackage{booktabs}
\usetikzlibrary{positioning,arrows.meta,shapes.geometric,
decorations.pathreplacing,calc}

\newcommand{\ket}[1]{|#1\rangle}
\newcommand{\bra}[1]{\langle#1|}
\newcommand{\braket}[2]{\langle#1|#2\rangle}
\newcommand{\PR}{\mathrm{PR}}
\newcommand{\Tr}{\mathrm{Tr}}
\newcommand{\Var}{\mathrm{Var}}

\theoremstyle{plain}
\newtheorem{theorem}{Theorem}
\newtheorem{lemma}{Lemma}

\newtheorem{corollary}{Corollary}
\newtheorem{remark}{Remark}
\newtheorem{definition}{Definition}

\begin{document}

\title{Basis-Adaptive Sparse-State Simulation of Quantum Circuits}

\author{Ch Nihar Kartikeya}
\email{kartikeya949243@gmail.com}
\affiliation{Department of Physics, Indian Institute of Technology Hyderabad, Kandi, Sangareddy, Telangana 502285, India}

\author{Anjana K}
\affiliation{Department of Physics, Indian Institute of Technology Hyderabad, Kandi, Sangareddy, Telangana 502285, India}

\author{Bijita Sarma}
\affiliation{Chennai Mathematical Institute, H1, SIPCOT IT Park, Siruseri, Tamil Nadu 603103, India}

\author{Sangkha Borah}
\email{sangkha.borah@phy.iith.ac.in}
\affiliation{Department of Physics, Indian Institute of Technology Hyderabad, Kandi, Sangareddy, Telangana 502285, India}

\date{\today}

\begin{abstract}
Classical simulation of many-body quantum systems remains economical only when wavefunction amplitudes stay localized in the working basis. Fixed-basis sparse-state simulators scale memory as $\mathcal{O}(k)$ by keeping the largest computational-basis amplitudes; however, fidelity drops once entanglement or basis rotations spread weight across the Hilbert space. In this work, we introduce an algorithm called Basis-Adaptive Sparse-State Simulation (BASS), which updates each qubit's local representation basis during execution rather than locking the computational basis for the entire circuit. Before truncation, each qubit is rotated into the eigenbasis of its single-qubit reduced density matrix, following the natural-orbital idea from quantum chemistry, so the retained amplitudes stay clustered. We prove that top-$k$ selection is uniquely optimal for one-step truncation in any fixed basis and that the one-body reduced-density-matrix eigenbasis is a stationary product basis for the inverse participation ratio (PR), with a residual bounded by local entanglement coherence. We perform a systematic benchmarking over a variety of quantum circuits and demonstrate that the ratio \(k/\text{PR}_Z\) (sparse budget over computational participation ratio) serves as an indicator for regimes in which adaptive measurement bases provide a performance advantage. On structured brickwork circuits, BASS achieves substantially higher fidelity than the fixed-basis approach, while incurring only a moderate increase in wall-clock time in the memory-limited regime. Moreover, for disordered Ising circuits, BASS systematically provides an improvement of approximately one order of magnitude in state overlap at a fixed computational budget.
\end{abstract}

\maketitle

\section{\label{sec:intro}Introduction}

Algorithm prototyping, benchmark design, and quantum hardware verification -- all practically need approximate classical simulations. Although existing quantum processors already function in regimes where perfect simulation is not available~\cite{boixo2018,huang2020,markov2008}, exact state-vector simulation scales as $\mathcal{O}(2^N)$ in memory, placing the practical ceiling around $N \approx 50$ on the largest accessible supercomputers. Classical emulators have improved in recent years~\cite{begusic2024,tindall2024,kechedzhi2024,pan2022}, but hardware groups still depend on approximate simulators to prototype near-term algorithms and check quantum-advantage claims~\cite{arute2019,wu2021,kim2023}.

Approximate classical simulation is organized according to three main paradigms, each of which takes advantage of a distinct structural characteristic of quantum states. In particular, matrix product states (MPS)~\cite{white1992,vidal2003,schollwock2011,fishman2022} represent the state as a chain of tensors with bounded bond dimension~$\chi$ and achieve efficiency whenever the bipartite entanglement entropy follows an area law.
MPS techniques are effective for one-dimensional systems with moderate entanglement~\cite{tindall2024,begusic2024,ayral2023}, but they become exponentially more expensive for two-dimensional geometries, volume-law entanglement, or circuits nearing quantum criticality~\cite{orus2014,verstraete2008,zhou2020}, which are precisely the regimes that quantum advantage experiments target.

\textit{Stabilizer-based methods} use the Gottesman-Knill theorem to effectively simulate Clifford circuits~\cite{gottesman1998,aaronson2004}. More recent expansions handle Clifford-dominated circuits by decomposing them into stabilizer sums~\cite{bravyi2016,bravyi2019}. However, their computational cost increases exponentially with the number of non-Clifford ($T$) gates, which restricts their use to circuits with low magic~\cite{haug2023,leone2022}. \textit{Decision-diagram methods}~\cite{zulehner2019,hillmich2020} compress the state through algebraic redundancy and perform well for structured circuits but quickly deteriorate for generic entangling operations. As a result, each paradigm has a basic blind spot: decision diagrams fail for generic unitaries, stabilizer techniques fail for high-magic circuits, and tensor networks fail for highly entangled states. Quantum advantage circuits are deliberately designed to evade all three~\cite{movassagh2023,dalzell2022,brandao2016}.

Sparse-state simulation targets a different resource:
\textit{computational-basis concentration}. Rather
than representing the full $2^N$-dimensional state vector, a sparse simulator
retains only the $k \ll 2^N$ basis states with the largest squared amplitudes.
Fixed-basis sparse simulators that implement this top-$k$ rule, including the
recent formulation of Miller et al.~\cite{miller2026}, demonstrate
$\mathcal{O}(k)$ memory and $\mathcal{O}( \log k)$ time per gate. The resource cost is controlled
by a single user-chosen parameter~$k$ and is independent of entanglement
structure or circuit geometry---a property shared by no other simulation
paradigm. Related ideas have appeared in photonic quantum
computing~\cite{heurtel2023} and in quantum chemistry's selected configuration
interaction (CI) methods~\cite{tubman2016,tubman2020}, where analogous
sparsity structures arise naturally.

The bottleneck is simpler: \textit{the computational $Z$ basis is fixed}. When a quantum
state is not naturally sparse in this basis---as occurs after Hadamard-rich
layers, near quantum critical points, in circuits designed for quantum
advantage, or when the natural physics resides in a rotated frame---the $k$
retained amplitudes capture a vanishingly small fraction of the total
probability. The paramagnetic ground state $\ket{+}^{\otimes N}$, which is a single state in the $X$ basis but distributes uniformly across all $2^N$ computational-basis states, is the canonical example. More generally, the sparse budget is quickly depleted by any circuit that produces significant superposition relative to the computational basis, and the issue worsens exponentially with system size. In the regimes of highest physical interest, this basis-dependence bottleneck has restricted the usefulness of sparse-state simulation.

This bottleneck has a close analog in quantum chemistry. The dependence of
configuration-interaction expansions on the molecular-orbital basis was
recognized by L\"owdin~\cite{lowdin1955}, who showed that the eigenbasis of
the one-body reduced density matrix - the \textit{natural-orbital basis} - is
a stationary and highly compact representation for leading configurations.
Ratini et al.\ recently demonstrated that natural orbitals also improve
mutual-information sparsity in quantum chemistry
simulations~\cite{ratini2024}. This connection between basis choice and
representation efficiency has driven decades of research in orbital
optimization~\cite{roos1980,tubman2016,tubman2020}. The present work transfers
the same principle to online sparse simulation of gate-model quantum circuits,
where the relevant local density matrices are single-qubit RDMs rather than
molecular one-body density matrices.

We ask whether the representation basis can be updated during the circuit and,
if so, whether the single-qubit RDM eigenbasis is a sensible choice. We study
top-$k$ sparse-state simulation in adaptive product bases and report theory,
code, and benchmarks for when that helps. Our contributions are:

\indent \textit{(i) A theoretical separation between support selection and basis selection.} We demonstrate two foundational lemmas and a central theorem that formulate the mathematical boundaries of what can and cannot be enhanced within the sparse-state paradigm.

Lemma~\ref{lem:topk}: \textit{Top-$k$ optimality.} In any fixed orthonormal
basis and in any fixed budget~$k$, maintaining the $k$ states with the largest
squared amplitudes maximizes the single-step retained probability; the
maximizer is eccentric when there is no tie at the truncation boundary.

This result, a direct consequence of the classical rearrangement inequality,
closes the support-selection problem in every step: no reweighting, diversity
penalty, or entanglement-aware selection can improve on the simple top-$k$
rule without first changing the basis. In spite of being simple, this result is helpful because, when combined with Lemma~\ref{lem:diversity} and Theorem~\ref{thm:stationarity}, it distinguishes between the still-adjustable representation basis and the already-optimal greedy support-selection step.

Lemma~\ref{lem:diversity}: \textit{Probability-diversity tradeoff.}
This rigorously eliminates the natural class of ``entanglement-aware'' truncation strategies
that sacrifice amplitude for structural diversity, clarifying why such
strategies have not succeeded in practice.

Theorem~\ref{thm:stationarity}: \textit{BASS stationarity.} The single-qubit
reduced-density matrix (RDM) eigenbasis is an \emph{exact} stationary point of
the participation-ratio objective in the product-state limit and a stationary
point up to a controlled residual---bounded by the covariance between local
amplitude concentration and entanglement coherence---for entangled inputs.
This residual vanishes continuously as the single-qubit entanglement entropy
approaches zero, identifying the RDM eigenbasis as the natural choice for
sparse simulation in direct mathematical analogy to L\"owdin's natural
orbitals in quantum chemistry. Whether stationarity implies minimality in full generality is left open; in our benchmarks the do-no-harm guard reverts fewer than 0.5\% of proposed rotations. The do-no-harm guard checks the resulting representation and undoes a rotation if it accidentally makes the state less sparse; see Section~\ref{sec:algorithm} for full details.

These three results work together rather than independently.
Lemmas~\ref{lem:topk} and~\ref{lem:diversity} prove that within any
fixed basis the \emph{single-step} support-selection problem has a
top-$k$ optimum (unique up to boundary ties), and no amplitude-diversity
alternative improves single-step retained probability at fixed
budget. The natural remaining degree of freedom is the basis, and
Theorem~\ref{thm:stationarity} identifies the principled, provably stationary
choice (with the residual bounded by single-qubit entanglement coherence).
Whether \emph{cumulative} retention across many gates admits a non-greedy
optimum, and whether the RDM eigenbasis is also a local or global
minimum, remain open; we mark these as conjectures supported by exhaustive
empirical evidence (Table~\ref{tab:proofs}).

\begin{figure*}[!htb]
  \centering
  \resizebox{0.9\linewidth}{!}{%
  \begin{tikzpicture}[
      >=Stealth,
      barZ/.style    ={fill=blue!35,  draw=blue!70!black,  thick},
      barZcut/.style ={fill=blue!10,  draw=blue!40!black,  thick},
      barB/.style    ={fill=red!45,   draw=red!75!black,   thick},
      barBcut/.style ={fill=red!12,   draw=red!40!black,   thick},
      cutline/.style ={dashed, thick, red!75!black},
      annot/.style   ={font=\fontsize{8pt}{8pt}\selectfont},
      title/.style   ={font=\bfseries\fontsize{8pt}{13.2pt}\selectfont},
    ]

    \node[title] at (1.7,4.2) {Computational $Z$ basis};
    \node[annot] at (1.7,3.8) {$\PR_Z \gg k$\;\;(delocalized)};

    \draw[->,thick] (0,0) -- (3.5,0);
    \node[annot] at (1.75,-0.4) {basis state $x$};
    \draw[->,thick] (0,0) -- (0,3.0) node[above,annot] {$|\alpha_x|^2$};

    \foreach \x/\h in {%
        0.20/0.45, 0.40/0.55, 0.60/0.40, 0.80/0.65, 1.00/0.50,
        1.20/0.70, 1.40/0.60, 1.60/0.55, 1.80/0.75, 2.00/0.50,
        2.20/0.45, 2.40/0.60, 2.60/0.40, 2.80/0.55, 3.00/0.50,
        3.20/0.35}{%
      \pgfmathparse{\h > 0.62 ? 1 : 0}\ifnum\pgfmathresult=1
        \draw[barZ]    (\x-0.05,0) rectangle (\x+0.05,\h);
      \else
        \draw[barZcut] (\x-0.05,0) rectangle (\x+0.05,\h);
      \fi
    }
    \draw[cutline] (-0.05,0.62) -- (3.45,0.62);
    \node[annot,red!75!black] at (3.45,0.62) [above left] {top-$k$ cut};

    \draw[decorate,decoration={brace,amplitude=4pt,raise=1pt},thick,
          blue!60!black]
      (0.15,0.90) -- (3.05,0.90)
      node[midway,above=3pt,annot,align=center,blue!60!black]
      {only $\sim k/\PR_Z$ of\\the mass retained};

    \begin{scope}[shift={(5.35,3.6)},scale=0.65]
      \draw[thick] (0,0) circle (1);
      \draw[dashed] (-1,0) arc (180:360:1 and 0.3);
      \draw[thick]  (-1,0) arc (180:0:1 and 0.3);
      \draw[->,thick] (0,-1.05) -- (0,1.10);
      \node[annot,scale=1] at (0,1.45) {$\ket{0}$};
      \node[annot,scale=1] at (0,-1.45) {$\ket{1}$};
      \draw[->,thick,red!75!black] (0,0) -- (0.57,0.82);
      \node[annot,red!75!black,scale=1.0] at (1.1,1.1)
        {$\vec{v}_j^{\,\max}$};
    \end{scope}

    \draw[->,line width=1.4pt,blue!60!black]
      (3.85,1.6) -- (6.85,1.6);
    \node[title,blue!60!black] at (5.35,2.0)
      {$U \;=\; \bigotimes_{j=1}^{N} U_j$};
    \node[annot,align=center] at (5.35,0.9)
      {$U_j$ diagonalizes\\[0.5ex] $\rho_j=\Tr_{\!\neq j}\ket{\psi}\!\bra{\psi}$};

    \begin{scope}[shift={(0.7,0)}]
      \node[title] at (8.3,4.2) {RDM eigenbasis (BASS)};
      \node[annot] at (8.3,3.8) {$\PR \sim \mathcal{O}(1)$\;\;(concentrated)};

      \draw[->,thick] (6.55,0) -- (10.1,0);
      \node[annot] at (8.325,-0.4) {basis state $x$};
      \draw[->,thick] (6.55,0) -- (6.55,3.0) node[above,annot] {$|\alpha_x|^2$};

      \foreach \x/\h in {6.75/2.05, 6.95/1.45, 7.15/1.05, 7.35/0.75, 7.55/0.55}{%
        \draw[barB] (\x-0.05,0) rectangle (\x+0.05,\h);
      }
      \foreach \x/\h in {%
          7.85/0.18, 8.05/0.14, 8.25/0.10, 8.45/0.08,
          8.65/0.07, 8.85/0.06, 9.05/0.05, 9.25/0.05,
          9.45/0.04, 9.65/0.04, 9.85/0.03}{%
        \draw[barBcut] (\x-0.05,0) rectangle (\x+0.05,\h);
      }
      \draw[cutline] (6.50,0.45) -- (10.05,0.45);
      \node[annot,red!75!black] at (10.05,0.45) [above left] {top-$k$ cut};

      \draw[decorate,decoration={brace,amplitude=4pt,raise=1pt},thick,
            red!70!black]
        (6.70,2.15) -- (7.60,2.15);
      \node[annot,align=left,red!70!black,anchor=south west] at (6.65,2.30)
        {$\sim 1-\varepsilon$ of the\\mass retained};
    \end{scope}

  \end{tikzpicture}
  }
  \caption{\textbf{Schematic of basis-adaptive sparse-state
  simulation.}
  \textbf{Left:} in the computational basis the state $\ket{\psi}$ is
  delocalized, with participation ratio $\PR_Z \gg k$; a sparse simulator
  retains only $\sim k/\PR_Z$ of the total probability (states above the
  dashed top-$k$ cut, dark blue).
  \textbf{Middle:} BASS rotates the representation by a product unitary
  $U=\bigotimes_j U_j$, where each $U_j$ diagonalizes the single-qubit
  reduced density matrix $\rho_j$. The Bloch-sphere inset shows the
  dominant eigenvector $\vec{v}_j^{\,\max}$ along which probability is
  re-concentrated.
  \textbf{Right:} in the rotated frame probability collapses onto
  $\PR\sim O(1)$ basis states, so the same top-$k$ truncation now
  captures $1-\varepsilon$ of the mass at identical memory cost.}
  \label{fig:bass_schematic}
\end{figure*}

\indent \textit{(ii) A practical algorithmic primitive for classical simulation.}
We introduce BASS (Basis-Adaptive Sparse-State simulation). Qubits are rotated into their local reduced-density-matrix (RDM) eigenbasis before truncation, which tightens top-$k$ retention in the rotated frame. Open-addressing hash tables give $\mathcal{O}(N\cdot \mathrm{nnz})$ RDM updates and $\mathcal{O}(\mathrm{nnz})$ gate applications without sort-heavy paths used in some sparse simulators. Deferred truncation, a diagonal-gate fast path, and multi-pass coordinate descent with a participation-ratio (PR) do-no-harm guard keep overhead bounded; rejected rotations are rare in our benchmarks (Table~\ref{tab:doNH}).

\indent \textit{(iii) A computable crossover diagnostic validated by exponential fidelity gains.}

The $Z$-basis PR $\PR_Z$ is a single dimensionless diagnostic: in our benchmarks BASS wins for $k < \PR_Z$ and matches fixed-basis behavior above it. On brickwork circuits with $N=18$ and $k=5000$, for example, BASS reaches $F \approx 1.3 \times 10^{-2}$ versus $F \approx 4.3 \times 10^{-4}$ for the fixed basis, at about $6\text{--}13\times$ wall-clock cost in the memory-limited regime.

BASS sits alongside tensor-network, stabilizer, and fixed-basis sparse simulators.
Orbital-optimization ideas from quantum chemistry may transfer here as well~\cite{roos1980,tubman2016,tubman2020,ratini2024}.
Adaptive natural-orbital-like updates for sparse circuit simulation appear
largely unexplored in prior sparse simulators~\cite{fasano2022, buijsman2018, ratini2024}.
The $\PR_Z$ crossover rule, the scoped theorems, and the large fidelity gaps on
stress-test circuits are the main quantitative messages of the paper.

The remainder of this paper is organized as follows.
Section~\ref{sec:background} introduces the sparse-state framework, key
quantities, and comparison methods. Section~\ref{sec:theory} presents our three
main theorems with complete proofs.
Section~\ref{sec:schmidt_weighted} provides experimental validation that
Schmidt-weighted truncation is suboptimal. Section~\ref{sec:algorithm} describes the
BASS algorithm and its efficient implementation. Section~\ref{sec:results} presents
numerical benchmarks. Section~\ref{sec:mps_comparison} compares BASS with
tensor-network methods and sketch the sparsity-entanglement phase
diagram. Section~\ref{sec:discussion} covers the physical mechanism, limitations, links to natural-orbital theory, and implications for the quantum advantage boundary. Section~\ref{sec:conclusions} concludes.

\section{\label{sec:background}Background and Preliminaries}
Figure~\ref{fig:bass_schematic} sketches the BASS workflow. For any circuit whose amplitudes are spread across many computational-$Z$ basis states (left), fixed top-$k$ truncation retains only a fraction of the weight ($\sim k/\PR_Z$). A product of single-qubit rotations (middle), $U=\bigotimes_j U_j$, built from each $\rho_j$, reorients the local axes (Bloch-sphere inset). After rotation (right), $\PR$ is much smaller in the adapted frame, so the same budget~$k$ captures a larger share of the probability while memory and per-gate cost remain at the sparse-simulator scale.
We next fix notation for sparse states, PR, and the simulators
we compare against.

\subsection{Sparse-state representation}

An $N$-qubit quantum state $\ket{\psi} \in (\mathbb{C}^2)^{\otimes N}$ requires all $2^N$ computational-basis amplitudes for an exact representation. Any sparse approximation retains only the $k$ most significant of these
amplitudes:
\begin{equation}
\ket{\phi} = \frac{1}{\gamma} \sum_{i=1}^{k} \alpha_i \ket{x_i},
\quad
\gamma^2 = \sum_{i=1}^{k}|\alpha_i|^2,
\label{eq:sparse}
\end{equation}
where $\mathcal{S}=\{x_1,\ldots,x_k\}\subset\{0,1\}^N$ is the support set,
$\alpha_i = \braket{x_i}{\psi}$ are the retained amplitudes, and
$\gamma \le 1$ is the normalization factor ensuring $\braket{\phi}{\phi}=1$.
The \textit{retained probability} is $p_\mathcal{S} = \gamma^2 =
\sum_{x \in \mathcal{S}} |\braket{x}{\psi}|^2$, which equals the
fidelity $F = |\braket{\psi}{\phi}|^2$ for a single truncation step~\cite{miller2026}.

The cumulative retention probability $\Gamma_L = \prod_{t=1}^{L} \gamma_t^2$ grows multiplicatively in a multi-step simulation with $L$ truncation events. With Pearson correlation $\rho(\Gamma_L, F) \ge 0.97$ across all methodologies and circuit families investigated in this work, $\Gamma_L$ is an experimentally reliable predictor, even though it generally does not match the true fidelity.

\subsection{Participation Ratio}

The Participation Ratio (PR) quantifies the effective number of basis states that substantially contribute to the quantum state:
\begin{equation}
\PR = \frac{\bigl(\sum_{i=1}^{D}|\alpha_i|^2\bigr)^2}
{\sum_{i=1}^{D}|\alpha_i|^4}
= \frac{1}{\sum_{i=1}^{D}|\alpha_i|^4},
\label{eq:pr}
\end{equation}
where the second equality is true for normalized states and $D = 2^N$. The PR satisfies $1 \le \PR \le D$: $\PR = D$ denotes the uniform superposition, and $\PR = 1$ denotes a single basis state. A state with low PR in a given basis is highly concentrated---and therefore well suited for sparse-state simulation---while a state with $\PR \gg k$ spreads its weight across many more basis states than the simulator can retain.

The $Z$-basis PR $\PR_Z$ (computed in the computational
basis) plays a central role throughout this work: across the benchmarked
families it predicts the crossover between the regime where BASS provides
advantage ($k < \PR_Z$) and where standard fixed-basis truncation suffices
($k > \PR_Z$).

\subsection{Fixed-basis sparse simulation}
\label{subsec:fixed_basis}

The fixed-basis sparse simulator used throughout this work operates entirely
in the computational $Z$ basis, following the same top-$k$ sparse-update
principle as Ref.~\cite{miller2026}. For each two-qubit gate: (1)~apply the
$4 \times 4$ unitary to every stored basis state, expanding the support to at
most $4k$ entries; (2)~merge duplicate basis states by summing amplitudes;
(3)~sort by $|\alpha|^2$; (4)~retain the top-$k$ entries; (5)~update
$\gamma$. The cost per gate is $\mathcal{O}( \log k)$ (dominated by the sort); total
memory is $\mathcal{O}(k)$. We refer to this comparison method simply as the
\textit{fixed-basis} sparse simulator in the remainder of the paper.

\subsection{Matrix product states and Schmidt rank}
\label{subsec:mps_intro}

An MPS with bond dimension $\chi$ represents the state as a product of $N$
tensors, each of dimension at most $2 \chi^2$, requiring
$\sim 2N\chi^2$ parameters. The cost of simulating an $M$-gate circuit is
$\mathcal{O}(MN\chi^3)$. MPS methods are efficient when the bipartite entanglement
entropy satisfies an area law, so that $\chi$ remains bounded; for
one-dimensional critical systems, $\chi$ grows polynomially in $N$, while for
two-dimensional or Haar-random circuits, it grows
exponentially~\cite{schollwock2011,orus2014}.

\section{\label{sec:theory}Theoretical Results}
The theoretical contribution of this work is a scoped characterization of
what can and cannot be improved by support selection within the sparse-state
simulation paradigm.
The three theorems presented below are logically sequential and progressively
tighten the constraints: Lemma~\ref{lem:topk} closes the support-selection
problem within any fixed basis; Lemma~\ref{lem:diversity} explains the
probability cost of diversity-preserving alternatives; and
Theorem~\ref{thm:stationarity} identifies adaptive product bases as a
principled next degree of freedom. Together, they reduce the practical
question studied here to a sharply posed one: how to choose a useful local
representation basis at a fixed sparse budget.

\subsection{Top-\texorpdfstring{$k$}{k} optimality in any fixed basis}

The first result states the optimality of the core fixed-basis
sparse-update operation.
Retaining the $k$ states with the biggest squared amplitudes maximizes the retention probability $\gamma^2$ and is unique when no amplitudes tie at the truncation boundary. No reweighting, diversity penalty, or entanglement-aware selection can do better without first altering the basis.

\begin{lemma}[Top-$k$ optimality]
\label{lem:topk}
Let $\ket{\chi}=\sum_x \beta_x\ket{x}$ be a normalized state expressed in any
fixed orthonormal basis $\{\ket{x}\}$. Among all index sets $\mathcal{S}$ with
$|\mathcal{S}|=k$, any set $\mathcal{S}^*$ consisting of the $k$ indices with
the largest $|\beta_x|^2$ maximizes the retained probability
$p_{\mathcal{S}} = \sum_{x\in\mathcal{S}}|\beta_x|^2$. If the smallest
retained squared amplitude is strictly larger than the largest excluded
squared amplitude, this maximizer is unique.
\end{lemma}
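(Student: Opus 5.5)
The plan is a direct exchange argument on the weights $w_x = |\beta_x|^2 \ge 0$. It uses only that the weights are nonnegative reals indexed by a finite set, and does not use normalization of $\ket{\chi}$ or any property of the basis. Fix any ordering $w_{(1)} \ge w_{(2)} \ge \dots \ge w_{(D)}$ of the weights. Let $\mathcal{S}^*$ be any set of $k$ indices realizing the top $k$ positions of such an ordering. Set $\tau = \min_{x\in\mathcal{S}^*} w_x$. By construction, every excluded index $y \notin \mathcal{S}^*$ satisfies $w_y \le \tau$.

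\textbf{Optimality.} Take an arbitrary $\mathcal{S}$ with $|\mathcal{S}|=k$. Because both sets have cardinality $k$, we have $|\mathcal{S}\setminus\mathcal{S}^*| = |\mathcal{S}^*\setminus\mathcal{S}| =: m$. The common part cancels in the difference of retained probabilities, leaving
\begin{equation*}
p_{\mathcal{S}^*} - p_{\mathcal{S}} = \sum_{x\in\mathcal{S}^*\setminus\mathcal{S}} w_x \;-\; \sum_{y\in\mathcal{S}\setminus\mathcal{S}^*} w_y .
\end{equation*}
Each of the $m$ terms in the first sum is at least $\tau$, and each of the $m$ terms in the second sum is at most $\tau$. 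Hence the difference is at least $m\tau - m\tau = 0$. This gives maximality with no reference to rearrangement inequalities; it is an equal-cardinality exchange bound. The same argument also shows that all top-$k$ sets, under any tie-breaking, attain the same value.

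\textbf{Uniqueness.} Assume the strict gap $\tau > \max_{y\notin\mathcal{S}^*} w_y =: \sigma$. If $\mathcal{S}\neq\mathcal{S}^*$, then $m\ge 1$, and the bound sharpens to $p_{\mathcal{S}^*}-p_{\mathcal{S}} \ge m(\tau-\sigma) > 0$. So $\mathcal{S}$ is strictly worse, and $\mathcal{S}^*$ is the unique maximizer. Under the strict-gap hypothesis, "the $k$ indices with the largest weights" is also unambiguous, so $\mathcal{S}^*$ is well defined.

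The only step that needs care is the bookkeeping of ties when no gap is assumed: "the top $k$" must be read as "some set realizing the top $k$ of a weight ordering". I would handle this by defining $\mathcal{S}^*$ through an arbitrary ordering, as above. I would also note the trivial edge cases: $k \ge D$ gives $\mathcal{S}^*$ equal to the full index set, and $\tau = 0$ is harmless.
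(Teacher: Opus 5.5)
Your proposal is correct and follows essentially the same route as the paper: both cancel the common part of $\mathcal{S}^*$ and $\mathcal{S}$ and compare the two equal-cardinality difference sets. Your single threshold $\tau=\min_{x\in\mathcal{S}^*}|\beta_x|^2$ replaces the paper's term-by-term pairing in sorted order, and your explicit gap bound $m(\tau-\sigma)>0$ settles uniqueness a little more cleanly than the paper's equality-case discussion.
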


\begin{proof}
Let $|\beta_{\sigma(1)}|^2 \ge |\beta_{\sigma(2)}|^2 \ge \cdots$ be the
squared amplitudes in sorted order, so that
$\mathcal{S}^* = \{\sigma(1),\ldots,\sigma(k)\}$. For any alternative set
$\mathcal{S}'$ with $|\mathcal{S}'|=k$, we decompose the probability
difference as
\begin{equation}
p_{\mathcal{S}^*} - p_{\mathcal{S}'}
= \sum_{x\in\mathcal{S}^*\setminus\mathcal{S}'}|\beta_x|^2
- \sum_{x\in\mathcal{S}'\setminus\mathcal{S}^*}|\beta_x|^2.
\label{eq:topk_diff}
\end{equation}
Both sums range over the same number of indices,
$|\mathcal{S}^*\setminus\mathcal{S}'| = |\mathcal{S}'\setminus\mathcal{S}^*|$.
Every index $x \in \mathcal{S}^*\setminus\mathcal{S}'$ ranks at most $k$-th
in the sorted order (since it belongs to $\mathcal{S}^*$), while every
$x \in \mathcal{S}'\setminus\mathcal{S}^*$ ranks strictly below $k$-th (since
it does not belong to $\mathcal{S}^*$). Pairing the elements of the two sets
in sorted order, each term in the first sum is at least as large as the
corresponding term in the second, yielding
$p_{\mathcal{S}^*} - p_{\mathcal{S}'} \ge 0$.

Equality in Eq.~\eqref{eq:topk_diff} requires every swapped pair to have
identical squared amplitude, which demands a tie at exactly the $k$-th
largest value. In the generic case of distinct $|\beta_x|^2$, the inequality
is strict and $\mathcal{S}^*$ is the unique maximizer.
\end{proof}

\begin{remark}[Operational scope]
Lemma~\ref{lem:topk} applies to any input state, including previously
truncated states: at each gate application, top-$k$ maximizes the retained
probability of the state that the simulator \textit{currently holds}, regardless
of the prior truncation history.
\end{remark}
\begin{remark}[Greedy vs.\ global optimality]
Top-$k$ maximizes $p_t$ at each step, thereby maximizing
$\Gamma_L = \prod_t p_t$ greedily. Whether a non-greedy strategy that trades lower $p_t$ for greater $p_{t+1}$ would yield higher cumulative retention remains an open question. Empirically, no
alternative strategy outperformed top-$k$ at fixed budget comparisions. (see Sec.~\ref{sec:schmidt_weighted}).
\end{remark}

\subsection{Probability-diversity tradeoff}

Lemma~\ref{lem:topk} shows that for any fixed basis, the top-$k$ truncation technique is optimal. One may intuitively assume that a more diverse support set, encompassing a greater variety of Schmidt sectors, would better preserve entanglement structure and improve fidelity propagation across successive circuit layers. We formally dispute this perception with Lemma~\ref{lem:diversity}, which establishes that for a fixed budget $k$, any support set that diversifies across more Schmidt sectors at the expense of coefficient magnitude always delivers a weakly lower retention probability.
Diversity is not free.
\begin{definition}[Schmidt sector count]
For a bipartition $c$
dividing the $N$ qubits into two groups, the Schmidt sector
count of a support set $S$ is
\begin{equation}
D_c(S)=\left|\left\{\left\lfloor x/2^c \right\rfloor : x\in S\right\}\right|.
\end{equation}
Here, a Schmidt sector corresponds to a distinct computational
basis configuration on one side of the bipartition. Two basis
states belong to the same Schmidt sector if they differ only on
the opposite side of the cut. Equivalently, the Schmidt sectors
group basis states according to the left-half bitstring across the
partition. Thus, $D_c(S)$ counts how many distinct left-side configurations
are represented among the retained basis states in $S$. A larger
Schmidt sector count, therefore, means that the support spans a
more diverse set of entanglement sectors across the cut.
\end{definition}

\begin{lemma}[Probability-diversity tradeoff]
\label{lem:diversity}
Let $|\mathcal{S}'| = |\mathcal{S}^*| = k$. If $D_c(\mathcal{S}') > D_c(\mathcal{S}^*)$ for some bipartition~$c$, then $p_{\mathcal{S}'} \le p_{\mathcal{S}^*}$, with equality only when all indices in $\mathcal{S}^*\setminus\mathcal{S}'$ have the same squared amplitude as those in $\mathcal{S}'\setminus\mathcal{S}^*$.
\end{lemma}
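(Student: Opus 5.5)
The plan is to reduce Lemma~\ref{lem:diversity} to Lemma~\ref{lem:topk}. The diversity hypothesis $D_c(\mathcal{S}')>D_c(\mathcal{S}^*)$ will be used only to guarantee that $\mathcal{S}'\neq\mathcal{S}^*$. Once that is established, the inequality and the equality characterization follow from the pairing argument already given in the proof of Lemma~\ref{lem:topk}. Throughout, $\mathcal{S}^*$ denotes a top-$k$ set for the current state in the fixed basis, as in Lemma~\ref{lem:topk}.

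\emph{Step 1 (distinctness).} $D_c(\cdot)$ is a function of the set alone: it is the cardinality of the image of the set under $x\mapsto\lfloor x/2^c\rfloor$. Equal sets therefore have equal sector counts. The strict inequality $D_c(\mathcal{S}')>D_c(\mathcal{S}^*)$ then forces $\mathcal{S}'\neq\mathcal{S}^*$. Since both sets have size $k$, this gives
\[
m:=|\mathcal{S}^*\setminus\mathcal{S}'|=|\mathcal{S}'\setminus\mathcal{S}^*|\ge 1 .
\]

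\emph{Step 2 (inequality).} Lemma~\ref{lem:topk} applies to every size-$k$ set, so it gives $p_{\mathcal{S}'}\le p_{\mathcal{S}^*}$ immediately. For the equality clause we use the decomposition in Eq.~\eqref{eq:topk_diff}. List $a_1\ge\cdots\ge a_m$, the squared amplitudes on $\mathcal{S}^*\setminus\mathcal{S}'$, and $b_1\ge\cdots\ge b_m$, those on $\mathcal{S}'\setminus\mathcal{S}^*$. Every retained index ranks at or above the $k$-th position and every excluded one ranks strictly below it, so
\[
a_i\ \ge\ |\beta_{\sigma(k)}|^2\ \ge\ b_j \qquad \text{for all } i,j,
\]
and hence $a_i-b_i\ge 0$ termwise.

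\emph{Step 3 (equality).} If $p_{\mathcal{S}'}=p_{\mathcal{S}^*}$, then $\sum_i (a_i-b_i)=0$ with nonnegative terms, so $a_i=b_i$ for every $i$. The chain $a_i\ge|\beta_{\sigma(k)}|^2\ge b_j$ then pins all $a_i$ and all $b_j$ to the common value $|\beta_{\sigma(k)}|^2$. This is exactly the stated condition that the swapped indices have the same squared amplitude. Conversely, that condition gives equality, so the characterization is sharp.

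The main obstacle is conceptual rather than technical: the sector count carries no quantitative information, so one might expect it to enter the bound. Step~1 shows that it does not; it serves only to certify $\mathcal{S}'\neq\mathcal{S}^*$. I would also add a remark that the statement presumes $\mathcal{S}^*$ is a top-$k$ set. Under ties there may be several such sets with different $D_c$, and in that case equality is genuinely attainable, which is consistent with the equality clause.
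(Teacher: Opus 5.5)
Your proposal is correct and is essentially the paper's proof. Both use the hypothesis $D_c(\mathcal{S}')>D_c(\mathcal{S}^*)$ only to certify $\mathcal{S}'\neq\mathcal{S}^*$, and then invoke the pairing argument of Lemma~\ref{lem:topk}; your Step~3 is slightly more explicit than the paper in pinning all swapped squared amplitudes to the boundary value $|\beta_{\sigma(k)}|^2$.
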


\begin{proof}
The condition $D_c(\mathcal{S}') > D_c(\mathcal{S}^*)$ implies $\mathcal{S}' \neq \mathcal{S}^*$. The result then follows directly from Lemma~\ref{lem:topk}, which establishes $p_{\mathcal{S}'} \le p_{\mathcal{S}^*}$ for any $\mathcal{S}' \neq \mathcal{S}^*$ of the same cardinality, with equality only under tied boundary amplitudes.
\end{proof}

\begin{remark}[Scope and limitations]
Three important caveats apply.
(1)~\textit{Fixed $k$ only}: adaptive-$k$ strategies that increase the budget
at high-entanglement steps are not excluded by this theorem.
(2)~\textit{Single-step guarantee}: whether sacrificing $p_t$ for diversity at
step $t$ might improve $p_{t+1}$ at the next step is an open question; we did
not observe this effect across the trials reported below.
(3)~\textit{Fixed basis}: in the Schmidt basis, the largest amplitudes are
by construction the most entanglement-rich, so no diversity tradeoff
exists-this is precisely the mechanism by which MPS methods succeed.
\end{remark}

\begin{corollary}
Schmidt-weighted truncation-deliberately retaining representatives from
each Schmidt sector at the expense of amplitude-is provably suboptimal at
fixed~$k$. This prediction is confirmed numerically in
Sec.~\ref{sec:schmidt_weighted}.
\end{corollary}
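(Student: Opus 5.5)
The plan is to reduce the corollary directly to Lemma~\ref{lem:topk}, with Lemma~\ref{lem:diversity} supplying the bridge. First I will formalize ``Schmidt-weighted truncation'' as any support-selection rule $\mathcal{R}$ that, for a bipartition $c$, returns a set $\mathcal{S}_\mathcal{R}$ with $|\mathcal{S}_\mathcal{R}|=k$. The defining property of such a rule is that it retains at least one representative from each of a prescribed collection of left-side configurations $\lfloor x/2^c\rfloor$, even when those representatives have smaller $|\beta_x|^2$ than some excluded index. The comparison is at fixed budget, so both $\mathcal{S}_\mathcal{R}$ and the top-$k$ set $\mathcal{S}^*$ have cardinality $k$ and live in the same computational basis. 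This is exactly the setting of Lemma~\ref{lem:topk}.

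The argument then splits into two cases. If $\mathcal{S}_\mathcal{R}=\mathcal{S}^*$, the rule coincides with top-$k$ and there is nothing to prove. If $\mathcal{S}_\mathcal{R}\neq\mathcal{S}^*$, Lemma~\ref{lem:topk} gives $p_{\mathcal{S}_\mathcal{R}}\le p_{\mathcal{S}^*}$ directly. In the typical intended situation $D_c(\mathcal{S}_\mathcal{R})>D_c(\mathcal{S}^*)$, and Lemma~\ref{lem:diversity} states the same inequality in diversity language.

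To obtain strict suboptimality, I will use the exchange decomposition in Eq.~\eqref{eq:topk_diff}. By the definition of the rule, at least one index $y\in\mathcal{S}_\mathcal{R}\setminus\mathcal{S}^*$ was admitted only to cover a sector. Whenever such a sacrifice actually occurs, its weight is strictly below some displaced $x\in\mathcal{S}^*\setminus\mathcal{S}_\mathcal{R}$. Pairing the two difference sets in sorted order then gives at least one strictly positive term, and every other term is nonnegative. Hence $p_{\mathcal{S}^*}-p_{\mathcal{S}_\mathcal{R}}>0$.

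The main obstacle is pinning down what ``provably suboptimal'' can honestly claim. Lemma~\ref{lem:topk} only gives weak inequality under boundary ties. I would therefore state the result as weakly suboptimal in general and strictly suboptimal whenever the diversity constraint displaces a strictly larger amplitude; in particular this holds generically when squared amplitudes are distinct. I would also note explicitly that the statement is single-step and fixed-$k$, per the scope remark following Lemma~\ref{lem:diversity}. The multi-step claim is left to the numerics in Sec.~\ref{sec:schmidt_weighted}.
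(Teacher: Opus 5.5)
Your proposal is correct and follows the paper's route. The paper gives no separate proof of this corollary: it presents it as an immediate consequence of Lemma~\ref{lem:diversity}, and that lemma is itself just Lemma~\ref{lem:topk} applied to a $k$-set other than $\mathcal{S}^*$. Your version adds three things: you appeal to Lemma~\ref{lem:topk} directly, which removes the need for the hypothesis $D_c(\mathcal{S}_\mathcal{R})>D_c(\mathcal{S}^*)$ that score-based rules need not satisfy; you give a strictness condition via the exchange decomposition; and you scope the claim to a single step at fixed $k$, which matches the paper's own caveats.
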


The combined implication of Lemmas~\ref{lem:topk} and~\ref{lem:diversity}
is intentionally narrower but useful: for a fixed basis, fixed budget, and
single truncation step, top-$k$ support selection is optimal. Alternative
selection rules can preserve different structural features, but they cannot
increase the same retained probability without either changing the basis,
changing the budget, or accepting a non-greedy multi-step objective.
The distinction makes the representation basis an ideal target for improving
sparse simulation with fixed memory.

\subsection{BASS stationarity with controlled error}

We now turn to the central question of optimal basis selection. Motivated by
the analogy with natural orbitals in quantum
chemistry~\cite{lowdin1955,ratini2024}, we consider rotating each qubit
independently into the eigenbasis of its single-qubit reduced density matrix.
Theorem~\ref{thm:stationarity} shows that this choice is a stationary point
of the participation-ratio objective, with the gradient residual controlled by
the entanglement of each qubit with the rest of the system.

\begin{theorem}[BASS stationarity]
\label{thm:stationarity}
Let $U_j$ diagonalize the single-qubit RDM
$\rho_j = \Tr_{\ne j} \ket{\psi}\!\bra{\psi}$ and let
$\ket{\phi} = U^\dagger\ket{\psi}$, where
$U = \bigotimes_j U_j$, and define the inverse PR
$\mathcal{I}=\sum_x|\phi_x|^4=1/\PR$. The IPR gradient residual
\begin{equation}
\widetilde{R}_j =
\max_{\|G_j\|_2=1}\left|\frac{d\,\mathcal{I}}{d\theta}\right|_{\theta=0}
\end{equation}
under rotations $U_j(\theta) = e^{-i\theta G_j}U_j$ satisfies the bound
\begin{equation}
\widetilde{R}_j \le
4\sqrt{D \cdot \Var_{\bar{x}_j}\!\bigl(|\phi(\bar{x}_j,0)|^2\bigr)}
\cdot \sqrt{\sum_{\bar{x}_j}|\phi(\bar{x}_j,0)|^2|\phi(\bar{x}_j,1)|^2},
\label{eq:Rj}
\end{equation}
where $D = 2^{N-1}$ and $\bar{x}_j$ index the $N-1$ qubits other than $j$.
The corresponding participation-ratio residual
$R^{(\PR)}_j=\max_{\|G_j\|_2=1}|d\PR/d\theta|_{\theta=0}$ obeys
$R^{(\PR)}_j \le \widetilde{R}_j/\mathcal{I}^2$. A sufficient condition for
both residuals to vanish is that qubit $j$ be unentangled from the rest of the
system; both residuals vanish continuously as the single-qubit entanglement
entropy $S_j \to 0$ for fixed finite $N$.
\end{theorem}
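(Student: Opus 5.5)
The plan is to compute the first variation of $\mathcal{I}$ explicitly and exploit the fact that, in the rotated frame, $\rho_j$ is diagonal. Write $\phi(\bar{x}_j,a)$, $a\in\{0,1\}$, for the amplitudes of $\ket{\phi}=U^\dagger\ket{\psi}$. A rotation $U_j(\theta)=e^{-i\theta G_j}U_j$ acts on qubit $j$ of $\phi$ as a rotation generated by a Hermitian $2\times 2$ matrix $\tilde G$ (a unitary conjugate of $G_j$ or of $-G_j$, depending on the convention). In particular $\|\tilde G\|_2=1$, so $|\tilde G_{01}|\le 1$. Its first-order action is $\dot\phi(\bar{x}_j,a)=-i\sum_b \tilde G_{ab}\,\phi(\bar{x}_j,b)$.

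Differentiating $\mathcal{I}=\sum|\phi|^4$ gives $\dot{\mathcal{I}}=4\sum_{\bar{x}_j,a}|\phi(\bar{x}_j,a)|^2\,\mathrm{Re}\bigl(\phi^*(\bar{x}_j,a)\dot\phi(\bar{x}_j,a)\bigr)$. The diagonal entries $\tilde G_{aa}$ are real, so they contribute $\mathrm{Re}(-i\tilde G_{aa}|\phi|^2)=0$. Pairing the two off-diagonal contributions using $\tilde G_{10}=\tilde G_{01}^*$, I expect
\begin{equation*}
\frac{d\mathcal{I}}{d\theta}\Big|_{0}=4\sum_{\bar{x}_j}\bigl(|\phi(\bar{x}_j,0)|^2-|\phi(\bar{x}_j,1)|^2\bigr)\,\mathrm{Re}\bigl(-i\tilde G_{01}\,\phi^*(\bar{x}_j,0)\phi(\bar{x}_j,1)\bigr).
\end{equation*}

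The key structural input is that $U_j$ diagonalizes $\rho_j$. Hence $(\rho_j)_{01}$ in the rotated frame equals $\sum_{\bar{x}_j}\phi(\bar{x}_j,0)\phi^*(\bar{x}_j,1)=0$. This orthogonality lets me subtract any constant $c$ from the weight multiplying $\phi^*(\bar{x}_j,0)\phi(\bar{x}_j,1)$ without changing the sum. For the $|\phi(\bar{x}_j,0)|^2$ piece I choose $c$ equal to its mean over the $D=2^{N-1}$ values of $\bar{x}_j$. Cauchy--Schwarz then gives
\begin{equation*}
\Bigl|\sum_{\bar{x}_j}\bigl(|\phi(\bar{x}_j,0)|^2-c\bigr)\phi^*(\bar{x}_j,0)\phi(\bar{x}_j,1)\Bigr|\le\sqrt{D\,\Var_{\bar{x}_j}\bigl(|\phi(\bar{x}_j,0)|^2\bigr)}\,\sqrt{\sum_{\bar{x}_j}|\phi(\bar{x}_j,0)|^2|\phi(\bar{x}_j,1)|^2}.
\end{equation*}
Combined with $|\tilde G_{01}|\le1$ and the prefactor $4$, this reproduces the form of Eq.~\eqref{eq:Rj}.

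The main obstacle is the $|\phi(\bar{x}_j,1)|^2$ piece. The same centering argument bounds it by the analogous expression with $\Var(|\phi(\bar{x}_j,1)|^2)$, not by the variance of the $a=0$ amplitudes. I do not yet see how to absorb it into the stated bound. The first thing I would try is to use the labeling freedom: take $a=0$ to be the dominant eigenvector and look for an identity relating the two centered sums. If no such identity exists, the honest statement is either the bound with $\max_a\Var(|\phi(\bar{x}_j,a)|^2)$ and a factor $2$, or the sum of the two terms. I would check this carefully against the stated constant before committing to it.

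The remaining claims are routine. Since $\PR=1/\mathcal{I}$, we have $d\PR/d\theta=-\mathcal{I}^{-2}\,d\mathcal{I}/d\theta$, which gives $R^{(\PR)}_j\le\widetilde R_j/\mathcal{I}^2$. If qubit $j$ is unentangled, then in its RDM eigenbasis $\phi(\bar{x}_j,1)\equiv0$ up to relabeling, so the coherence factor, and hence both residuals, vanish. For continuity, let $\lambda_{\min}$ be the smaller eigenvalue of $\rho_j$, so $\sum_{\bar{x}_j}|\phi(\bar{x}_j,1)|^2=\lambda_{\min}$. Then
\begin{equation*}
\sum_{\bar{x}_j}|\phi(\bar{x}_j,0)|^2|\phi(\bar{x}_j,1)|^2\le\lambda_{\min}.
\end{equation*}
Moreover $\lambda_{\min}\to0$ as $S_j\to0$. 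The variance factor is at most $1$, and $\mathcal{I}\ge 2^{-N}$ stays bounded away from zero at fixed $N$. Therefore both residuals go to zero continuously.
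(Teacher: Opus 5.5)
The obstacle you flag in your third paragraph is real, and it cannot be overcome, because inequality \eqref{eq:Rj} is false as stated. No relabeling trick or hidden identity will absorb the $|\phi(\bar x_j,1)|^2$ piece.

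Your first-variation formula is the correct one. It says the IPR gradient vanishes for all generators iff $\sum_{\bar x_j}\bigl(|\phi(\bar x_j,0)|^2-|\phi(\bar x_j,1)|^2\bigr)\phi^*(\bar x_j,0)\phi(\bar x_j,1)=0$, i.e.\ iff $\Lambda_j$ is Hermitian. The paper's proof instead asserts in Step~1 that stationarity is equivalent to a single off-diagonal entry $(\Lambda_j)_{01}=0$. It then bounds only $\sum_{\bar x}|\phi(\bar x,0)|^2c_{\bar x}$. But $\mathrm{Re}\bigl(i\,\Tr(G_j\Lambda_j)\bigr)$ involves both off-diagonal entries of $\Lambda_j$, so that step silently discards exactly the term you could not control.

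Here is a counterexample satisfying all hypotheses. Take $N=4$ and write $y=(y_1,y_2,y_3)$ for the qubits other than $j$. Set $\phi(y,0)=\sqrt{w}\,(-1)^{y_1y_2+y_2y_3}$ for all eight $y$. Set $\phi(y,1)=2\epsilon,\ \epsilon,\ -\epsilon$ for $y=000,011,101$ and $0$ otherwise, with $8w+6\epsilon^2=1$ and $\epsilon$ small.
\begin{itemize}
\item All four single-qubit RDMs are diagonal. The graph-state signs cancel the $a=0$ coherences of $y_1,y_2,y_3$. No two strings in the $a=1$ support are at Hamming distance one. And $(\rho_j)_{01}=\sqrt{w}(2\epsilon-\epsilon-\epsilon)=0$.
\item Hence $U=I$ is admissible, with the dominant eigenvector of $\rho_j$ already on $a=0$. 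So fixing that labeling does not help.
\item Because $|\phi(y,0)|^2\equiv w$, the right-hand side of \eqref{eq:Rj} is zero.
\item Yet $G_j=\sigma_y$ gives $|d\mathcal I/d\theta|=4\bigl|\sum_y\phi(y,0)\phi(y,1)\bigl(\phi(y,1)^2-\phi(y,0)^2\bigr)\bigr|=24\sqrt{w}\,\epsilon^3>0$.
\end{itemize}
This example also refutes the appendix's claim that a uniform weight $w_{\bar x}$ forces exact stationarity.

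You should therefore commit to a corrected statement rather than to \eqref{eq:Rj}. The cleanest version centers the combined weight $u_{\bar x_j}=|\phi(\bar x_j,0)|^2-|\phi(\bar x_j,1)|^2$ directly, using $\sum_{\bar x_j}\phi^*(\bar x_j,0)\phi(\bar x_j,1)=0$. This gives
\[
\widetilde R_j\le4\sqrt{D\,\Var_{\bar x_j}(u)}\,\sqrt{\textstyle\sum_{\bar x_j}|\phi(\bar x_j,0)|^2|\phi(\bar x_j,1)|^2}.
\]
By $\sqrt{\Var(X-Y)}\le\sqrt{\Var X}+\sqrt{\Var Y}$, this implies both of the variants you proposed. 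Since $D\,\Var(u)\le\sum u^2\le1$ and the coherence factor is at most $\lambda_{\min}$, the rest of your argument goes through unchanged with this bound: the exact relation $R^{(\PR)}_j=\widetilde R_j/\mathcal I^2$, vanishing for an unentangled qubit, and continuity as $S_j\to0$ at fixed $N$ via $\mathcal I\ge2^{-N}$.
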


\begin{proof}
The proof proceeds in five steps.

\textit{Step~1 (Gradient expression).}
Under the parameterized rotation $U_j(\theta)=e^{-i\theta G_j}U_j$, the
derivative of the inverse PR at $\theta=0$ evaluates to
\begin{equation}
\left.\frac{d\,\mathcal{I}}{d\theta}\right|_{\theta=0}
= -4\,\mathrm{Re}\!\left(i\,\Tr(G_j \cdot \Lambda_j)\right),
\label{eq:grad_pr}
\end{equation}
where $\Lambda_j$ is the $2\times 2$ matrix with elements
$(\Lambda_j)_{ba} = \sum_{\bar{x}_j} |\phi(\bar{x}_j,a)|^2\,
\phi^*_{(\bar{x}_j,a)}\,\phi_{(\bar{x}_j,b)}$.
The IPR gradient vanishes for all generators $G_j$ if and only if the
off-diagonal element $(\Lambda_j)_{01} = 0$. Since
$d\PR/d\theta=-\mathcal{I}^{-2}d\mathcal{I}/d\theta$, the same stationarity
condition applies to the PR, with the pre-factor included in
$R^{(\PR)}_j$ above.

\textit{Step~2 (Covariance decomposition).}
Define the weight function $w_{\bar{x}} = |\phi_{(\bar{x},0)}|^2$ and the
coherence $c_{\bar{x}} = \phi^*_{(\bar{x},0)}\phi_{(\bar{x},1)}$.
The condition that $U_j$ diagonalizes $\rho_j$ implies
$\sum_{\bar{x}} c_{\bar{x}} = 0$, i.e., the coherence has zero mean.
Setting $\bar{w} = D^{-1}\sum_{\bar{x}} w_{\bar{x}}$, we can write
\begin{equation}
(\Lambda_j)_{01}
= \sum_{\bar{x}}(w_{\bar{x}} - \bar{w})\, c_{\bar{x}}.
\label{eq:cov}
\end{equation}
This is a covariance between the local amplitude concentration $w_{\bar{x}}$
and the off-diagonal coherence $c_{\bar{x}}$, revealing that the gradient
residual arises from correlations between amplitude structure and entanglement.

\textit{Step~3 (Cauchy-Schwarz bound).}
Applying the Cauchy-Schwarz inequality to Eq.~\eqref{eq:cov}:
\begin{equation}
|(\Lambda_j)_{01}|^2
\le D \cdot \Var(w)
\cdot \sum_{\bar{x}}|\phi_{(\bar{x},0)}|^2|\phi_{(\bar{x},1)}|^2.
\label{eq:cs_bound}
\end{equation}
Taking square roots and maximizing over generators with
$\|G_j\|_2=1$ yields the bound~\eqref{eq:Rj}.

\textit{Step~4 (Exact vanishing for product states).}
When qubit $j$ is unentangled with the rest, $\rho_j$ has eigenvalues $(1,0)$.
In the eigenbasis, $\phi_{(\bar{x},1)} = 0$ for all $\bar{x}$, so
$c_{\bar{x}} = 0$ identically and both residuals vanish exactly.

\textit{Step~5 (Continuity).}
As $S_j \to 0$, the eigenvalues of $\rho_j$ approach $(1,0)$ steadily. As $\sum_{\bar{x}}|\phi_{(\bar{x},0)}|^2|\phi_{(\bar{x},1)}|^2 \to 0$, the bound in Eq.~\eqref{eq:Rj} vanishes, as does $\widetilde{R}_j$. For fixed finite $N$, the prefactor $\mathcal{I}^{-2}$ is finite, implying that $R^{(\PR)}_j$ likewise disappears. This shows that stationarity is not unique to product states, but applies to all weakly entangled states.
\end{proof}

\begin{remark}[Stationarity vs.\ minimality]
Theorem~\ref{thm:stationarity} states that the RDM eigenbasis is a stationary point of the participation-ratio objective. Whether it is a
local or global minimum in full generality remains open. Empirical evidence
supports its usefulness in the tested regimes: the do-no-harm check
(Sec.~\ref{subsec:doNH}) reverts exactly 0\% of proposed rotations for
brickwork and Haar circuits, indicating that every proposed single-qubit RDM
rotation reduces PR in those benchmarks.
\end{remark}

 \begin{remark}[The Scrambling Regime and Bound Looseness] The bound in Eq.~\eqref{eq:Rj} vanishes exactly in the product-state limit ($S_j \to 0$), establishing the RDM eigenbasis as a strict stationary point for unentangled states. However, for highly scrambling circuits (e.g., deep Haar-random or volume-law regimes), the single-qubit RDMs become maximally mixed ($\lambda_0 \approx \lambda_1 \approx 0.5$). In this limit, the entanglement coherence factor in Eq.~\eqref{eq:Rj} remains $\mathcal{O}(1)$, and the mathematical bound on the gradient residual is loose.
\end{remark}

\begin{remark}
 [Heuristic Justification for Entangled States] Because the bound does not strictly vanish for volume-law states, Theorem~\ref{thm:stationarity} should not be interpreted as a blanket near-optimality guarantee in the highly entangled regime. Instead, it establishes that local basis rotation removes avoidable, low-entanglement misalignment. For strongly entangled inputs, the choice of the RDM eigenbasis transitions from a mathematically tight stationary point into a physically motivated heuristic. As demonstrated by the empirical do-no-harm trigger rates (see Sec.~\ref{subsec:doNH}), this heuristic successfully reduces the PR even in heavily scrambled regimes, though it cannot eliminate the fundamental exponential scaling of volume-law support.
 \end{remark}

Thus, Theorem~\ref{thm:stationarity} places BASS in the sparse-simulation analog
of L\"owdin's natural-orbital idea~\cite{lowdin1955}. The analogy is
structural: the single-qubit RDM plays the role of the one-body density
matrix, and the eigenbasis rotation plays the role of a local natural-orbital
transformation. The theorem proves stationarity, while the do-no-harm guard
and benchmarks establish monotonic improvement in the tested regimes. This
cross-disciplinary connection links orbital-optimization ideas in quantum
chemistry to online quantum-circuit simulation and suggests that methods
developed for natural orbitals, complete active spaces, and adaptive
CI~\cite{roos1980,tubman2016,tubman2020,ratini2024} may be transferable to the
circuit simulation setting.

\subsection{Summary of proof status}
Table~\ref{tab:proofs} collects the main claims and what is proved versus
conjectured. Rows~1-2 fix the support-selection problem: top-$k$ maximizes
single-step retained probability $p_\mathcal{S}$ (constructive proof), and no
diversity rule beats it at the same~$k$.
\begin{table}[!hbt]
\caption{\label{tab:proofs}Classification of theoretical claims and their
proof status. All proved results are constructive with explicit bounds.}
\begin{ruledtabular}
\begin{tabular}{|l|l|l|}
Claim & Status & Scope \\
\hline
Top-$k$ maximizes $p_\mathcal{S}$ & Proved & Exact \\
Top-$k$ is greedy-optimal & Proved & Operational \\
Top-$k$ is globally optimal & Conjectured & Empirical \\
Diversity $\Rightarrow\!\downarrow p$ at fixed $k$ & Proved & Fixed-$k$ \\
RDM basis stationary (product) & Proved & Exact \\
RDM basis stationary (entangled) & Proved & Bound \\
RDM basis is a local minimum & Conjectured & Empirical \\
RDM basis is a global minimum & Open & Open \\
\end{tabular}
\end{ruledtabular}
\end{table}
Global optimality of top-$k$ beyond a single greedy step remains conjectural.
Row~4 records the diversity tradeoff at fixed~$k$. Rows~5-8 concern the RDM
eigenbasis: exact stationarity for product inputs, a bounded residual for
entangled states, and open questions about global versus local minima of the
participation-ratio objective.

\section{\label{sec:schmidt_weighted}Schmidt-Weighted Truncation Is Suboptimal}

Lemma~\ref{lem:diversity} predicts that any truncation rule operating at fixed
budget~$k$ that explicitly promotes diversity across Schmidt sectors must sacrifice
retained probability relative to the top-$k$ rule and therefore achieve lower
fidelity.  To test this prediction empirically, we compare top-$k$ against two
concrete diversity-promoting heuristics across four circuit families and two
depths, using 100~independent paired trials per configuration.

\subsection{Truncation heuristics}

Let the current sparse state at layer~$\ell$ have $m > k$ stored basis states
$\{|x_i^{(\ell)}\rangle\}_{i=1}^{m}$ with amplitudes $\{\beta_i\}$.  Throughout
our simulator, bit~$j$ of the integer~$x_i$ encodes qubit~$j$ (little-endian,
qubit~0 at the least-significant position).  For a bipartition at qubit~$c$,
the \emph{Schmidt sector} of $|x_i\rangle$ is defined as
\begin{equation}
  \sigma_c(x_i) \;=\; \bigl\lfloor x_i / 2^{c} \bigr\rfloor,
  \label{eq:sector}
\end{equation}
the right-partition configuration on qubits $c,\ldots,N{-}1$.  The
\emph{sector count}
\begin{equation}
  C_c(x_i) \;=\; \bigl|\bigl\{j : \sigma_c(x_j) = \sigma_c(x_i)\bigr\}\bigr|
  \label{eq:sector_count}
\end{equation}
records how many currently stored states share the same right-partition value at
cut~$c$.

\textit{Schmidt-1cut.}  Using the central bipartition $c = N/2$, each basis
state is assigned the score
\begin{equation}
  s_i \;=\; \frac{|\beta_i|^2}{C_{N/2}(x_i)}.
  \label{eq:s1cut}
\end{equation}
States in over-populated sectors are penalised; states in under-populated sectors
are relatively promoted.  The $k$ states with the largest scores are retained.

\textit{Schmidt-3cut.}  The score is extended to three bipartitions simultaneously
at $c \in \{N/4,\, N/2,\, 3N/4\}$:
\begin{equation}
  s_i \;=\; \frac{|\beta_i|^2}
             {\bigl(C_{N/4}(x_i)\cdot C_{N/2}(x_i)\cdot C_{3N/4}(x_i)\bigr)^{1/3}},
  \label{eq:s3cut}
\end{equation}
where the denominator is the geometric mean of the three sector counts.  This
imposes a multi-scale diversity penalty simultaneously at coarse ($c=N/4$),
central ($c=N/2$), and fine ($c=3N/4$) bipartitions.

After selection, amplitudes are renormalised to unit norm and the cumulative
retained-probability factor~$\gamma$ is updated identically to the top-$k$
truncation.  We also include a random-$k$ baseline, which selects $k$ states
uniformly at random without replacement, as a lower bound on the achievable
fidelity at the same budget.

\subsection{Experimental setup}

We work at system size $N=16$ and sparse budget $k=8192$ and run 100~independent
trials per configuration.  Circuit families tested are Haar-random circuits at
depths $L=3$ and $L=5$, and 1D brickwork circuits with Haar-random two-qubit
gates at depths $L=3$ and $L=5$.  These two families span the main range of
entanglement structures appearing throughout this work: Haar circuits maximise
entanglement production per layer, while brickwork circuits build entanglement
more gradually through a fixed spatial connectivity.

Each trial draws one circuit instance, computes the exact statevector by full
state-vector simulation, and then applies all four truncation methods to the
\emph{same} circuit instance evaluated against the same exact reference.  This
paired design eliminates circuit-instance variance and isolates the effect of the
truncation rule alone.

\subsection{Statistical methodology}

Fidelities across circuit instances are approximately log-normally distributed
and span several orders of magnitude.  We report the geometric mean of per-trial
fidelity ratios
\begin{equation}
  r_i \;=\; F_{\mathrm{method},i} \,/\, F_{\mathrm{top\text{-}}k,i}
  \label{eq:ratio}
\end{equation}
as the central estimate, with 95\% bootstrap confidence intervals
(4\,000~resamples).  We additionally report win rates
$W = |\{i : F_{\mathrm{top\text{-}}k,i} > F_{\mathrm{method},i}\}|/100$
with Wilson binomial confidence intervals, and one-sided Wilcoxon signed-rank
$p$-values for the paired hypothesis $H_1$: top-$k$ stochastically dominates
the competing method.

\subsection{Results}

Tables~\ref{tab:schmidt_1cut} and~\ref{tab:schmidt_3cut} summarize fidelity statistics and ratio estimates for
all configurations.  Random-$k$ is omitted from the ratio columns: its geometric
mean ratio is numerically zero across all configurations (representative
$F_{\mathrm{random\text{-}}k} \approx 10^{-5}$ against
$F_{\mathrm{top\text{-}}k} \approx 10^{-1}$, win rate $100/100$,
$p < 10^{-18}$).  Its failure confirms that amplitude magnitude carries the
dominant signal; uniform sector diversity alone provides no viable basis for
truncation.

\begin{table*}[!hbt]
\centering
\caption{%
  \textbf{Schmidt-1cut vs.\ top-$k$: fidelity ratios over 100 paired trials
  ($N=16$, $k=8192$).}
  Schmidt-1cut scores each stored basis state $|x_i\rangle$ as
  $s_i = |\beta_i|^2 / C_{N/2}(x_i)$, where $C_{N/2}(x_i)$ is the number of
  stored states sharing the same Schmidt sector at the central bipartition
  $c = N/2$, and retains the $k$ highest-scoring states.
  \emph{Top-$k$ med.\ [IQR]}: median fidelity of the top-$k$ baseline with
  interquartile range across 100~circuit instances.
  $\bar{r}$~[95\%~CI]: geometric mean of per-trial fidelity ratios
  $r_i = F_{\text{Schmidt-1cut},i} / F_{\text{top-}k,i}$; values below~1
  indicate top-$k$ superiority; bootstrap CI uses 4\,000~resamples.
  $W$~[95\%~CI]: win rate, the fraction of the 100~paired trials in which
  top-$k$ strictly exceeds Schmidt-1cut, with Wilson binomial confidence
  interval.
  $p$: one-sided Wilcoxon signed-rank $p$-value for the hypothesis that
  top-$k$ stochastically dominates Schmidt-1cut ($H_1$: $F_{\text{top-}k} >
  F_{\text{Schmidt-1cut}}$) on paired per-trial data.
  Top-$k$ outperforms Schmidt-1cut in all configurations
  (all $\bar{r} < 1$, all $p < 10^{-16}$).}
\label{tab:schmidt_1cut}
\begin{tabular}{lccc}
\toprule

Circuit
& Top-$k$ med.\ [IQR]
& $\bar{r}$ [95\% CI]
& $W$ [95\% CI] / $p$ \\

\midrule

Haar $L=3$
& $0.299\;[0.242,\,0.376]$
& $0.900\;[0.889,\,0.910]$
& $100/100\;[96\%,100\%] \;/\; 1.9{\times}10^{-18}$ \\

Haar $L=5$
& $0.055\;[0.038,\,0.074]$
& $0.767\;[0.746,\,0.788]$
& $100/100\;[96\%,100\%] \;/\; 1.9{\times}10^{-18}$ \\

\midrule

Brickwork $L=3$
& $0.357\;[0.300,\,0.411]$
& $0.930\;[0.918,\,0.941]$
& $92/100\;[85\%,96\%] \;/\; 2.7{\times}10^{-17}$ \\

Brickwork $L=5$
& $0.097\;[0.068,\,0.131]$
& $0.799\;[0.774,\,0.825]$
& $93/100\;[86\%,97\%] \;/\; 5.8{\times}10^{-17}$ \\

\bottomrule
\end{tabular}
\end{table*}

\begin{table*}[!hbt]
\centering
\caption{%
  \textbf{Schmidt-3cut vs.\ top-$k$: fidelity ratios over 100 paired trials
  ($N=16$, $k=8192$).}
  Schmidt-3cut scores each stored state $|x_i\rangle$ as
  $s_i = |\beta_i|^2 /
  \bigl(C_{N/4}(x_i)\cdot C_{N/2}(x_i)\cdot C_{3N/4}(x_i)\bigr)^{1/3}$,
  penalising states that are over-represented simultaneously at three
  bipartitions $c \in \{N/4,\,N/2,\,3N/4\}$; the denominator is the
  geometric mean of the three sector counts.
  Column definitions are identical to Table~\ref{tab:schmidt_1cut}: median
  top-$k$ fidelity [IQR]; geometric mean ratio $\bar{r}$ [95\% bootstrap CI];
  win rate $W$ [Wilson 95\% CI] and one-sided Wilcoxon $p$-value.
  Top-$k$ outperforms Schmidt-3cut in all configurations
  (all $\bar{r} < 1$, all $p < 10^{-14}$).
  Comparing across the two tables, Schmidt-3cut consistently achieves a
  higher ratio $\bar{r}$ than Schmidt-1cut, indicating a smaller fidelity
  penalty; this counterintuitive ordering is discussed in the text.}
\label{tab:schmidt_3cut}
\begin{tabular}{lccc}
\toprule

Circuit
& Top-$k$ med.\ [IQR]
& $\bar{r}$ [95\% CI]
& $W$ [95\% CI] / $p$ \\

\midrule

Haar $L=3$
& $0.299\;[0.242,\,0.376]$
& $0.926\;[0.917,\,0.934]$
& $99/100\;[95\%,100\%] \;/\; 2.4{\times}10^{-18}$ \\

Haar $L=5$
& $0.055\;[0.038,\,0.074]$
& $0.799\;[0.781,\,0.816]$
& $100/100\;[96\%,100\%] \;/\; 1.9{\times}10^{-18}$ \\

\midrule

Brickwork $L=3$
& $0.357\;[0.300,\,0.411]$
& $0.951\;[0.940,\,0.961]$
& $84/100\;[76\%,90\%] \;/\; 4.4{\times}10^{-15}$ \\

Brickwork $L=5$
& $0.097\;[0.068,\,0.131]$
& $0.846\;[0.823,\,0.870]$
& $89/100\;[81\%,94\%] \;/\; 1.7{\times}10^{-15}$ \\

\bottomrule
\end{tabular}
\end{table*}

Across all eight configurations, top-$k$ achieves a strictly higher geometric-mean
fidelity than both Schmidt variants.  Wilcoxon $p$-values range from
$4.4\times10^{-15}$ to $1.9\times10^{-18}$, uniformly rejecting the null after a
Holm correction for the eight simultaneous tests.  These results constitute strong
statistical support for Lemma~\ref{lem:diversity}.

\subsection{Depth dependence and the compounding mechanism}

The fidelity penalty grows monotonically with circuit depth for both circuit
families and both Schmidt variants, directly confirming the compounding mechanism
implicit in the theorem.  For Haar circuits, Schmidt-1cut retains $90.0\%$
[CI$_{95}$: $88.9\%$, $91.0\%$] of top-$k$ fidelity at $L=3$ and only $76.7\%$
[CI$_{95}$: $74.6\%$, $78.8\%$] at $L=5$.  The corresponding brickwork figures
are $93.0\%$ [CI$_{95}$: $91.8\%$, $94.1\%$] and $79.9\%$
[CI$_{95}$: $77.4\%$, $82.5\%$].

This depth dependence is explained as follows.  At each truncation event triggered
by a gate application, the Schmidt rule retains a multiplicative fraction
$r_t \le 1$ of the probability that top-$k$ would have kept.  Over $T$~truncation
events the cumulative deficit is $\prod_{t=1}^{T} r_t$.  Because $r_t < 1$ at
every step where truncation fires, deeper circuits—which trigger more truncation
events and allow less per-step recovery—compound the deficit more severely.
Gate-by-gate tracking of~$\gamma^2$ confirms that Schmidt scores remain
persistently and monotonically below top-$k$ at every layer rather than
recovering through later steps, consistent with the single-step dominance
established by the theorem.

\subsection{Relative performance of the two Schmidt variants}

Schmidt-3cut consistently \emph{outperforms} Schmidt-1cut across all tested
configurations—a result that warrants explanation since enforcing diversity at
three bipartitions simultaneously might intuitively be expected to impose a larger
cost than enforcing it at one.  The explanation lies in the scoring geometry.  A
single central cut partitions all $m$~stored states into groups sharing the same
$\sigma_{N/2}$~value and applies the same penalty to every state in each group.
A state that is over-represented at this particular bipartition is severely
down-weighted regardless of how rare it may be at other bipartitions.  The
geometric mean in Eq.~\eqref{eq:s3cut} provides a more balanced penalty: a state
penalised at one cut is partially compensated if it is rare at the other two, so
extreme down-weighting events are suppressed.  As a result, Schmidt-3cut
sacrifices less probability per truncation step than Schmidt-1cut and accumulates
a smaller cumulative deficit over the circuit.

This ordering is sensitive to the combination function.  Replacing the geometric
mean in Eq.~\eqref{eq:s3cut} by the raw product $C_{N/4}C_{N/2}C_{3N/4}$ makes
the multi-cut penalty considerably more aggressive—each sector count is raised to
the first rather than the one-third power—and would likely reverse the ordering
relative to Schmidt-1cut.  The choice of combination function is therefore a
material parameter of the heuristic, not an irrelevant implementation detail, and
must be specified precisely in any reproduction of this comparison.

\subsection{Shallow-depth exception at the Brickwork \texorpdfstring{$L=3$}{L=3} boundary}

The win rate for Brickwork $L=3$ Schmidt-3cut is $84/100$ [Wilson CI: $76\%$,
$90\%$], meaning that in roughly one in six trials the multi-cut heuristic
achieves higher final fidelity than top-$k$ on the same circuit instance.
Lemma~\ref{lem:diversity} is a pointwise statement about retained probability
at each individual truncation step; it does not rule out a net multi-step benefit
if the diversified basis better spans the subspace accessed by subsequent gates,
partially recovering the per-step deficit through a more favourable amplitude
distribution in later layers.  At $L=3$ the compounding cost has not yet
accumulated enough to dominate this secondary effect in every trial.  At $L=5$
the win rate recovers to $89/100$ as accumulated depth suppresses the stochastic
benefit, confirming that the compounding mechanism ultimately takes over.

This caveat has a practical implication: Lemma~\ref{lem:diversity} provides a
rigorous per-step guarantee, but its translation to a strict fidelity ordering at
the level of a full multi-step circuit requires that the cumulative compounding
outweigh any multi-step interaction effects.  For shallow circuits ($L=3$
brickwork at $k=8192$) this condition is not met in all instances.  For deeper
circuits and smaller~$k$ relative to the state's PR
$\mathrm{PR}_Z$—the regime of genuine computational interest—the theorem's
conclusion is robust.

\subsection{Role of the sparse budget}

The magnitude of the fidelity penalty depends critically on the ratio $k /
\mathrm{PR}_Z$ of the sparse budget to the state's PR.  At
$k=8192$ and $N=16$, the budget is large enough that fewer truncation events
fire per gate, and the per-step deficit has fewer opportunities to compound.  At
smaller~$k$ (or equivalently, at larger $N$ with $k$ held fixed) the same
circuits would trigger more frequent truncations, and the per-step ratio $r_t$
would be applied more often, amplifying the cumulative deficit.  The fidelity
penalties of $7\%$--$23\%$ reported here therefore represent a lower bound on
the shortfall that Schmidt truncation would incur in the regime of greatest
computational interest: large~$N$, fixed~$k$ well below $\mathrm{PR}_Z$.

In all eight tested configurations, top-$k$ achieves strictly and statistically
significantly higher fidelity than both Schmidt-weighted variants
(all $\bar{r}<1$, all Wilcoxon $p < 10^{-14}$, Holm-corrected), providing strong
empirical support for Lemma~\ref{lem:diversity}.  Four nuances qualify but do
not overturn this conclusion.  First, the fidelity penalty compounds with circuit
depth through the multiplicative accumulation of per-step probability deficits.
Second, the multi-cut geometric-mean variant (Schmidt-3cut) incurs a smaller
penalty than the single-cut variant (Schmidt-1cut) because its scoring function is
less prone to extreme down-weighting at any one bipartition; this ordering is
sensitive to the choice of combination function.  Third, at shallow depth and
large~$k$, stochastic multi-step interactions can occasionally yield a net benefit
for Schmidt-3cut relative to top-$k$, a regime where the single-step guarantee
does not translate to a strict circuit-level ordering in every trial.  Fourth, the
absolute magnitude of the penalty depends on the ratio $k/\mathrm{PR}_Z$: the
penalties observed here are lower bounds on those in the computationally
challenging regime of large~$N$ and small~$k$.  Within these bounds,
top-$k$ remains the optimal basis-preserving truncation rule at fixed budget.

\section{\label{sec:algorithm}The BASS Algorithm}
We describe the BASS algorithm and its implementation, from the rotated-frame picture through implementation details to its complexity.

\subsection{Core idea: rotating into the natural basis}
\label{subsec:core_idea}
BASS maintains per-qubit unitaries $U_j \in U(2)$, $j=0,\ldots,N-1$, and
stores the quantum state in the rotated frame
\begin{equation}
\ket{\tilde\psi}
= \Bigl(\bigotimes_{j=0}^{N-1} U_j^\dagger\Bigr) \ket{\psi}.
\label{eq:rotated_frame}
\end{equation}
Every gate $G$ acting on qubits $(q_i,q_j), i \neq j$ is conjugated into the rotated
frame before application:
\begin{equation}
\tilde{G} = (U_{q_i} \otimes U_{q_j})^\dagger \, G \, (U_{q_i} \otimes U_{q_j}), ~~ \text{with} ~~i \neq j.
\label{eq:conj}
\end{equation}
This is because the state is described in a \textit{rotated frame} (a different local basis) on each qubit. If we physically apply gate $G$ in the lab frame but are mathematically working in a rotated frame defined by $U_{q_i}$ and $U_{q_j}$, then the correct representation of that gate in the rotated frame is its conjugation by the frame-change unitaries, Eq.~\ref{eq:conj}. Conceptually, \(U_{q_i}\) maps from the rotated frame to the lab frame. Applying \(G\) occurs in the lab frame. Then, \(U_{q_i}^\dagger\) maps back to the rotated frame. So the overall effect of ``the same physical gate $G$'' when viewed in the rotated basis is exactly this conjugated operator \(\tilde{G}\). This ensures that all state vectors and gates are consistently described in the same (rotated) basis.

If $U_j$ is chosen so that its columns are the eigenvectors of the single-qubit
RDM $\rho_j = \Tr_{\ne j} \ket{\tilde\psi}\!\langle\tilde\psi|$,
then qubit $j$ is in its Schmidt frame: probability concentrates on the
$\ket{0}$ branch, making $\ket{\tilde\psi}$ sparser in the computational
basis and thereby better suited for top-$k$ truncation.

We now give a fuller account of Figure~\ref{fig:bass_schematic}, including the related mathematics, beyond the overview in Section~\ref{sec:background}. The left panel shows the state in the computational $Z$ basis, where probability mass is spread over a vast support ($\PR_Z \gg k$). A fixed-budget sparse simulator (the fixed-basis baseline of Sec.~\ref{subsec:fixed_basis}) keeps only the mass above the dashed top-$k$ cut; the rest is discarded. The center panel shows the BASS step: for each qubit~$j$, a unitary $U_j$ is built from the eigenvectors of the single-qubit RDM $\rho_j$. The dominant eigenvector $\vec{v}_j^{\,\max}$ (Bloch-sphere inset) reconcentrates probability locally. The product unitary $U=\bigotimes_j U_j$ is a strictly local rotation with $\mathcal{O}(N)$ parameters that preserves the inter-qubit entanglement structure. The right panel depicts the rotated frame, where the top-$k$ truncation now captures almost all of the mass ($\PR\sim O(1)$, $1-\varepsilon$ retained). The memory footprint and gate-application cost remain unchanged as BASS reorganizes \emph{where} the amplitudes live in basis-state index space, but not \emph{how many} are stored.

\subsection{Efficient single-qubit RDM computation}
\label{subsec:rdm}

For the sparse state
$\ket{\tilde\psi} = \sum_{i=1}^{k} \alpha_i \ket{x_i}$, the single-qubit
RDM for qubit $j$ has matrix elements
\begin{align}
\rho_{bb}^{(j)} &= \sum_{i : \, b_j(x_i) = b} |\alpha_i|^2~~ \forall b \in \{0, 1\},
\label{eq:rdm_diag} \\
\rho_{01}^{(j)} &= \sum_{\substack{(i,i'): x_i = x_{i'} \oplus 2^j}}
\alpha_i \overline{\alpha_{i'}},
\label{eq:rdm_offdiag}
\end{align}
$b_j(x) = (x \gg j) \,\&\, 1$ retrieves the $j$-th bit. The diagonal elements \eqref{eq:rdm_diag} require only one pass and cost $\mathcal{O}(k)$. The off-diagonal elements \eqref{eq:rdm_offdiag} require recognizing pairs of basis states that differ only by bit $j$.

BASS uses a single hash table to compute all $N$ off-diagonal elements simultaneously. For each stored state $\ket{x_i}$ with amplitude $\alpha_i$, the pair $(x_i, \alpha_i)$ is keyed by $x_i$ and bit $j$ is masked to ~0.
Building the table costs $\mathcal{O}(k)$. For each qubit $j$ and stored state with $b_j(x_i) = 0$, the partner $x_i \oplus 2^j$ is found and the product $\alpha_i \overline{\alpha_{i'}}$ is calculated. The overall lookup cost for all $N$ qubits is $\mathcal{O}(Nk)$, instead of the naïve $\mathcal{O}(Nk\log k)$.

\textit{argsort-per-qubit} approach.  The dominant eigenvector of
$\rho_j$ is computed analytically (see Appendix \ref{sec:dominant_eigen}):
\begin{align}
\mathbf{v}_\text{max} &=
\frac{1}{\sqrt{|b|^2 + (\tau+\Delta)^2}}
\begin{pmatrix} b \\ \tau+\Delta \end{pmatrix},
\label{eq:dominant_eigen}
\end{align}
where
\begin{align}
\tau = \frac{d-a}{2}, ~~
\Delta = \sqrt{\tau^2 + |b|^2}.
\end{align}
This closed-form formula uses only a fixed number of operations (additions, multiplications, and a single square root), ensuring that its evaluation cost does not increase with problem size. Because $\rho_j$ is Hermitian, $\mathbf{v}_\text{max}$ will always exist and can be chosen with real components when $b$ is real. In practical applications, this method avoids the requirement for iterative eigensolvers or general-purpose linear algebra routines for each qubit, avoiding external eigenvalue library calls and ensuring constant-time per-qubit overhead.

\subsection{Two-qubit RDM and brick-wall sweeps}
\label{subsec:2qrdm}
For a quantum state written in the computational basis as $|\psi\rangle = \sum_i \alpha_i \, |x_i\rangle$, where each \(|x_i\rangle\) is a bit string of length \(N\) (one bit per qubit) and \(\alpha_i\) are the complex amplitudes, we want the two-qubit RDM \(\rho_{q_1 q_2}\) for a neighboring pair \((q_1, q_2)\). This is a \(4 \times 4\) matrix, because two qubits have 4 basis states $|00\rangle, |01\rangle, |10\rangle, |11\rangle$. We introduce two indices, \(s\) and \(r\), to separate the two qubits of interest from the rest. The block index \(s\) encodes the two bits at positions \(q_1\) and \(q_2\):
\begin{equation}
s = b_{q_1}(x) \cdot 2 + b_{q_2}(x) \in \{0,1,2,3\},
\end{equation}
where \(b_{q}(x)\) means ``the bit of the string \(x\) at position \(q\)''. Concretely, if \(b_{q_1}=0, b_{q_2}=0\) then \(s=0\) (state \(|00\rangle\)); if \(b_{q_1}=0, b_{q_2}=1\) then \(s=1\) (state \(|01\rangle\)); if \(b_{q_1}=1, b_{q_2}=0\) then \(s=2\) (state \(|10\rangle\)); and if \(b_{q_1}=1, b_{q_2}=1\) then \(s=3\) (state \(|11\rangle\)). On the other hand, the rest index \(r\) is the bit string \(x\) with those two bits removed (or ``masked out''). It labels ``the configuration of all the other qubits except \(q_1\) and \(q_2\)''. So every full configuration \(x\) is decomposed into
$
\text{full state } x \longleftrightarrow (\text{rest } r, \text{ block } s)$. The two-qubit RDM is defined as
\begin{align}\notag
\label{eq:twoqubitRDM}
\left[\rho_{q_1 q_2}\right]_{s,s'}  =
\sum_{(i,i'):\, r(x_i)=r(x_{i'})}
&\alpha_i\,\overline{\alpha_{i'}}\;
\mathbf{1}[s(x_i)=s]\; \\ \times &
\mathbf{1}[s(x_{i'})=s'],
\end{align}
where the sum is over pairs of basis states \((x_i, x_{i'})\) such that their rest indices are equal: $r(x_i) = r(x_{i'})$. This enforces that all qubits except \(q_1\) and \(q_2\) are in the same configuration for \(x_i\) and \(x_{i'}\). This is exactly what the partial trace over the rest of the system does. \(\alpha_i\overline{\alpha_{i'}}\) is the contribution to the density matrix element from the basis pair \(|x_i\rangle \langle x_{i'}|\). \(\mathbf{1}[s(x_i)=s]\) is an indicator function: it is 1 if the two bits \((q_1, q_2)\) of \(x_i\) match the value encoded by \(s\), and 0 otherwise. \(\mathbf{1}[s(x_{i'})=s']\) similarly enforces that \(x_{i'}\) has the two-qubit configuration labeled by \(s'\). So an element \([\rho_{q_1 q_2}]_{s,s'}\) is obtained by: (1) Taking all pairs of states \((x_i, x_{i'})\) whose other qubits (not \(q_1,q_2\)) are the same; (2) Among these, keeping only the pairs where the two qubits \((q_1,q_2)\) are in configuration \(s\) for \(x_i\), and in configuration \(s'\) for \(x_{i'}\); and (3) Summing their \(\alpha_i \overline{\alpha_{i'}}\). This is exactly the formula for the RDM obtained by tracing out all qubits except \(q_1\) and \(q_2\). By grouping stored states according to their rest index using \textit{argsort} and then iterating over each group, we obtain a cost of $\mathcal{O}(\log k)$ per pair and $\mathcal O(16)$ for populating the matrix. Processing the complete collection of $\lfloor N/2\rfloor$ even-offset pairs together with $\lfloor(N-1)/2\rfloor$ odd-offset pairs in a single optimization sweep thus results in an overall complexity of $\mathcal O(Nk\log k)$.

A fixed-size $4 \times 4$ Hermitian eigensolver is used to derive the rotation $V$, which diagonalizes the two-qubit reduced density matrix, as $\rho_{q_1q_2} = V \Lambda V^\dagger$. Because the matrix size is fixed, the computational cost of this diagonalization is $\mathcal{0}(64)$ floating- point operations. This cost does not scale with the number of qubits $N$ or the bond/truncation dimension $k$. The BASS algorithm then applies a three-step \textit{rotate-truncate-undo} protocol to the block state:
(a) Apply the unitary $V^\dagger$ to the two-qubit block, mapping it into the eigenbasis of the $\rho_{q1 q2}$. In this rotated basis, the block can expand to a maximum of $4k$ entries, reflecting the two-qubit system's tensor-product structure.
(b) Minimize this expanded representation to dimension $k$ by truncating the least significant components based on the chosen truncation metric (e.g., smallest weights or amplitudes).
(c) Apply the unitary $V$ to the truncated state, returning it to the algorithm's global accumulated single-qubit basis. This proposed rotation is accepted only if it reduces the PR, which acts as a \textit{do-no-harm safeguard}: if the PR does not decrease, the update is rejected, and the pre-rotation state is precisely restored, ensuring that the procedure never degrades the state according to this metric.

The two-qubit rotation is strictly \textit{transitory}, unlike single-qubit basis rotations. The basis-accumulator that tracks the evolving single-qubit bases does not need to be updated. Any action of $V$ is immediately undone by $V^\dagger$ after truncation. As a result, the overall effect of this protocol is a local reshaping and truncation of the state, which can improve compressibility without permanently altering the underlying single-qubit basis representation.

\subsection{Multi-pass adaptive-basis sweeps}
\label{subsec:coord_descent}

Basis optimization is used regularly throughout the simulation to keep the quantum state representation as compact as possible. We perform an optimization routine after every \(n_{\mathrm{opt}} = 5\) truncation-active gates, or every fifth gate that might alter the sparsity pattern by truncation. Each call to the basis-optimization routine performs up to three full sweeps across all qubits. In a single optimization call, we do the following:

\textit{1. Hash table construction}. Starting with the current sparse representation of the state, we create a hash table to encode its support. This lets us look up partner basis states (those related by single-qubit basis changes) in constant time, \(\mathcal{O}(1)\), making subsequent operations more efficient.

\textit{2. Computation of RDMs}. From the present sparse state, we compute the single-qubit RDM for each of the \(N\) qubits. This results in \(N\) density matrices \(\{\rho_j\}_{j=1}^N\), one for each qubit \(j\), which capture its local state in the context of correlations with the rest of the system.

\textit{3. Tentative local basis rotations}. For each qubit \(j\): (a) We diagonalize its RDM \(\rho_j\) and obtain a unitary \(V_j\) that transforms \(\rho_j\) into its eigenbasis; (b) We then \textit{tentatively} apply the corresponding single-qubit basis rotation \(V_j^\dagger\) to the global sparse state; (c) After this tentative rotation, we truncate the rotated state according to our sparsity criterion and renormalize it; (d) We compute the PR of the new state and compare it with that before the rotation. The rotation \(V_j^\dagger\) is accepted \textit{only if} the PR has decreased, indicating a more compact (less delocalized) representation. If the PR does not decrease, we discard the update and revert the rotation, restoring the previous state.

\textit{4. Early termination criterion}. After completing a full sweep over all qubits (i.e., after attempting this procedure for every \(j = 1, \dots, N\)), we check whether any of the accepted rotations actually reduced the PR. If no accepted rotation led to an improvement during this sweep, we terminate the optimization call early and do not perform further sweeps.

Within a single sweep, the set of single-qubit RDMs \(\{\rho_j\}\) is computed only once given the current sparse state at the beginning of the sweep. The RDMs are then reused throughout the sweep. When we accept and apply a basis rotation for a qubit, we do \textit{not} immediately update all RDM. Instead, all tentative rotations in that sweep are evaluated against the same fixed RDM snapshot. This eliminates the considerable expense of recalculating lower density matrices after each every qubit update. Although this introduces a minor approximation, because subsequent attempted rotations in the same sweep are evaluated with somewhat outdated RDMs, it works well in practice for the circuit families

\subsection{Do-no-harm check}
\label{subsec:doNH}
After a tentative basis rotation is applied to qubit~$j$, we first compute the resulting state vector in the new local basis. This rotated state is then truncated by retaining only the $k$ amplitudes with the largest magnitudes (i.e., forming a top-$k$ approximation) and discarding all other components. The truncated state is subsequently renormalized so that its total probability remains unity. With this approximate state in hand, we recompute the PR, which we use as a proxy for the state's concentration or delocalization in the chosen basis. We accept the proposed rotation on qubit~$j$ only if this PR strictly decreases, indicating that the state has become more compressible (more concentrated on fewer basis states) under the new basis. If the PR does not decrease, we reject the update: both the rotated state and the corresponding basis change are discarded, and we revert to the previous state and basis configuration.

% \vspace{-1.0em}
\begin{table}[!hbt]
\caption{\label{tab:doNH}
\textbf{Do-no-harm revert statistics (20 trials, $N=16$, $k=2048$)}. Reported
rates are empirical properties of the benchmarked circuit families, not
general guarantees of the optimization procedure.}
\begin{ruledtabular}
\resizebox{0.97\linewidth}{!}{%
\begin{tabular}{lcc}
Circuit family & Rotations attempted & Reverts (\%) \\
\hline
Brickwork $L=5$ & ${\sim}16{,}000$ & 0 \;\; (0\%) \\
Haar $L=3$      & ${\sim}9{,}600$  & 0 \;\; (0\%) \\
QAOA $p=3$      & ${<}19{,}200$ & ${<}100$ \; ($<$0.5\%) \\
\end{tabular}
}
\end{ruledtabular}
\end{table}
Table~\ref{tab:doNH} reports how often such reverts occur. On the scrambling circuits where BASS gains the most, the guard fires rarely: most proposed local rotations pass the PR test, and the guard acts as a backup rather than a constant brake.

The components introduced so far-the rotated-frame representation
(Sec.~\ref{subsec:core_idea}), the hash-table-based RDM evaluation
(Sec.~\ref{subsec:rdm}), the multi-pass adaptive-basis sweeps
(Sec.~\ref{subsec:coord_descent}), and the do-no-harm guard
(Sec.~\ref{subsec:doNH})-combine into the control-flow pipeline shown
in Fig.~\ref{fig:bass_workflow}.

The incoming circuit gate $G_t$ is first conjugated into the current rotational frame (Eq.~\ref{eq:conj}) and then applied through either the $\mathcal{O}(k)$ diagonal fast path or the general sparse hash-table update. Temporary support expansion beyond the target sparse budget~$k$ is allowed up to the working-memory cap $K_{\mathrm{hard}}$, after which top-$k$ truncation is enforced.

Basis optimization is performed on truncation-active steps regularly. It is skipped unless the PR has increased significantly since the previous optimization sweep (Sec.~\ref{subsec:optimizations}). When optimization is activated, all single-qubit RDMs are computed once from the current sparse state and reused during the sweep. The algorithm for each qubit $j$ diagonalizes $\rho_j$, applies the associated rotation $V_j^\dagger$, truncates and renormalizes the rotated state, and recomputes the PR. If the PR decreases, the do-no-harm guard accumulates the rotation into the basis matrix $U_j \leftarrow U_jV_j$. Otherwise, the state and basis updates are reverted. If an optimization sweep does not result in an acceptable improvement, it ends prematurely.

The bypass paths in Fig.~\ref{fig:bass_workflow} show why the simulator typically replicates fixed-basis behavior (Sec.~\ref{sec:results}). If the PR trigger does not fire, \textsc{BasisOptimize} is bypassed, and only the sparse propagation steps are conducted.

\begin{figure*}[!hbt]
  \centering
  \resizebox{0.8\linewidth}{!}{%
\begin{tikzpicture}[
    >=latex,
    node distance=0.6cm and 1cm,
    font=\small,
    box/.style={draw, rectangle, align=center, fill=blue!10, minimum height=0.8cm},
    rbox/.style={draw, rectangle, rounded corners, align=center, fill=gray!20, minimum height=0.8cm},
    diamondbox/.style={draw, diamond, aspect=2, align=center, fill=orange!20},
    greenbox/.style={draw, rectangle, align=center, fill=green!20, minimum width=4cm},
    greenrbox/.style={draw, rectangle, rounded corners, align=center, fill=green!20},
    redrbox/.style={draw, rectangle, rounded corners, align=center, fill=red!20}
]

    % ==========================================
    % LEFT COLUMN
    % ==========================================
    \node[rbox] (init) {Initialize $|\tilde{\psi}\rangle = |0\rangle^{\otimes N}$, \\ $U_j = I \; \forall j$};
    
    \node[box, below=0.6cm of init] (pop) {Pop next gate $G_t$ acting on $(q_1, q_2)$};
    
    \node[box, below=0.6cm of pop] (conj) {Conjugate into rotated frame \\ $\tilde{G}_t = (U_{q_1} \otimes U_{q_2})^\dagger G_t (U_{q_1} \otimes U_{q_2})$};
    
    \node[diamondbox, below=0.6cm of conj] (diag) {$\tilde{G}_t$ diagonal?};
    
    \node[box, below=0.6cm of diag] (apply) {Apply $\tilde{G}_t$ \\ yes: $O(k)$ phase update \\ no: sparse hash-table apply};
    
    \node[diamondbox, below=0.6cm of apply] (supp) {$|\text{supp}| > K_{\text{hard}}$?};
    
    \node[box, below=0.6cm of supp] (trunc) {Truncate to top-$k$ by $|\alpha_i|^2$ \\ and renormalize};

    % ==========================================
    % RIGHT COLUMN
    % ==========================================
    % Coordinate for aligning right column elements
    \coordinate (right_col_x) at ($(conj.east) + (5cm, 0)$);

    \node[diamondbox] (opt_check) at (right_col_x |- conj) [yshift=1.5cm] {Optimization check step?};
    
    \node[diamondbox, below=0.8cm of opt_check] (pr_inc) {PR increased sufficiently since \\ last accepted optimization?};
    
    \node[greenbox, below=0.8cm of pr_inc] (basis) {\textbf{BasisOptimize} \\ Build hash table + compute all RDMs \\ (fixed within each sweep) \\ For each qubit $j$: diagonalize $\rho_j$, \\ tentatively apply $V_j^\dagger$, \\ truncate + renormalize, recompute PR};
    
    \node[diamondbox, below=0.8cm of basis] (harm) {do-no-harm: \\ $\text{PR}' < \text{PR}$?};
    
    \node[greenrbox, below left=0.8cm and 0.2cm of harm] (accept) {Accept \\ $U_j \leftarrow U_j V_j$};
    \node[redrbox, below right=0.8cm and 0.2cm of harm] (revert) {Revert state \\ and basis update};

    % Update block explicitly centered below the accept/revert options
    \node[rbox] (update) at ($(accept.south)!0.5!(revert.south) + (0, -1.2cm)$) {Update optimization state \\ $\rightarrow$ next gate};

    % ==========================================
    % VERTICAL ROUTING TRACKS (Between Columns)
    % ==========================================
    % Defining safe tracks between columns so lines do not intersect
    \path (supp.east) -- ++(0.5,0) coordinate (track_up_inner);
    \path (supp.east) -- ++(0.9,0) coordinate (track_up_outer);
    \path (opt_check.west) -- ++(-0.6,0) coordinate (track_down_left);
    \path (pr_inc.east) -- ++(0.6,0) coordinate (track_down_right);

    % ==========================================
    % EDGES & CONNECTIONS
    % ==========================================
    
    % --- Main Left Flow ---
    \draw[->] (init) -- (pop);
    \draw[->] (pop) -- (conj);
    \draw[->] (conj) -- (diag);
    \draw[->] (diag) -- (apply);
    \draw[->] (apply) -- (supp);
    \draw[->] (supp.south) -- node[right, font=\scriptsize] {yes} (trunc.north);
    
    % --- supp 'no' bypass (Dashed track) ---
    % 1. Shift the vertical track 1 cm further right (from 0.5 to 1.5)
    \path (supp.east) -- ++(1.5,0) coordinate (track_up_inner_shifted);
    
    \draw[->, dashed] (supp.east) -- node[above, font=\scriptsize] {no} (track_up_inner_shifted)
        -- (track_up_inner_shifted |- opt_check.north) -- ++(0, 0.8) % Goes North
        -| ($(opt_check.east) + (0.6, 0)$)                           % Routes right, then comes back DOWN
        -- (opt_check.east);                                         % Goes LEFT into the step

    % --- Trunc to Optimization Loop (Solid track) ---
    \draw[->] (trunc.south) -- ++(0,-0.4) -| (track_up_outer) |- ($(opt_check.north) + (0, 0.6)$) -- (opt_check.north);

    % --- Right Column Subroutine Flow ---
    \draw[->] (opt_check.south) -- node[right, font=\scriptsize] {yes} (pr_inc.north);
    \draw[->] (pr_inc.south) -- node[right, font=\scriptsize] {yes} (basis.north);
    \draw[->] (basis.south) -- (harm.north);
    
    % --- Optimization Flow Bypasses (Dashed returns) ---
    \draw[->, dashed] (opt_check.west) -- node[above, font=\scriptsize] {no} (track_down_left |- opt_check.west) -- (track_down_left |- update.west) -- (update.west);
    
    \draw[->, dashed] (pr_inc.east) -- node[above, font=\scriptsize] {no} (track_down_right |- pr_inc.east) -- (track_down_right |- update.east) -- (update.east);
    
    % --- Accept / Revert Flow ---
    \draw[->] (harm.west) -| node[above, font=\scriptsize] {yes} (accept.north);
    \draw[->] (harm.east) -| node[above, font=\scriptsize] {no} (revert.north);

    % Junction merging Accept and Revert to the Update block
    \coordinate (join_update) at ($(update.north) + (0, 0.4)$);
    \draw (accept.south) |- (join_update);
    \draw (revert.south) |- (join_update);
    \draw[->] (join_update) -- (update.north);

    % --- Final loop back for the Global sweep ---
    \coordinate (c_next_t) at ($(update.south) + (0, -0.6)$);
    \coordinate (c_far_left) at ($(init.west) + (-1.5, 0)$);
    \draw (update.south) -- (c_next_t) -- (c_far_left |- c_next_t) node[midway, above, font=\scriptsize] {next $t$};
    \draw[->] (c_far_left |- c_next_t) -- (c_far_left |- pop.west) -- (pop.west);

\end{tikzpicture}
}
\caption{\textbf{BASS algorithmic workflow.}
    We use the sparse state representation in the accumulated product basis $\{U_j\}$. At each time step, the current gate is conjugated into the working basis as $\tilde{G}_t$ and then applied either (i) along a diagonal $\mathcal{O}(k)$ update path in the sparse state, or (ii) by directly updating entries in a hash-table representation. Between truncations, the state's support can temporarily grow to a maximum size $K_{\mathrm{hard}}$ before a top-$k$ truncation is enforced to control memory and cost. On selected time steps, the \textbf{BasisOptimize} routine is invoked to improve the accumulated basis. It constructs a single global hash table for the current state, computes all single-qubit RDMs once, and then performs a sweep over qubits in which candidate truncate-renormalize basis-rotation trials are tested. A proposed local basis rotation is accepted only if it does not worsen the approximation quality, i.e., it is kept only when the post-rotation PR decreases, providing a \textit{do-no-harm} safeguard against detrimental optimizations.
    }
\label{fig:bass_workflow}
\end{figure*}

% \end{document}

\subsection{Computational optimizations}
\label{subsec:optimizations}

Several implementation-level optimizations reduce the overhead associated
with adaptive basis updates while preserving the sparse budget~$k$.

\textit{Hash-table gate application.}
For generic non-diagonal two-qubit gates, output amplitudes are accumulated
through open-addressing hash-table insertion rather than global
sort-and-merge operations. Since each retained basis state contributes at
most four output configurations, the gate-application cost scales
approximately linearly with the active support size. A single global hash table is constructed for the active support and reused across all single-qubit reduced-density-matrix evaluations. This reduces the time complexity of computing all $N$ single-qubit RDMs to an expected $\mathcal{O}(Nk)$ operations, avoiding the $\mathcal{O}(N k \log k)$ cost of naïve per-qubit sorting.

\textit{Diagonal-gate fast path.}
Two-qubit gates that remain diagonal in the current working basis are applied
as direct phase updates,
\[
\alpha_i \leftarrow
\alpha_i \,\tilde{G}_{b_1 b_2,b_1 b_2},
\]
without support expansion or temporary memory allocation. Here
\[
\tilde{G}
=
(U_{q_1}\otimes U_{q_2})^\dagger
\,G\,
(U_{q_1}\otimes U_{q_2})
\]
denotes the gate represented in the accumulated local basis frame. Once the
local basis rotations become sufficiently nontrivial, $\tilde{G}$ generally
ceases to remain diagonal, at which point the implementation automatically
reverts to the generic sparse hash-table update path. Consequently, the
largest speedup from the diagonal fast path occurs in weakly rotated regimes
where the adaptive basis remains close to the computational basis.

\textit{Deferred truncation and temporary expansion.}
The simulator permits transient support growth beyond the target sparse
budget~$k$ through a working-memory cap
\[
K_{\mathrm{hard}} = \min(8k,2^N).
\]
Intermediate gate applications may therefore temporarily retain more than
$k$ amplitudes before truncation is enforced. This deferred-truncation
strategy improves amplitude selection by allowing multiple nearby updates to
accumulate before support pruning, while leaving the reported sparse budget
unchanged. Fixed-basis and adaptive-basis simulations use the same temporary
working-memory cap at matched~$k$.

\textit{Adaptive optimization trigger.}
Basis optimization is not invoked after every gate application. Instead,
optimization calls are attempted periodically on truncation-active steps and
are skipped unless the PR has increased sufficiently since the previous accepted optimization sweep. In the implementation, this criterion
is controlled by a relative threshold parameter
$r_{\mathrm{opt}}$, with the default value corresponding to an approximate
$10\%$ increase in PR before re-optimization is triggered. This adaptive
trigger substantially reduces optimization overhead on slowly evolving
circuits where the locally optimal basis changes little between successive
gate applications.

\subsection{Complexity analysis}
We identify two primary contributions to the runtime for an \(M\)-gate circuit: (a) the sequential application of \(M\) gates to a state truncated to a maximum of \(k\) basis components, and (b) the periodic basis-optimization sweeps, performed every \(n_{\text{opt}}\) gates, which evaluate and update the local basis for all \(N\) qubits.

\paragraph{Gate Application Cost:}
Applying a one- or two-qubit gate to a sparse state of size \(k\) expands the temporary support of the wavefunction. BASS resolves basis collisions using a linear-probing hash table, reducing the expected combination cost to \(\mathcal{O}(k)\). However, reducing the expanded state back to the budget \(k\) requires a top-$k$ selection or sorting algorithm. Using a standard sort, the upper bound for a single gate application is \(\mathcal{O}(k \log k)\). Assuming an optimized truncation kernel (e.g., radix sort or quickselect), the average gate cost is near \(\mathcal{O}(k)\) with a large constant factor. Over \(M\) gates, the runtime contribution is therefore
\(
T_{\text{gates}} = \mathcal{O}(M k \log k)
\)
in the worst case, or practically $\mathcal{O}(M k)$ when using $\mathcal{O}(k)$ selection.

\paragraph{Basis-Optimization Cost:}
Basis optimization runs periodically to cap overhead, about $M/n_{\text{opt}}$ times per circuit. Each sweep computes all $N$ single-qubit RDMs at $\mathcal{O}(k)$ per qubit via hash-table pair lookups. The $\mathcal{O}(N)$ local rotations are applied one qubit at a time, and each step can expand the support. To enforce the strict memory budget \(k\), the state must undergo intermediate truncation \textit{after each qubit rotation}. This sequential expansion and sorting incurs a cost of $\mathcal{O}(k \log k)$ per qubit. Consequently, the total cost for a full $N$-qubit optimization sweep is bounded by $\mathcal{O}(N k \log k)$.

The total asymptotic runtime is the sum of these contributions:
\begin{align}
\label{eq:TBASScomplex}
T_{\text{BASS}}  
= \mathcal{O}\!\left(Mk \log k + \frac{M}{n_{\text{opt}}} Nk \log k\right).
\end{align}
While specialized $\mathcal{O}(k)$ selection algorithms can drop the $\log k$ term in theory, the massive memory-bandwidth requirements of reallocating and moving $N$ temporary arrays of size $\mathcal{O}(k)$ during the optimization sweep dominate the practical wall-clock time, making the $\mathcal{O}(N k \log k)$ bound more representative of the algorithm's actual hardware performance.

\paragraph{Memory Scaling:}
While both Fixed-basis and BASS methods exhibit strict asymptotic $\mathcal{O}(k)$ memory scaling, the practical constant factors for hardware allocation are significant. To prevent catastrophic linear-probing degradation, the sparse hash tables require power-of-two capacities with maximum load factors strictly below $0.5$. Furthermore, the double-buffering required during the truncation sweeps necessitates intermediate allocations of up to $4k$ elements. Thus, the hardware RAM footprint scales as $\mathcal{O}(C \cdot k)$, where $C \sim 10-20$ relative to a simple dense vector array.

We emphasize that this constant-factor overhead consists strictly of empty hash-table buckets and unpopulated transient arrays required to maintain $\mathcal{O}(1)$ average algorithmic throughput. It does \textit{not} provide additional physical degrees of freedom. In all benchmarks, both the fixed-basis baseline and BASS utilize the exact same transient working-memory cap ($K_{\text{hard}}$) and strictly truncate the state back to exactly $k$ non-zero amplitudes before proceeding. The reported fidelity advantages stem entirely from the rotational basis choice, not from an expanded state vector.

Table~\ref{tab:complexity} summarizes the asymptotic bounds of BASS compared to standard exact and tensor-network baselines.

\begin{table}[!hbt]
\caption{\label{tab:complexity}\textbf{Worst-case asymptotic time and memory costs for simulating an $M$-gate, $N$-qubit circuit.} $k$: reported sparse budget; $\chi$: MPS bond dimension; $n_\text{opt}$: optimization interval for BASS. Adaptive basis simulations incur an $\mathcal{O}(Nk \log k)$ penalty per sweep due to the intermediate sorting/truncation required after rotating each of the $N$ qubits.}
\begin{ruledtabular}
\begin{tabular}{lll}
Method   & Time                                                  & Memory      \\
\hline
Exact    & $\mathcal{O}(M \cdot 2^N)$                                  & $\mathcal{O}(2^N)$    \\
MPS      & $\mathcal{O}(MN\chi^3)$                                     & $\mathcal{O}(N\chi^2)$\\
Fixed basis & $\mathcal{O}(Mk\log k)$                                  & $\mathcal{O}(k)$\footnote{Constants reflect minimal array allocations.} \\
BASS   & $\mathcal{O}\!\left(Mk\log k + \frac{M}{n_\text{opt}}Nk\log k\right)$  & $\mathcal{O}(k)$\footnote{Constants reflect hash-table overhead and double-buffering allocations ($C \sim 10-20 \times k$).} \\
\end{tabular}
\end{ruledtabular}
\end{table}
\section{\label{sec:results}Numerical Results}
We now benchmark the adaptive-basis BASS against the fixed-basis sparse simulator across four circuit families, multiple system sizes, and a range of sparse budgets. All fidelity values are computed by direct overlap with the exact state vector, not estimated from $\gamma^2$.

\subsection{Experimental setup}
\label{subsec:setup}

Unless otherwise stated, benchmarks use $k=2048$ retained basis states, matched
transient gate-application buffers for fixed-basis and adaptive-basis runs, and
the do-no-harm check active.  Code-path equivalence is verified: with basis
optimization disabled, BASS reproduces the fixed-basis fidelity to machine
precision ($\max|\Delta F| < 5\times10^{-20}$; 10 trials, QAOA $p=3$, $N=16$).
Fidelities are aggregated as geometric means over independent circuit instances;
all reported confidence intervals are 95\% bootstrap intervals with $4\,000$
resamples.

We benchmark on five circuit families spanning the entanglement spectrum from
near-product to maximally scrambled.

\textit{(a) Brickwork ($L=5$):}
alternating even/odd layers of Haar-random two-qubit
gates on a 1D chain~\cite{nahum2017,vonkeyserlingk2018,chan2018}.
Five layers produce near-maximal $Z$-basis delocalization and represent the most
demanding regime for fixed-basis sparse simulation.

\textit{(b) Haar-random ($L=3$):}
in each of $L=3$ layers, qubits are randomly paired and an independent
Haar-random two-qubit gate is applied to each pair~\cite{nahum2017}.
The absence of geometric structure leads to rapid mean-field-like scrambling
without the spatial locality of brickwork.

\textit{(c) QAOA ($p=3$):}
MaxCut circuit on a single random 3-regular graph generated once per circuit
instance via the pairing model~\cite{bollobas1980}.  The $p=3$-round circuit
alternates cost unitaries
$$U_C(\gamma_k) = \prod_{(i,j)\in E} e^{+i\gamma_k Z_iZ_j/2}$$
and mixer unitaries
$$U_B(\beta_k) = \prod_i e^{-i\beta_k X_i},$$ with angles
\begin{align}
  \gamma_k &= \frac{\pi}{4}\sin\!\bigl(\frac{(2k-1)\pi}{4p}\bigr),\\
  \beta_k  &= \frac{\pi}{4}\cos\!\bigl(\frac{(2k-1)\pi}{4p}\bigr),
  \label{eq:qaoa_angles}
\end{align}
the one-term Fourier parameterisation near-optimal for 3-regular
MaxCut~\cite{farhi2014,zhou2020,crooks2018}, with a small per-instance
perturbation $\varepsilon\sim\mathcal{U}(-0.05,+0.05)$ to sample the landscape
around the optimum.  The same graph is used across all $p$ rounds.

\textit{(d) Quantum RFIM ($L=5$, $W=2$):}
first-order Trotterized time evolution under the one-dimensional quantum
Random-Field Ising Hamiltonian
\begin{equation}
  H = -J\!\sum_i Z_iZ_{i+1} - h_0\!\sum_i X_i - \sum_i \Delta_i Z_i,
  \label{eq:rfim_H}
\end{equation}
with $J=h_0=1$, uniform transverse field, and longitudinal disorder
$\Delta_i\sim\mathcal{U}[-W,W]$, $W=2$, $dt=0.2$.
Disorder draws are independent across sites and circuit instances.
At $W=2$, the system sits near the ergodic/many-body-localized crossover of the
1D chain~\cite{palhuse2010,oganesyanhuse2007,luitz2015}, producing intermediate
$\mathrm{PR}_Z$ values for which basis adaptivity provides a moderate advantage.
Each Trotter step comprises three sublayers:
$U_{ZZ}=\prod_i e^{+iJdt\,Z_iZ_{i+1}}$,
$U_X=\prod_i e^{+ih_0 dt\,X_i}$, and
$U_\Delta=\prod_i e^{+i\Delta_i dt\,Z_i}$,
implemented respectively by \texttt{RZZGate}, \texttt{RXGate}, and
\texttt{RZGate}.

\textit{(e) UCCSD ($L=1$) [i.e UCC-1]:}
first-order Trotterized Jordan-Wigner unitary coupled-cluster singles and
doubles~\cite{whitfield2011,anand2022}, starting from the half-filling
Hartree-Fock reference $|1^{N_e}0^{N-N_e}\rangle$.
Single-excitation amplitudes $\theta_{ia}\sim\mathcal{U}[-0.3,0.3]$ and
double-excitation amplitudes $\theta_{ijab}\sim\mathcal{U}[-0.05,0.05]$
are drawn independently per instance at scales typical of weakly correlated
molecules~\cite{anand2022}.
Each excitation operator is decomposed into Pauli strings via the
Jordan-Wigner mapping and exponentiated by a CNOT-cascade circuit~\cite{whitfield2011};
intermediate qubits contribute $Z$ parity strings automatically through the
cascade.
The resulting states are substantially more concentrated in the $Z$ basis than
brickwork or Haar circuits, probing the regime where basis adaptivity provides
little benefit.

For each configuration, multiple independent random instances are generated
(counts reported in the corresponding figure or table caption) to ensure reliable
geometric-mean estimates.  System size $N$ and sparse budget $k$ are varied as
specified in each experiment.

\subsection{Fidelity and PR vs.\ sparse budget}
\label{subsec:fid_vs_k}

We compare BASS to fixed-basis sparse simulation on three stochastic families: 1D brickwork, $4\times5$ 2D brickwork, and the random-field Ising model (RFIM). Each point is the geometric mean over 100 independent $N=20$ trials, with 95\% bootstrap confidence intervals, as we sweep the retained state count~$k$ and measure fidelity $F$ against an exact reference.

Figure~\ref{fig:fid_pr} and Tables~\ref{tab:fidelity_vs_k}–\ref{tab:pr_vs_k} show a consistent gap, largest when the state is spread across many $Z$ strings. The $\PR$ curves track the same story: at matched $F$, BASS needs far fewer effective basis states than the fixed basis.

\textit{Random-field Ising model.} RFIM states are partly localized, and the relative gain is largest at small $k$. At $k=100$, BASS reaches $F \approx 0.530$, $3.78\times$ the fixed-basis value. The margin shrinks as $k$ grows but does not cross over. For $k \le 1000$, BASS runs at $\PR \approx 1.7$–$5.2$ while the fixed basis needs $\PR \approx 22$–$301$ for comparable fidelity.

\textit{Brickwork circuits.} These are the hard case: 1D brickwork has $\PR_Z \simeq 1.81\times 10^5$. BASS stays ahead across the sweep, with a $427.9\times$ ratio at $k=500$ and still $4.1\times$ at $k=50{,}000$. On $4\times5$ 2D brickwork the pattern repeats—order-of-magnitude gains at small $k$ and $\PR \simeq 72$ versus $\PR \simeq 951$ at $k=2000$.

\begin{figure}[!hbt]
\centering
\makebox[\columnwidth][l]{\textbf{(a)}}\\[-0.5ex]
\includegraphics[width=0.95\columnwidth]{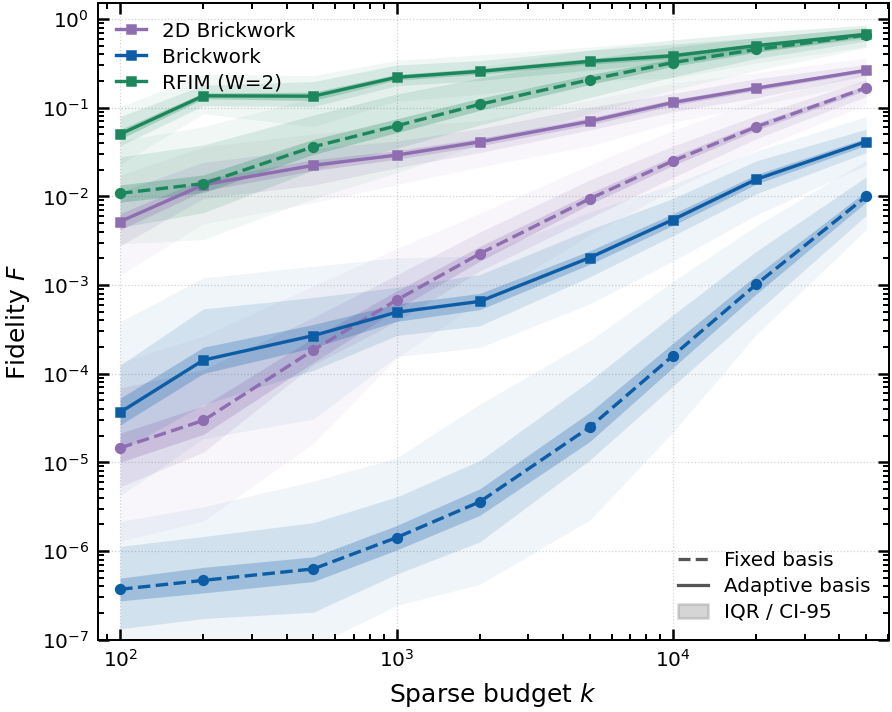}
\makebox[\columnwidth][l]{\textbf{(b)}}\\[-0.5ex]
\includegraphics[width=0.95\columnwidth]{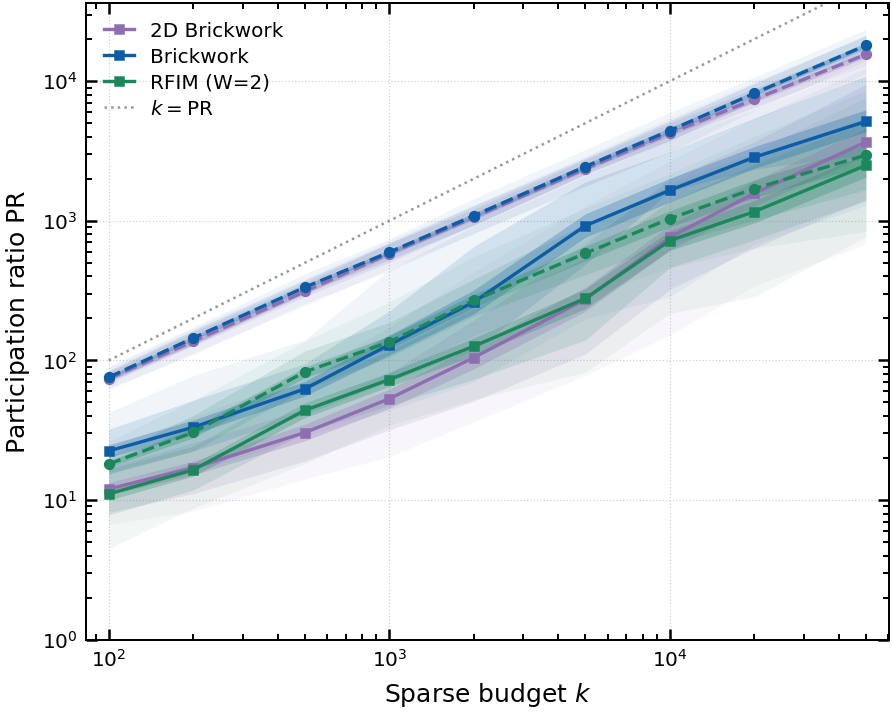}
\caption{\textbf{Fidelity and PR vs. sparse budget.} \textbf{(a)} Fidelity $F$ as a function of the sparse budget $k$. \textbf{(b)} PR $\PR$ as a function of $k$. Curves represent the geometric mean of per-trial results. Shaded regions illustrate the distribution and statistical uncertainty of the data, with bands from lightest to darkest representing the 10th–90th percentile range, the interquartile range (IQR, 25th–75th percentile), and the 95\% bootstrap confidence interval for the geometric mean, respectively. Solid markers denote results obtained with the adaptive basis method, while open markers denote the fixed-basis method. Circuit families considered: 1D random brickwork circuits (depth 6); random-field Ising model (RFIM) circuits ($\sigma_h/h = 0.1$, 5 Trotter layers); and $4\times5$ 2D brickwork circuits (depth 4).}
\label{fig:fid_pr}
\end{figure}

\begin{table*}[!hbt]
\centering
\caption{\textbf{Fidelity ($F$) vs. Sparse Budget ($k$).} Statistics for $N=20$ circuits. Reported as Median [IQR]. Ratio is the geometric mean of $F_{\text{BASS}}/F_{\text{fixed}}$.}
\label{tab:fidelity_vs_k}
\begin{tabular}{l  l  r  r  l}
\toprule
Family & $k$ & \multicolumn{1}{c}{$F_{\text{fixed}}$} & \multicolumn{1}{c}{$F_{\text{BASS}}$} & \multicolumn{1}{c}{Ratio (GM [95\% CI])} \\
\midrule
2D Brickwork & 100 & $1.67 \, [0.5, 6.9] \times 10^{-5}$ & $6.50 \, [2.8, 11.2] \times 10^{-3}$ & $356.9\times \, [236, 535]$ \\
             & 200 & $3.41 \, [1.3, 11.1] \times 10^{-5}$ & $1.38 \, [0.9, 2.4] \times 10^{-2}$ & $454.6\times \, [306, 661]$ \\
             & 500 & $2.34 \, [1.3, 4.3] \times 10^{-4}$ & $2.43 \, [1.3, 3.5] \times 10^{-2}$ & $121.3\times \, [88, 167]$ \\
             & 2k  & $2.60 \, [1.4, 4.0] \times 10^{-3}$ & $4.42 \, [3.1, 5.8] \times 10^{-2}$ & $18.2\times \, [15, 22]$ \\
             & 5k  & $9.92 \, [5.7, 15.3] \times 10^{-3}$ & $6.77 \, [5.6, 10.1] \times 10^{-2}$ & $7.5\times \, [6.5, 8.5]$ \\
             & 10k & $2.60 \, [1.6, 3.7] \times 10^{-2}$ & $1.15 \, [0.9, 1.45] \times 10^{-1}$ & $4.6\times \, [4.1, 5.1]$ \\
             & 20k & $6.22 \, [4.5, 8.1] \times 10^{-2}$ & $1.64 \, [1.3, 2.12] \times 10^{-1}$ & $2.7\times \, [2.5, 3.0]$ \\
             & 50k & $1.66 \, [1.3, 2.1] \times 10^{-1}$ & $2.67 \, [2.3, 3.1] \times 10^{-1}$ & $1.6\times \, [1.5, 1.7]$ \\
\midrule
1D Brickwork & 100 & $4.25 \, [0.1, 1.1] \times 10^{-7}$ & $3.77 \, [1.4, 12.5] \times 10^{-5}$ & $99.1\times \, [62, 156]$ \\
             & 200 & $7.00 \, [1.7, 14.6] \times 10^{-7}$ & $1.55 \, [0.4, 5.4] \times 10^{-4}$ & $304.5\times \, [198, 454]$ \\
             & 500 & $6.36 \, [0.2, 2.1] \times 10^{-7}$ & $3.54 \, [1.1, 7.2] \times 10^{-4}$ & $427.9\times \, [290, 631]$ \\
             & 2k  & $3.57 \, [1.3, 10.5] \times 10^{-6}$ & $7.33 \, [3.5, 13.1] \times 10^{-4}$ & $182.1\times \, [124, 270]$ \\
             & 5k  & $3.13 \, [1.1, 8.3] \times 10^{-5}$ & $1.96 \, [1.2, 4.2] \times 10^{-3}$ & $81.5\times \, [56, 119]$ \\
             & 10k & $1.92 \, [0.7, 4.6] \times 10^{-4}$ & $5.60 \, [3.6, 9.5] \times 10^{-3}$ & $34.1\times \, [25, 46]$ \\
             & 20k & $1.12 \, [0.5, 2.4] \times 10^{-3}$ & $1.61 \, [1.1, 2.5] \times 10^{-2}$ & $15.3\times \, [12, 19]$ \\
             & 50k & $1.02 \, [0.6, 1.7] \times 10^{-2}$ & $3.95 \, [3.1, 5.7] \times 10^{-2}$ & $4.1\times \, [3.6, 4.8]$ \\
\midrule
RFIM ($W=2$) & 100 & $1.18 \, [0.4, 2.8] \times 10^{-2}$ & $5.23 \, [3.7, 7.9] \times 10^{-2}$ & $4.7\times \, [3.5, 6.0]$ \\
             & 200 & $1.37 \, [0.7, 3.8] \times 10^{-2}$ & $1.31 \, [1.0, 1.8] \times 10^{-1}$ & $9.8\times \, [8.0, 11.9]$ \\
             & 500 & $3.77 \, [2.0, 8.2] \times 10^{-2}$ & $1.55 \, [1.0, 2.0] \times 10^{-1}$ & $3.7\times \, [2.9, 4.7]$ \\
             & 2k  & $1.19 \, [0.6, 2.2] \times 10^{-1}$ & $2.88 \, [2.0, 3.4] \times 10^{-1}$ & $2.4\times \, [1.9, 2.8]$ \\
             & 5k  & $2.25 \, [1.3, 3.6] \times 10^{-1}$ & $3.79 \, [2.7, 4.5] \times 10^{-1}$ & $1.6\times \, [1.4, 1.9]$ \\
             & 10k & $3.50 \, [2.2, 4.8] \times 10^{-1}$ & $4.14 \, [3.2, 5.2] \times 10^{-1}$ & $1.2\times \, [1.1, 1.3]$ \\
             & 20k & $4.94 \, [3.6, 6.1] \times 10^{-1}$ & $5.16 \, [4.3, 6.1] \times 10^{-1}$ & $1.1\times \, [1.0, 1.2]$ \\
             & 50k & $7.00 \, [5.7, 8.0] \times 10^{-1}$ & $7.00 \, [6.0, 8.0] \times 10^{-1}$ & $1.0\times \, [1.0, 1.1]$ \\
\bottomrule
\end{tabular}
\end{table*}

\begin{table*}[!hbt]
\centering
\caption{\textbf{PR ($\PR$) vs. Sparse Budget ($k$).} Statistics for $N=20$ circuits. Reported as Median [IQR]. Compression Factor refers to the geometric mean of the structural localization ratio ($\PR_{\text{fixed}} / \PR_{\text{BASS}}$).}
\label{tab:pr_vs_k}
\begin{tabular}{llrr r}
\toprule
Family & $k$ & $\PR_{\text{fixed}}$ (Med [IQR]) & $\PR_{\text{BASS}}$ (Med [IQR]) & \multicolumn{1}{c}{Compression Factor} \\
\midrule
2D Brickwork & 100 & 76.6 [69.0, 82.7] & 11.5 [8.1, 17.3] & $6.2\times$ \\
             & 200 & 142.0 [127.0, 159.0] & 16.7 [11.1, 25.7] & $8.1\times$ \\
             & 500 & 317.0 [287.0, 359.0] & 29.4 [19.0, 45.1] & $10.2\times$ \\
             & 1k  & 606.0 [504.0, 690.0] & 51.5 [31.8, 81.2] & $10.9\times$ \\
             & 2k  & 1060.0 [936.0, 1210.0] & 91.6 [51.0, 195.0] & $10.1\times$ \\
             & 5k  & 2420.0 [2130.0, 2740.0] & 259.0 [111.0, 680.0] & $8.6\times$ \\
             & 10k & 4350.0 [3790.0, 5000.0] & 953.0 [322.0, 2000.0] & $5.6\times$ \\
             & 20k & 7680.0 [6500.0, 8880.0] & 2280.0 [637.0, 3670.0] & $4.7\times$ \\
             & 50k & 16100.0 [13000.0, 18500.0] & 5770.0 [1420.0, 9470.0] & $4.2\times$ \\
\midrule
1D Brickwork & 100 & 76.3 [70.7, 84.4] & 20.8 [15.6, 32.2] & $3.4\times$ \\
             & 200 & 149.0 [130.0, 165.0] & 30.5 [22.4, 51.9] & $4.4\times$ \\
             & 500 & 344.0 [303.0, 376.0] & 58.1 [41.0, 94.9] & $5.4\times$ \\
             & 1k  & 638.0 [527.0, 695.0] & 117.0 [70.7, 226.0] & $4.6\times$ \\
             & 2k  & 1100.0 [925.0, 1280.0] & 231.0 [118.0, 656.0] & $4.2\times$ \\
             & 5k  & 2480.0 [2170.0, 2820.0] & 1420.0 [488.0, 1900.0] & $2.7\times$ \\
             & 10k & 4660.0 [3740.0, 5210.0] & 2480.0 [1630.0, 3120.0] & $2.7\times$ \\
             & 20k & 8290.0 [7160.0, 9620.0] & 4000.0 [2340.0, 5430.0] & $2.9\times$ \\
             & 50k & 18500.0 [16000.0, 21300.0] & 6870.0 [2980.0, 10900.0] & $3.5\times$ \\
\midrule
RFIM ($W=2$) & 100 & 17.7 [15.3, 20.8] & 12.8 [7.8, 16.8] & $1.6\times$ \\
             & 200 & 31.2 [22.5, 40.4] & 16.6 [11.8, 24.3] & $1.9\times$ \\
             & 500 & 83.1 [62.0, 117.0] & 52.1 [30.6, 68.0] & $1.9\times$ \\
             & 1k  & 133.0 [105.0, 187.0] & 76.7 [47.5, 117.0] & $1.9\times$ \\
             & 2k  & 260.0 [208.0, 382.0] & 143.0 [73.2, 234.0] & $2.1\times$ \\
             & 5k  & 601.0 [412.0, 953.0] & 333.0 [140.0, 534.0] & $2.1\times$ \\
             & 10k & 1140.0 [679.0, 1760.0] & 881.0 [462.0, 1370.0] & $1.4\times$ \\
             & 20k & 1960.0 [1010.0, 2930.0] & 1510.0 [731.0, 2530.0] & $1.5\times$ \\
             & 50k & 3250.0 [1670.0, 5480.0] & 3000.0 [1400.0, 5090.0] & $1.2\times$ \\
\bottomrule
\end{tabular}
\end{table*}

\textit{Table Notes:} Notation: $k$ denotes the sparse budget; $N=20$ is the circuit size. Reported values for Fidelity ($F$) and PR ($\PR$) are Medians with Interquartile Ranges [IQR] over 100 trials. Ratio metrics utilize the Geometric Mean (GM) with 95\% Bootstrap Confidence Intervals [CI]. Compression Factor is defined as $\PR_{\text{fixed}} / \PR_{\text{BASS}}$. Values are reported as $\alpha \, [\beta, \gamma] \times 10^{-n}$, where $\alpha$ is the median, and $[\beta, \gamma]$ represents the interquartile range (IQR) [25th–75th percentile]. The scaling factor $10^{-n}$ applies to the median and the interval bounds.

\subsection{System-size scaling}
\label{subsec:scaling}

We now distinguish between two physically distinct scaling regimes:
(i) exponential-budget scaling, where the retained subspace size grows with
system size, and (ii) fixed-budget scaling, where the truncation budget is
held constant as the Hilbert space dimension grows exponentially. These two experiments probe complementary aspects of adaptive-basis
truncation. Exponential-budget scaling tests whether adaptive basis
selection remains useful once the truncation pressure weakens with increasing system size, while fixed-budget scaling probes the true asymptotic compression regime where the retained fraction of Hilbert space decreases exponentially with~$N$.

\begin{figure}[!hbt]
\centering
\makebox[\linewidth][l]{\textbf{(a)}}\\
\includegraphics[width=\linewidth]{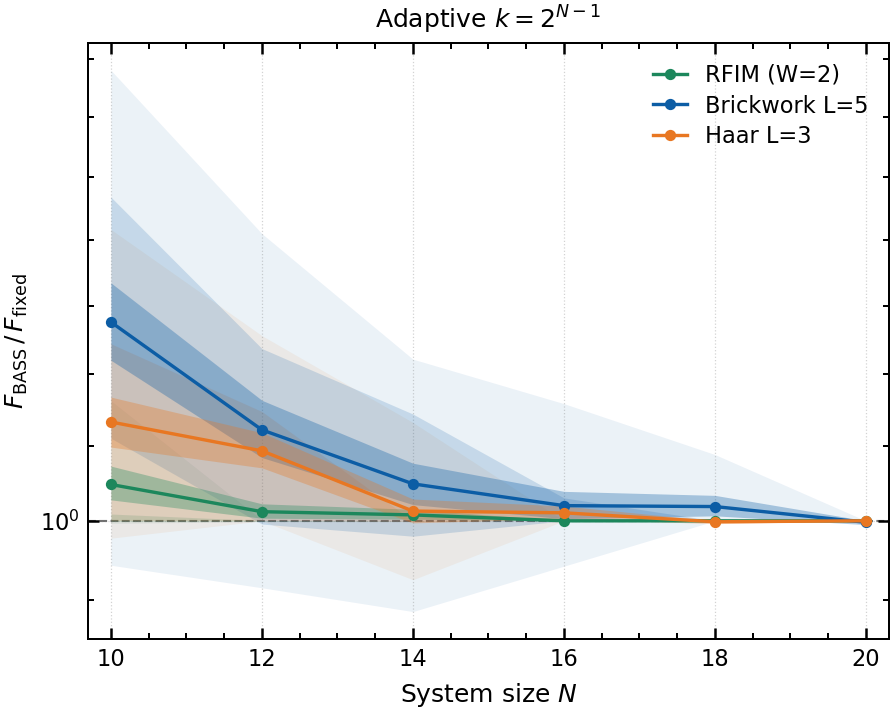}
\makebox[\linewidth][l]{\textbf{(b)}}\\
\includegraphics[width=\linewidth]{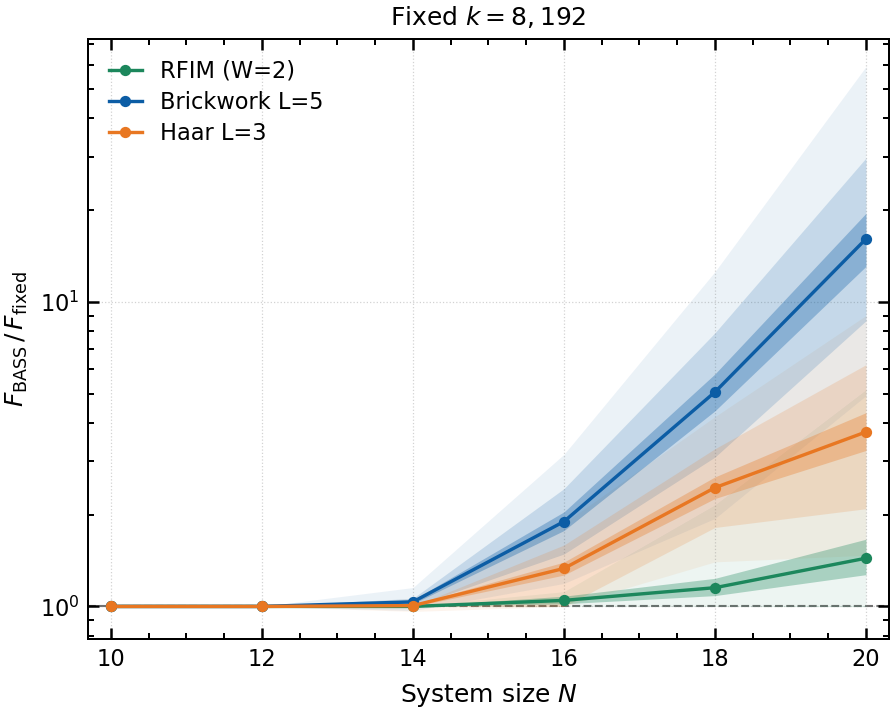}
\caption{\textbf{System-size scaling of adaptive-basis advantage.}
\textbf{(a)} Exponential-budget scaling: adaptive/fixed fidelity ratio as
a function of system size under the scaling trajectory
$k = 2^{N-1}$. Dashed horizontal line indicates ratio $=1$
(no adaptive advantage). Brickwork and Haar-random circuits exhibit
modest enhancement at intermediate system sizes before converging back
toward unity as truncation pressure weakens, while RFIM circuits remain
effectively identical to the fixed-basis method across all tested sizes.
\textbf{(b)} Fixed-budget scaling: geometric mean adaptive/fixed fidelity
ratio versus system size under fixed truncation budget $k=8192$. The vertical axis
is logarithmic. Under a fixed compression budget, the adaptive advantage
grows rapidly for chaotic circuit families while remaining near unity for
RFIM circuits. In both panels, shaded regions denote interquartile uncertainty
bands over 100 random-circuit trials.}
\label{fig:scaling}
\end{figure}

\paragraph*{Exponential-budget scaling.}
Figure~\ref{fig:scaling}(a) shows the adaptive/fixed fidelity ratio under
the scaling trajectory $k = 2^{N-1}$ for shallow circuit ensembles. In this regime, the retained subspace grows
exponentially with system size, so truncation becomes progressively less
severe at larger~$N$.

The resulting behavior is quantitatively weak across all tested families.
For RFIM circuits, the geometric-mean fidelity ratio remains essentially
unity across all system sizes, varying only from
$1.02\times$ at $N=10$
to $1.00\times$ by $N=20$.
The median fidelities of the fixed and adaptive methods are statistically
indistinguishable throughout the entire scaling trajectory, confirming that
the dominant wavefunction support already remains strongly aligned with the
computational basis.

Brickwork circuits exhibit the strongest transient enhancement at small
system size, reaching a geometric-mean improvement of
$1.14\times~[1.11,1.17]$
at $N=10$, decreasing to
$1.06\times~[1.04,1.08]$
at $N=12$, and converging toward unity for larger systems:
$1.02\times$ at $N=14$,
$1.01\times$ at $N=16$ and $N=18$,
and effectively $1.00\times$ by $N=20$.

Haar-random circuits display similar behavior with weaker enhancement.
The adaptive method achieves
$1.07\times~[1.05,1.08]$
at $N=10$ and
$1.05\times~[1.04,1.06]$
at $N=12$, after which the ratio rapidly saturates toward unity:
$1.01\times$ at $N=14$ and $N=16$,
and exactly $1.00\times$ within statistical uncertainty for
$N \ge 18$.

This convergence toward
$F_{\mathrm{BASS}}/F_{\mathrm{Fixed}} \rightarrow 1$
is expected. As $k$ grows exponentially with system size,
the retained subspace eventually captures nearly all significant amplitudes
even without basis adaptation. Consequently, the computational basis itself
becomes sufficient to represent the dominant wavefunction support,
eliminating the need for adaptive rotations.

\paragraph*{Fixed-budget scaling.}
The more informative scaling regime is shown in
Fig.~\ref{fig:scaling}(b), where the truncation budget is fixed at
$k=8192$ while the system size increases. Here, the retained fraction of
Hilbert space decreases exponentially with~$N$, producing increasingly
severe compression pressure.

Under fixed-budget scaling, the adaptive advantage becomes dramatic for
chaotic circuit families.

\textit{Brickwork circuits.}
Brickwork circuits display the strongest scaling enhancement.
The geometric-mean fidelity ratio remains near unity for small systems,
with
$1.04\times~[1.02,1.05]$
at $N=14$,
but then increases rapidly to
$1.90\times~[1.77,2.04]$
at $N=16$,
$5.06\times~[4.39,5.80]$
at $N=18$,
and
$16.09\times~[13.08,19.53]$
at $N=20$.

At the same time, the absolute fixed-basis fidelities collapse rapidly.
The median fixed-basis fidelity decreases from
$6.36\times10^{-1}$ at $N=14$
to
$9.86\times10^{-2}$ at $N=16$,
$8.68\times10^{-3}$ at $N=18$,
and
$8.10\times10^{-4}$ at $N=20$.
In contrast, the adaptive basis preserves substantially larger wavefunction
weight, maintaining median fidelities of
$1.85\times10^{-1}$,
$4.63\times10^{-2}$,
and
$1.18\times10^{-2}$
at the same respective system sizes.

This behavior reflects the exponentially increasing mismatch between the
computational basis and the dominant wavefunction support under strong
scrambling dynamics. As the effective support spreads broadly across
Hilbert space, fixed-basis truncation becomes increasingly inefficient,
whereas adaptive basis rotations continue to concentrate probability
weight into the retained subspace.

\textit{Haar-random circuits.}
Haar-random circuits exhibit the same qualitative trend, although with
weaker scaling than brickwork circuits.
The geometric-mean fidelity ratio increases from
$1.01\times$ at $N=14$
to
$1.33\times~[1.27,1.40]$
at $N=16$,
$2.45\times~[2.26,2.66]$
at $N=18$,
and
$3.74\times~[3.25,4.33]$
at $N=20$.

The weaker scaling relative to brickwork circuits likely reflects the less
structured entanglement geometry of Haar-random gate placement, which can
occasionally leave substantial support partially aligned with the
computational basis even at large system sizes.

\textit{Random-field Ising model.}
RFIM circuits remain relatively stable across all tested system sizes.
The geometric-mean fidelity ratio remains exactly unity up to $N=14$,
before increasing only modestly under stronger compression pressure:
$1.05\times~[1.02,1.08]$
at $N=16$,
$1.15\times~[1.09,1.24]$
at $N=18$,
and
$1.44\times~[1.27,1.66]$
at $N=20$.
Even at the largest system size, the progress remains substantially
smaller than that observed for chaotic circuit families.

These results confirm that adaptive rotations are primarily beneficial
when the dominant wavefunction support becomes strongly misaligned with
the computational basis. In the RFIM regime, the computational basis
already provides a relatively efficient representation, so BASS naturally
reduces to behavior that is nearly identical to fixed-basis truncation.

Taken together, these results establish the key scaling conclusion of
this work:
\textit{Adaptive basis selection becomes increasingly important as
compression pressure grows and the effective wavefunction support spreads
away from the computational basis.}

The advantage is hence strongly regime dependent. When the retained
subspace is large enough to capture nearly all relevant amplitudes,
adaptive rotations become unnecessary and the method smoothly converges to
the fixed-basis limit. Conversely, under strong compression pressure in
highly scrambling circuit families, adaptive basis selection yields
multiplicative fidelity improvements that grow rapidly with system size.

\subsection{The \texorpdfstring{$\PR_Z$}{PR\_Z} crossover criterion}
\label{subsec:crossover}

Whether BASS beats fixed-basis truncation on a given state is largely determined by
$k/\PR_Z$.
Figure~\ref{fig:crossover} shows the collapse of fidelity ratios onto this axis.

\begin{figure}[!hbt]
\centering
\includegraphics[width=\linewidth]{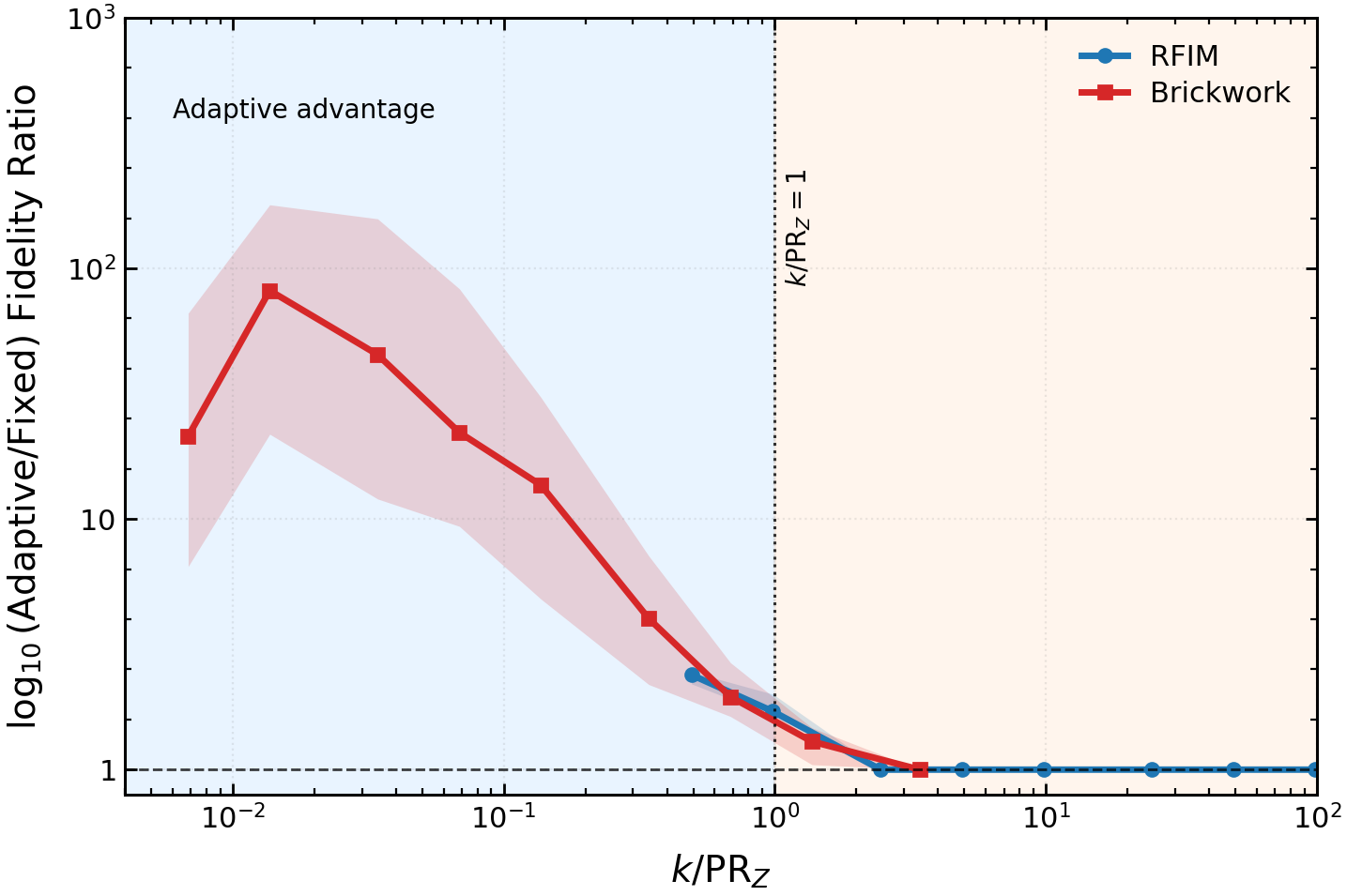}
\caption{\textbf{Universal crossover controlled by the normalized truncation scale $k/\PR_Z$.} Adaptive-basis/fixed-basis fidelity ratio as a function of the normalised truncation budget $k/\PR_Z$ for RFIM and brickwork circuit ensembles at $N=16$. The vertical dashed line indicates $k/\PR_Z = 1$, the truncation size is comparable to the state's effective participation support. For $k/\PR_Z < 1$, adaptive truncation performs better than fixed-basis truncation, with the highest improvements obtained for chaotic brickwork circuits ($\sim10$-$10^2\times$). RFIM states exhibit less improvement ($\sim1$-$3\times$) due to their more organized amplitude distributions. As $k$ approaches $\PR_Z$, both methods converge and the fidelity ratio reaches unity. Increasing system size $N$, circuit depth, or moving to more chaotic and critical regimes often increases the PR $\PR_Z$, driving the dynamics farther into the undersampled domain ($k/\PR_Z \ll 1$). In contrast, localized or weakly entangling regimes provide less $\PR_Z$, allowing for convergence with smaller truncation budget.}
\label{fig:crossover}
\end{figure}

When $k < \PR_Z$, the $k$ states retained by the fixed-basis simulator in the
$Z$ basis covers only a small fraction of the total probability; BASS
concentrates that weight into far fewer rotated-basis states. When
When $k > \PR_Z$, fixed-basis truncation captures the dominant weight, and BASS doesn't provide extra compression.
$k/\PR_Z$ is a reliable predictor of whether to use basis adaptation on tested circuit families due to its sharp crossover.

Operationally, the simulator can compute the PR of its
current retained support at $\mathcal{O}(k)$ cost,
\begin{equation}
\widehat{\PR}_\mathrm{supp} =
\frac{\left(\sum_{i=1}^{k}|\alpha_i|^2\right)^2}
{\sum_{i=1}^{k}|\alpha_i|^4}.
\label{eq:prz_sparse}
\end{equation}
This quantity is not an unbiased estimator of the exact-state $\PR_Z$, nor is it by itself a rigorous upper or lower bound after repeated truncation,
because the discarded amplitudes are no longer represented. Its value is
nevertheless useful as a headroom diagnostic: when
$\widehat{\PR}_\mathrm{supp}\ll k$, the retained state is already concentrated
and basis adaptation is usually unnecessary, when
$\widehat{\PR}_\mathrm{supp}$ saturates near~$k$, the current basis has filled
The available sparse support and BASS should be engaged. In practice this
The saturation signal has identified every $Z$-delocalized configuration in our
benchmarks. If a certified trigger is required, a conservative variant can
Combine Eq.~\eqref{eq:prz_sparse} with a low-budget pilot evolution or an exact
small-$N$ calibration for the circuit family.

For brickwork circuits at depth $L=5$, empirical fitting of the exact states shows the PR grows exponentially as $\mathrm{PR}_{Z} \sim 2^{0.70N}$ (see Appendix~\ref{subsec:pr_scaling} for scaling fits), ensuring $k < \mathrm{PR}_{Z}$ for any polynomial $k$ at sufficiently large $N$.

\subsection{Runtime overhead}
\label{subsec:runtime}

Table~\ref{tab:runtime} summarizes the wall-clock above adaptive
basis selection across brickwork circuits for
$N \in \{12,14,16,18,20\}$ and sparse budgets
$k \in \{500,10^3,2\times10^3,5\times10^3,10^4,2\times10^4\}$,
using 100 independent random-circuit realizations per configuration.

Two distinct computational regimes originating from the benchmark.

\paragraph{Undersampled regime ($k < \mathrm{PR}_Z$).}
When the sparse budget is less than the effective support size of the wavefunction, adaptive basis optimization is activated throughout the simulation. In this regime, the geometric-mean overhead rises to around $10.4\times$ due to the added cost of RDM creation and basis optimization.
However, this regime also results to get highest fidelity.
For instance, at $N=20$ and $k=500$, the fixed-basis fidelity decreases to
$\sim10^{-6}$ while BASS retains
$\sim3.5\times10^{-4}$ overlap of the wavefuction.
Similarly, at $N=18$ and $k=5000$, BASS improves the geometric-mean
fidelity from
$4.35\times10^{-4}$ to
$1.32\times10^{-2}$ while maintaining an overhead below
$9\times$.

\paragraph{Oversampled regime ($k \ge \mathrm{PR}_Z$).}
When the sparse budget exceeds the wavefunction's effective PR, the adaptive trigger activates less frequently, leading to a significant drop in runtime overhead.
In this domain, both fixed-basis and adaptive simulations reach great fidelity, which reduces the adaptive advantage.
At $N=12$ and $k\ge10^4$, both techniques achieve unit fidelity while reducing overhead to around $1.2\times$.

The runtime scaling shows that adaptive basis selection has just a minor computational overhead compared to the significant fidelity improvements achieved in the severely compressed domain.
As system size and compression pressure increase, fidelity improves significantly. However, the runtime overhead remains within one order of magnitude across all evaluated setups.

\begin{table*}[!hbt]
\centering
\caption{\textbf{Runtime overhead statistics for adaptive-basis truncation.}
Brickwork-circuit benchmark results ($L=6$, 100 random-circuit trials per
configuration). Overhead denotes the runtime ratio
$t_{\mathrm{BASS}}/t_{\mathrm{fixed}}$.
Reported overhead values correspond to geometric means with 95\%
bootstrap confidence intervals.
Median overheads are reported with interquartile ranges [25th, 75th
percentiles].
$\mathrm{GM}(F_{\mathrm{fixed}})$ and
$\mathrm{GM}(F_{\mathrm{BASS}})$ denote geometric-mean fidelities over
circuit realizations.
Regimes are classified according to the crossover criterion
$k \lessgtr \mathrm{PR}_Z \sim 2^{0.7N}$.}
\label{tab:runtime}
\begin{tabular}{llrrrrrrrl}
\toprule
$N$ & $k$
& $\mathrm{GM}(t_\mathrm{BASS}/t_\mathrm{fixed})$
& 95\% CI
& Median Overhead [IQR]
& $\mathrm{GM}(F_\mathrm{fixed})$
& $\mathrm{GM}(F_\mathrm{BASS})$
& $\mathrm{GM}(F_\mathrm{BASS}/F_\mathrm{fixed})$
& Regime \\
\midrule

12 & 500
& $12.92\times$
& [11.73, 14.09]
& $13.65\times$ [11.30, 14.57]
& $3.24\times10^{-2}$
& $1.02\times10^{-1}$
& $3.2\times$
& over \\

12 & 1\,000
& $12.54\times$
& [10.93, 13.94]
& $13.07\times$ [12.79, 13.72]
& $1.43\times10^{-1}$
& $2.44\times10^{-1}$
& $1.7\times$
& over \\

12 & 2\,000
& $12.02\times$
& [10.22, 14.30]
& $11.82\times$ [9.49, 14.11]
& $4.69\times10^{-1}$
& $5.15\times10^{-1}$
& $1.1\times$
& over \\

12 & 5\,000
& $1.29\times$
& [1.23, 1.36]
& $1.27\times$ [1.21, 1.36]
& $1.00$
& $1.00$
& $1.0\times$
& over \\

12 & 10\,000
& $1.22\times$
& [1.20, 1.24]
& $1.23\times$ [1.20, 1.24]
& $1.00$
& $1.00$
& $1.0\times$
& over \\

12 & 20\,000
& $1.22\times$
& [1.18, 1.27]
& $1.21\times$ [1.19, 1.23]
& $1.00$
& $1.00$
& $1.0\times$
& over \\

\midrule

14 & 500
& $12.33\times$
& [11.43, 13.13]
& $12.77\times$ [11.95, 13.16]
& $1.20\times10^{-3}$
& $2.21\times10^{-2}$
& $18.4\times$
& under \\

14 & 1\,000
& $12.43\times$
& [11.45, 13.47]
& $12.11\times$ [10.92, 14.12]
& $6.29\times10^{-3}$
& $4.51\times10^{-2}$
& $7.2\times$
& over \\

14 & 2\,000
& $10.48\times$
& [9.36, 11.68]
& $10.63\times$ [8.74, 12.24]
& $4.26\times10^{-2}$
& $1.06\times10^{-1}$
& $2.5\times$
& over \\

14 & 5\,000
& $9.65\times$
& [7.65, 11.90]
& $10.47\times$ [7.47, 12.44]
& $2.63\times10^{-1}$
& $3.21\times10^{-1}$
& $1.2\times$
& over \\

14 & 10\,000
& $6.45\times$
& [4.08, 9.44]
& $7.03\times$ [5.08, 9.20]
& $6.96\times10^{-1}$
& $6.91\times10^{-1}$
& $1.0\times$
& over \\

14 & 20\,000
& $1.15\times$
& [1.13, 1.17]
& $1.15\times$ [1.14, 1.17]
& $1.00$
& $1.00$
& $1.0\times$
& over \\

\midrule

16 & 500
& $12.88\times$
& [12.37, 13.36]
& $12.87\times$ [12.19, 13.77]
& $9.30\times10^{-5}$
& $3.56\times10^{-3}$
& $38.3\times$
& under \\

16 & 1\,000
& $12.22\times$
& [11.14, 13.40]
& $11.72\times$ [10.76, 14.24]
& $5.01\times10^{-4}$
& $5.93\times10^{-3}$
& $11.8\times$
& under \\

16 & 2\,000
& $11.14\times$
& [10.27, 11.97]
& $11.17\times$ [10.90, 12.06]
& $2.65\times10^{-3}$
& $1.68\times10^{-2}$
& $6.3\times$
& under \\

16 & 5\,000
& $8.16\times$
& [7.34, 9.06]
& $8.48\times$ [6.90, 9.02]
& $2.10\times10^{-2}$
& $5.65\times10^{-2}$
& $2.7\times$
& over \\

16 & 10\,000
& $8.03\times$
& [6.43, 9.69]
& $8.78\times$ [6.80, 9.05]
& $8.51\times10^{-2}$
& $1.26\times10^{-1}$
& $1.5\times$
& over \\

16 & 20\,000
& $6.89\times$
& [5.47, 8.80]
& $7.42\times$ [4.99, 7.60]
& $2.77\times10^{-1}$
& $2.99\times10^{-1}$
& $1.1\times$
& over \\

\midrule

18 & 500
& $11.01\times$
& [9.91, 12.10]
& $11.52\times$ [9.93, 11.89]
& $1.49\times10^{-5}$
& $1.00\times10^{-3}$
& $67.5\times$
& under \\

18 & 1\,000
& $11.51\times$
& [10.76, 12.36]
& $11.49\times$ [10.79, 12.36]
& $2.97\times10^{-5}$
& $8.96\times10^{-4}$
& $30.2\times$
& under \\

18 & 2\,000
& $11.15\times$
& [10.09, 12.19]
& $11.50\times$ [10.01, 12.90]
& $7.84\times10^{-5}$
& $3.63\times10^{-3}$
& $46.3\times$
& under \\

18 & 5\,000
& $8.72\times$
& [8.12, 9.40]
& $8.85\times$ [8.15, 9.23]
& $4.35\times10^{-4}$
& $1.32\times10^{-2}$
& $30.5\times$
& under \\

18 & 10\,000
& $8.35\times$
& [7.76, 8.90]
& $8.17\times$ [8.02, 9.12]
& $3.75\times10^{-3}$
& $2.62\times10^{-2}$
& $7.0\times$
& over \\

18 & 20\,000
& $6.20\times$
& [5.22, 7.30]
& $6.16\times$ [4.79, 7.78]
& $2.06\times10^{-2}$
& $5.48\times10^{-2}$
& $2.7\times$
& over \\

\midrule

20 & 500
& $10.02\times$
& [9.01, 11.08]
& $10.38\times$ [8.39, 11.49]
& $1.02\times10^{-6}$
& $3.55\times10^{-4}$
& $349\times$
& under \\

20 & 1\,000
& $9.14\times$
& [7.78, 10.39]
& $9.75\times$ [8.70, 10.78]
& $2.79\times10^{-6}$
& $3.29\times10^{-4}$
& $118\times$
& under \\

20 & 2\,000
& $9.18\times$
& [8.03, 10.43]
& $9.24\times$ [7.95, 10.85]
& $2.50\times10^{-6}$
& $7.57\times10^{-4}$
& $303\times$
& under \\

20 & 5\,000
& $8.68\times$
& [8.21, 9.26]
& $8.65\times$ [7.96, 9.27]
& $2.70\times10^{-5}$
& $2.58\times10^{-3}$
& $95.4\times$
& under \\

20 & 10\,000
& $8.86\times$
& [7.85, 10.46]
& $8.52\times$ [7.59, 9.26]
& $1.79\times10^{-4}$
& $8.20\times10^{-3}$
& $45.8\times$
& under \\

20 & 20\,000
& $7.25\times$
& [6.71, 7.71]
& $7.43\times$ [7.20, 7.67]
& $1.23\times10^{-3}$
& $1.77\times10^{-2}$
& $14.5\times$
& over \\

\bottomrule
\end{tabular}
\end{table*}

\textit{Table Notes:}
Reported runtime overheads correspond to the ratio
$t_{\mathrm{BASS}}/t_{\mathrm{fixed}}$
computed independently for each random-circuit realization.
Configuration-wise overhead statistics are aggregated using geometric
means because runtime ratios are multiplicative and strongly right-skewed.
Reported confidence intervals correspond to 95\% bootstrap percentile
intervals of the geometric mean.
Median overheads are accompanied by interquartile ranges [25th, 75th
percentiles], which provide a robust estimate of configuration-to-configuration
variability.
The undersampled and oversampled classifications are determined using the
empirical crossover criterion
$\mathrm{PR}_Z \sim 2^{0.7N}$,
which estimates the effective participation-ratio scale separating strong-
and weak-compression regimes.

\subsection{Comparison with Tensor Network Methods}
\label{sec:mps_comparison}
Sparse-state simulation and tensor-network approaches exploit fundamentally distinct structural characteristics of quantum states to produce classical compression. Sparse-state approaches leverage computational-basis concentration: where only a tiny percentage of the Hilbert space has significant probability mass, the state can be efficiently represented by storing and updating only non-zero amplitudes. Tensor-network techniques, such as Matrix Product States (MPS), employ compression by using low bipartite entanglement, expressing the global state as network of low-rank tensors with bond dimensions reflecting subsystem correlations.

The optimal choice between the two paradigms is determined by the physical resource, most limiting for the target circuit. Fixed-basis sparse simulation and BASS use computational sparsity, which is measured globally by the PR $\text{PR}_Z$. MPS approaches use low entanglement, as measured by the bipartite Von-Neumann entropy $S$.

These resources are orthogonal. A product state in the $X$ basis has a maximal PR $\text{PR}_Z = 2^N$ but zero entanglement entropy $S = 0$. This makes MPS representation trivial, but significantly worse for sparse simulations. A highly localized cat state $(\ket{0\cdots0} + \ket{1\cdots1})/\sqrt{2}$ has $\text{PR}_Z = 2$ but $S = 1$~bit. It is easy for sparse methods but requires a nontrivial MPS bond dimension. BASS closes the gap by systematically pushing states towards lower $\text{PR}_Z$ via local basis adaptation without modifying $S$, widening the domain of effective sparse simulation.

We can formalise this complementary relationship by counting analytical parameters. For an MPS with bond dimension $\chi$, the state representation needs around $2N\chi^2$ parameters. BASS, with a truncation budget of $k$, takes around $2k$ parameters and an extra $\mathcal{O}(N)$ overhead for the local basis rotation matrices. As a result, sparse simulators naturally provide a more economical parametric representation when the required bond dimension exceeds the threshold:

\begin{equation}
\chi > \sqrt{\frac{k}{N}}.
\label{eq:crossover_mps}
\end{equation}

Rather than one paradigm completely dominating the other, traditional MPS techniques and BASS perform best under discrete regimes. By directly obtaining Schmidt spectra from the actual state evolution of our target ensembles, we can analytically constrain the MPS bond dimension. For shallow one-dimensional circuits and strictly area-law dynamics, $\chi$ remains tiny, making MPS maximally efficient. However, for deeper two-dimensional geometries or structurally sophisticated analyses where the required bond dimension in one-dimensional tensor-network representations becomes prohibitively large, BASS provides a viable alternative, assuming the state exhibits latent computational-basis sparsity.

\begin{figure*}[!ht]
\centering
\includegraphics[width=0.75\linewidth]{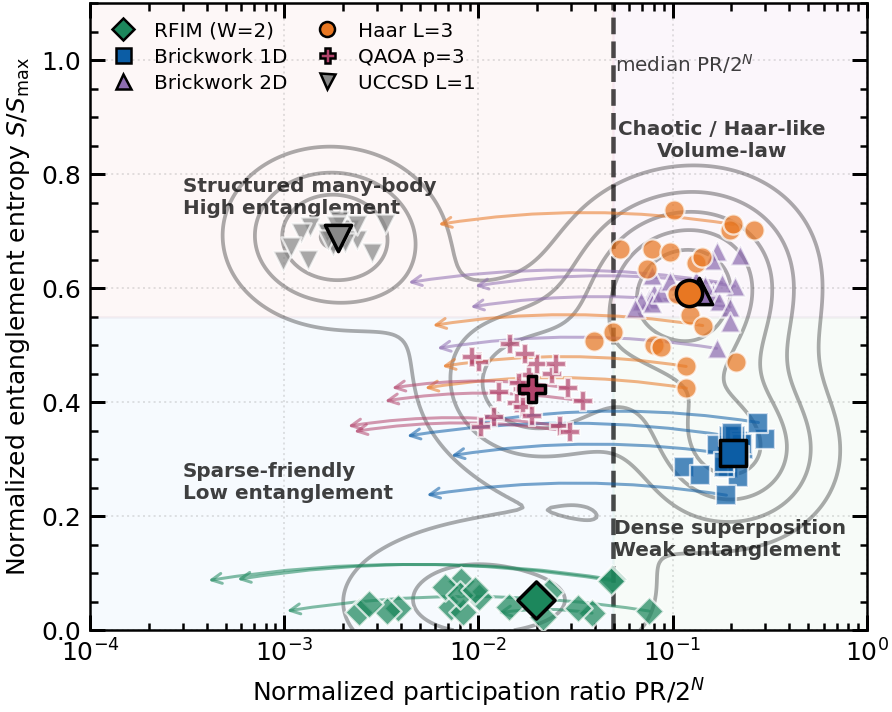}
\caption{
\textbf{Sparsity-entanglement phase diagram for representative quantum circuit families.}
Results shown for $N=16$ qubits with truncation budget $k=4096$.
The horizontal axis measures normalized PR
$\text{PR}/2^N$, quantifying Hilbert-space delocalization, while the
vertical axis shows normalized bipartite entanglement entropy
$S/S_{\max}$. Distinct circuit families populate different regions of the diagram:
RFIM ($W=2$) circuits occupy the sparse-friendly, low-entanglement regime;
UCCSD ($L=1$) circuits sit in the highly entangled but heavily localized upper-left quadrant;
QAOA ($p=3$) exhibits intermediate structural growth in both resources;
while Haar-random ($L=3$) and Brickwork (1D and 2D) circuits rapidly approach the chaotic volume-law regime in the
upper-right. Smooth contour lines indicate the empirical density distribution of
sampled states. The vertical dashed line marks the median $\text{PR}/2^N$ of the exact states.
Arrows show how BASS shifts states leftward (lower effective $\text{PR}$) without changing the global entanglement entropy.}
\label{fig:phase_diagram}
\end{figure*}

Figure~\ref{fig:phase_diagram} places several circuit families on a sparsity--entanglement plane: normalized PR $\text{PR}/2^N$ (delocalization in the $Z$ basis) versus normalized half-chain entropy $S/S_{\max}$.

TFIM and RFIM sit low on the plot—weak entanglement and fairly localized amplitudes—so both sparse paradigms do fine. UCCSD is different: $S/S_{\max} \approx 0.69$ but $\text{PR}/2^N \approx 1.87 \times 10^{-3}$, so it lands in the upper-left corner where entropy is high yet the state still occupies a thin slice of Hilbert space. QAOA sits between these extremes. Deep Haar-random and brickwork (1D/2D) circuits drift to the upper right, where entanglement and $Z$-basis spread are both large—the usual worst case for sparse truncation.

The vertical line $\text{PR}/2^N \approx k/2^N$ separates oversampled states (left) from undersampled ones (right). Past that line the effective support exceeds budget~$k$ and fixed-basis fidelity collapses.

BASS shifts points leftward by rotating each qubit into its local RDM eigenbasis before truncation, lowering effective $\text{PR}_Z$ without changing $S$. That moves some upper-right states back into the tractable wedge at the same~$k$.

\section{\label{sec:discussion}Discussion}

\paragraph*{Physical mechanism: why basis adaptation works.} The theoretical and numerical results together explain the adaptive-basis
mechanism. Lemma~\ref{lem:topk} establishes that top-$k$ selection is
optimal for a single truncation step within any fixed basis.
Lemma~\ref{lem:diversity} shows that entanglement-aware diversity rules pay
for their diversity by lowering the retained probability at the same budget.
These results separate support selection from representation choice: once the
single-step fixed-basis update is specified, adaptive basis selection becomes
the natural route to higher fidelity with fixed memory. Theorem~\ref{thm:stationarity}
then identifies the single-qubit RDM eigenbasis as a stationary local choice,
in direct analogy to natural orbitals.

The physical intuition is straightforward: local basis rotations can concentrate
probability mass into fewer effective amplitudes. For each qubit, the dominant
eigenvector of its one-qubit RDM defines a locally preferred axis. If this axis
is aligned with the computational $Z$ basis, the fixed basis is already a good
local representation. If it is tilted away from $Z$, an adaptive product basis
can reduce the number of computational-basis strings needed to retain the same
probability weight. This concentration is quantified globally by the
PR, and the largest gains occur when $\text{PR}_Z \gg k$.

\begin{figure}[!hbt]
\centering
\includegraphics[width=\columnwidth]{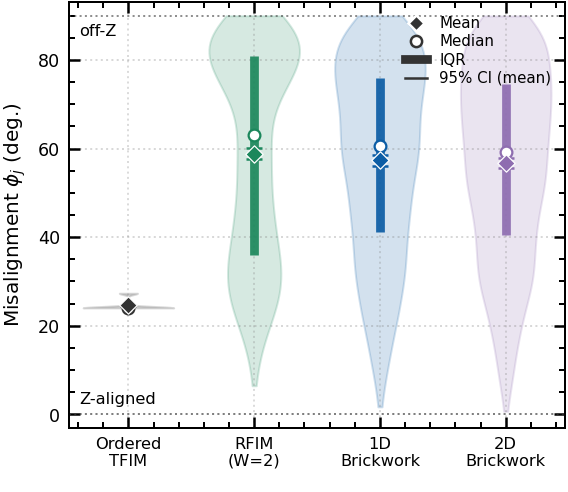}
\caption{\textbf{Per-qubit basis-misalignment diagnostic.}
\textbf{Left:} Distribution of single-qubit RDM basis-misalignment angles $\phi_j$ for
$N=12$ circuits. Diamonds mark means, open circles mark medians, and thick
vertical bars show the interquartile range (IQR). The thin vertical lines represent the 95\% confidence interval of the mean.
\textbf{Right:} Scatter plot of the computational-basis PR $\text{PR}_Z$ versus the mean misalignment angle $\bar{\phi}$ for individual circuit realizations (small markers), with large circles denoting the ensemble averages. Ordered TFIM is
closest to the computational basis, with mean $\bar{\phi}\approx 25^\circ$. RFIM ($W=2$) is significantly more tilted
($\bar{\phi}\approx 59^\circ$), while 1D and 2D brickwork circuits are broadly
off-$Z$, with $\bar{\phi}\approx 58^\circ$ and $\bar{\phi}\approx 57^\circ$, respectively.}
\label{fig:rotation_angles}
\end{figure}

We diagnose the local mechanism using the per-qubit basis-misalignment angle
\(
\phi_j = \min(\theta_j, 180^\circ-\theta_j),
\)
where $\theta_j$ is the polar angle between the dominant RDM eigenvector and
$\ket{0}$. This definition treats eigenvectors near either $\ket{0}$ or
$\ket{1}$ as $Z$-aligned. Thus $\phi_j=0^\circ$ means the locally preferred
basis is computational, while $\phi_j=90^\circ$ means it lies in the equatorial
plane and is maximally off-$Z$.

Figure~\ref{fig:rotation_angles} separates locally structured and delocalized regimes. Ordered TFIM has a narrow distribution near
$\bar{\phi}\simeq25^\circ$ and a very small computational-basis PR,
$\text{PR}_Z\simeq3.0$. RFIM exhibits a significantly larger local tilt, with a mean $\bar{\phi}\simeq59^\circ$ and median $\phi\simeq63^\circ$, corresponding to an increased $\text{PR}_Z\simeq1.8\times10^2$. The brickwork ensembles also display broad, highly
off-$Z$ misalignment distributions with means near $57^\circ\text{--}58^\circ$, but are accompanied by much larger
computational-basis PRs:
$\text{PR}_Z\simeq1.16\times10^3$ for 1D brickwork and
$\text{PR}_Z\simeq7.73\times10^2$ for 2D brickwork.

Theorem~\ref{thm:stationarity} guarantees stationarity, not global minimality.
For entangled states, product-unitary rotations cannot remove inter-qubit
correlations. The diagnostic should therefore be read as a local mechanism
test: adaptive rotations remove avoidable one-qubit basis misalignment before
top-$k$ support selection is applied. The broad brickwork distributions show
why adaptive basis selection can substantially improve retained probability at
fixed sparse budget, while the remaining large PRs show why
single-qubit rotations alone cannot fully compress highly entangled states.

BASS provides exponentially growing advantage for circuits that generate
substantial $Z$-basis delocalization (brickwork, Haar-random, 2D brickwork),
modest but practically useful improvement for intermediate delocalization
(RFIM), and correctly reverts to fixed-basis behavior for circuits that are
naturally $Z$-sparse (TFIM, UCCSD). The $\PR_Z$ crossover criterion is a
reliable, computable predictor of which regime applies.

\paragraph*{Product-unitary limitations and extensions.}
Single-qubit rotations cannot remove inter-qubit entanglement. For a Bell
state $\ket{\Phi^+}$, every qubit's RDM is maximally mixed
($\lambda = 1/2$), so no product-unitary rotation can reduce the
PR. This is a fundamental limitation of the single-qubit
BASS approach.

Block-unitary extensions that perform Schmidt decompositions over contiguous
qubit blocks of size~$b$ could capture multi-body correlations at cost
$\mathcal{O}(\log k + 2^{2b}k)$ per block. The $b=2$ case is already partially
realized in the brick-wall sweep of Sec.~\ref{subsec:2qrdm}. Systematic
investigation of block-BASS strategies, including the optimal tradeoff
between block size and computational overhead, is an important direction for
future work.

\paragraph*{Connection to natural-orbital theory. }
The connection between BASS and L\"owdin's natural
orbitals~\cite{lowdin1955} is a useful structural analogy. In quantum
chemistry, natural orbitals are eigenvectors of the one-body reduced density
matrix, and the resulting CI expansion is often much more compact. BASS applies
the analogous local idea to quantum circuit simulation: the single-qubit RDM
eigenbasis supplies a stationary product-basis update for the sparse
representation, and the do-no-harm guard accepts the update only when it
reduces the PR. The single-qubit RDM is therefore the
circuit-simulation analog of the one-body density matrix, while the local
eigenbasis rotation is the analog of a natural-orbital transformation.

Ratini et al.~\cite{ratini2024} recently showed that the natural-orbital
basis also improves mutual-information sparsity in quantum chemistry,
providing an independent information-theoretic motivation. BASS adapts this
principle from static ground-state calculations to dynamic circuit simulation,
applying local RDM rotations online at each optimization step with the same
asymptotic memory cost.

Orbital-optimization ideas from quantum chemistry—complete active spaces~\cite{roos1980}, natural-orbital truncations~\cite{tubman2016,tubman2020}, adaptive CI—may transfer to circuit simulation with little change in the one-body RDM object.

\paragraph*{Implications for the quantum advantage boundary.} The classical simulation boundary—where existing methods stop matching circuit outputs—is central to reading quantum-advantage experiments~\cite{preskill2018,arute2019,kim2023}. Better classical tools tighten benchmarks and narrow the hard regimes. BASS works on computational-basis concentration, which is largely independent of the entanglement structure MPS exploits and the Clifford structure stabilizer methods exploit.

For circuits where $\PR_Z$ grows exponentially with $N$ (brickwork, 2D
brickwork, Haar-random)---precisely those where random-circuit
anticoncentration~\cite{dalzell2022} and entanglement growth under random
unitaries~\cite{nahum2017,chan2018} drive the state away from any fixed
basis---BASS provides exponentially growing advantage over
fixed-basis methods: the adaptive-basis/fixed-basis fidelity ratio scales as
$\sim 2^{0.7N}$ for brickwork circuits (Table~\ref{tab:fixedk_scaling}).

The $\PR_Z$ crossover rule is easy to compute from the sparse support and
predicts when basis adaptation pays off. On our ensembles the fidelity ratio
sharpens near $k/\PR_Z \approx 1$ (Fig.~\ref{fig:crossover}) and lines up across
circuit families once $\PR_Z$ is estimated from the retained amplitudes.

A hybrid simulator could call BASS when $\PR_Z$ is large and MPS when
entanglement is low, using Fig.~\ref{fig:phase_diagram} as a coarse selector.
The natural-orbital link and block extensions are natural next steps toward mixing sparse
and tensor-network representations.

\section{\label{sec:conclusions}Conclusions and Outlook}
Three theoretical results characterize the part of sparse-state simulation most
relevant to this work. Lemma~\ref{lem:topk} closes the single-step support-selection problem:
top-$k$ by amplitude is optimal in any fixed basis, up to boundary ties.
Lemma~\ref{lem:diversity} eliminates the most natural class of
alternatives: diversity-preserving strategies are provably suboptimal at
a fixed budget. Theorem~\ref{thm:stationarity} identifies a principled adaptive
lever: the single-qubit RDM eigenbasis is a stationary point of the
participation-ratio objective, with controlled error that vanishes for weakly
entangled qubits. Fixed bases fix the greedy support rule; adaptive local bases are the remaining knob for concentrating weight at fixed~$k$.

BASS implements the
natural-orbital principle imported from quantum chemistry's 70-year-old
theory of L\"owdin natural orbitals~\cite{lowdin1955}-for quantum circuit
simulation in the same spirit. The
efficient $\mathcal{O}(Nk)$ RDM computation per optimization call, hash-table-based
updates, coordinate descent with do-no-harm guard, and diagonal-gate fast path
make BASS practical at the same memory cost as the fixed-basis method.

The ratio $\mathrm{PR}_Z$ predicts when BASS beats fixed-basis truncation in our tests up to $N=20$. At fixed $k=8192$, the brickwork fidelity ratio grows with $N$ ($\approx 5.1\times$ at $N=18$, $\approx 16.1\times$ at $N=20$). Under heavy compression ($k=100$, RFIM, $N=20$), BASS reaches $F \approx 5.2 \times 10^{-2}$ versus $F \approx 1.2 \times 10^{-2}$ for the fixed basis. On circuits that are already $Z$-sparse, the adaptive trigger switches off and fidelity matches the fixed-basis baseline. Wall-clock cost stays within roughly one order of magnitude ($\sim 6$–$13\times$ in the undersampled regime). Compared to MPS, BASS uses fewer parameters when $\chi > \sqrt{k/N}$, a regime common in 2D or nonlocal circuits (Sec.~\ref{sec:mps_comparison}).

Natural next steps include block rotations (cost $\sim 2^{2b}$ per block of size~$b$), hybrids that call BASS when $\PR_Z$ is large and MPS when entanglement is low, and block sizes chosen from local entropy estimates. Methods from quantum chemistry—complete active spaces, adaptive CI, natural-orbital hierarchies~\cite{lowdin1955,ratini2024,roos1980}—may port cleanly because the same one-body RDM object appears in both settings.

BASS adds adaptive local bases to the sparse-simulation toolbox alongside tensor networks, stabilizer methods, and fixed-basis truncation. It targets computational-basis concentration, a resource orthogonal to entanglement-area laws and Clifford structure, and the $\PR_Z$ rule gives a simple check for when that resource is worth spending simulation time on.

\section*{Code and data availability}

A reference implementation of BASS, the fixed-basis sparse comparison method, and the
analysis scripts used to generate the figures and tables will be made
available at \url{https://github.com/Skill0issue/BASS-Basis-Adapative-Sparse-Simulations-}. The benchmark data needed
to reproduce the numerical claims will be released with the same repository or
through a persistent data repository before publication.

\section{Acknowledgment}
AK acknowledges support from the INSPIRE scholarship provided by the Department of Science and Technology (DST), Government of India. SB acknowledges seed grant funding from IIT Hyderabad. BS acknowledges a fellowship granted by the Infosys Foundation.

\newpage

\appendix

\makeatletter
\@addtoreset{figure}{section}
\def\thefigure{\thesection.\arabic{figure}}
\makeatother

\section{\label{app:proofs}Extended Proofs}

\subsection{Detailed derivation of the gradient expression (Theorem~\ref{thm:stationarity}, Step~1)}

We provide the full derivation of Eq.~\eqref{eq:grad_pr}. Write the
inverse PR as
\begin{equation}
\mathcal{I} = \sum_{x}|\phi_x|^4,
\end{equation}
where $\ket{\phi} = (\bigotimes_j U_j^\dagger)\ket{\psi}$. Under a
small rotation $U_j(\theta) = e^{-i\theta G_j} U_j$, the rotated amplitudes
become
\begin{equation}
\phi_x(\theta) = \phi_x - i\theta \sum_{a} (G_j)_{b_j(x),a}\,
\phi_{x \oplus (a \oplus b_j(x))\cdot 2^j} + O(\theta^2),
\end{equation}
where $b_j(x)$ is the $j$-th bit of $x$. Differentiating the inverse
PR $\text{IPR} = \sum_x |\phi_x|^4$:
\begin{align}
&\frac{d\,\text{IPR}}{d\theta}\bigg|_{\theta=0}
= 4\,\text{Re}\!\left(
\sum_x |\phi_x|^2 \phi_x^* \frac{d\phi_x}{d\theta}\bigg|_0
\right) \nonumber \\
&= -4\,\text{Re}\!\left(
i \sum_{b,a} (G_j)_{ba} \sum_{\bar{x}_j}
|\phi_{(\bar{x}_j,b)}|^2\,
\phi^*_{(\bar{x}_j,b)}\,\phi_{(\bar{x}_j,a)}
\right).
\end{align}
Identifying the $2 \times 2$ matrix
$(\Lambda_j)_{ba} = \sum_{\bar{x}_j} |\phi_{(\bar{x}_j,a)}|^2\,
\phi^*_{(\bar{x}_j,a)}\,\phi_{(\bar{x}_j,b)}$
and using $d\PR/d\theta = -(1/\text{IPR}^2)\, d\text{IPR}/d\theta$,
we recover Eq.~\eqref{eq:grad_pr} for the IPR gradient and the corresponding
participation-ratio gradient with the positive prefactor
$1/\text{IPR}^2$. This prefactor does not affect the stationarity condition.
The gradient vanishes for all traceless Hermitian generators $G_j$ if and only
if $(\Lambda_j)_{01} = 0$.

\subsection{Covariance decomposition and the role of entanglement coherence}

We elaborate on Step~2 of the proof. The off-diagonal element is
\begin{equation}
(\Lambda_j)_{01} = \sum_{\bar{x}_j} |\phi_{(\bar{x}_j,0)}|^2\,
\phi^*_{(\bar{x}_j,0)}\,\phi_{(\bar{x}_j,1)}.
\end{equation}
Define the weight $w_{\bar{x}} = |\phi_{(\bar{x},0)}|^2$ and the coherence
$c_{\bar{x}} = \phi^*_{(\bar{x},0)}\,\phi_{(\bar{x},1)}$. Since $U_j$
diagonalizes $\rho_j$:
\begin{equation}
(\rho_j)_{01} = \sum_{\bar{x}_j}
\phi^*_{(\bar{x}_j,0)}\,\phi_{(\bar{x}_j,1)} = \sum_{\bar{x}} c_{\bar{x}} = 0.
\end{equation}
Hence the coherences have zero mean. This allows us to write
\begin{equation}
(\Lambda_j)_{01} = \sum_{\bar{x}} w_{\bar{x}}\, c_{\bar{x}}
= \sum_{\bar{x}}(w_{\bar{x}} - \bar{w})\, c_{\bar{x}},
\end{equation}
where $\bar{w} = D^{-1}\sum_{\bar{x}} w_{\bar{x}}$ and the equality uses
$\sum_{\bar{x}} c_{\bar{x}} = 0$. This covariance structure reveals
that the gradient vanishes when:
\begin{enumerate}
\item The coherences $c_{\bar{x}}$ are identically zero (product state), or
\item The weight distribution $w_{\bar{x}}$ is uniform
($w_{\bar{x}} = \bar{w}$ for all $\bar{x}$), or
\item The correlation between weight fluctuations and coherences
vanishes by cancellation.
\end{enumerate}
Case~(1) is the product-state limit. Case~(2) corresponds to a maximally
entangled state with flat amplitude distribution. Case~(3) can occur for
specially structured entangled states. In all three cases, the RDM
eigenbasis is exactly stationary.

\subsection{Tightness of the Cauchy-Schwarz bound}

The bound in Eq.~\eqref{eq:Rj} is not generically tight. By the
Cauchy-Schwarz inequality:
\begin{equation}
|\text{Cov}(w, c)|^2 \le \text{Var}(w) \cdot \sum_{\bar{x}} |c_{\bar{x}}|^2.
\end{equation}
Equality requires $w_{\bar{x}} - \bar{w} = \mu\, c_{\bar{x}}^*$ for some
constant $\mu$, i.e., the weight fluctuations must be proportional to
the complex conjugate of the coherences. For generic entangled states this
condition is not satisfied, and the bound overestimates the true gradient
residual. We further bound
$\sum_{\bar{x}} |c_{\bar{x}}|^2 \le \sum_{\bar{x}}
|\phi_{(\bar{x},0)}|^2 |\phi_{(\bar{x},1)}|^2$
by the Cauchy-Schwarz inequality on each term, yielding
Eq.~\eqref{eq:Rj}. For weakly entangled states
($\lambda_{\min}(\rho_j) \ll 1$), the factor
$\sum_{\bar{x}} |\phi_{(\bar{x},0)}|^2 |\phi_{(\bar{x},1)}|^2
\le \lambda_{\min}(\rho_j)$
scales linearly with the minor eigenvalue, ensuring rapid decay of the
bound as $S_j \to 0$.

\subsection{Global optimality: reduction to a known open problem}
Whether the RDM eigenbasis is a \emph{global} minimum of the participation
ratio over all product unitaries remains open. We show that this question
reduces to a known problem in multilinear algebra. The PR
under a product unitary $U = \bigotimes_j U_j$ is
\begin{equation}
\PR(U) = \frac{1}{\sum_x |\langle x| U^\dagger |\psi\rangle|^4}.
\end{equation}
Minimizing $\PR(U)$ is equivalent to maximizing the 4-norm of the
coefficient vector $\{\langle x| U^\dagger |\psi\rangle\}$ over
product unitaries. This is a special case of the problem of maximizing
tensor $p$-norms under product-unitary transformations, which is known
to be NP-hard in general~\cite{hillar2013} but may admit efficient
solutions for the structured case $p=4$ with single-qubit unitaries.
Our empirical evidence-zero reverts for brickwork/Haar circuits-suggests
that the local optimum found by coordinate descent coincides with the
global optimum in practice.

\subsection{Proof of Eq. \ref{eq:dominant_eigen}}
\label{sec:dominant_eigen}
We seek to find the dominant eigenvector of a $2 \times 2$ Hermitian matrix (e.g., a single-qubit density matrix) analytically. By avoiding iterative numerical eigenvalue routines, we guarantee a predictable, constant computational cost $O(1)$ per qubit during simulation loops.

Consider a general $2 \times 2$ Hermitian matrix representing the single-qubit state:
\begin{equation}
\rho_j = \begin{pmatrix} a & b \\ b^* & d \end{pmatrix}
\end{equation}
where $a, d \in \mathbb{R}$ and $b \in \mathbb{C}$. To isolate the spectrum, we construct the characteristic polynomial $\det(\rho_j - \lambda I) = 0$, which yields the quadratic equation:
\begin{equation}
\lambda^2 - (a+d)\lambda + (ad - |b|^2) = 0
\end{equation}
We define the matrix trace center (mean value) as $m = \frac{a+d}{2}$ and the half-difference as $\tau = \frac{d-a}{2}$. The eigenvalues evaluate exactly to $\lambda_\pm = m \pm \Delta$, where the scale parameter $\Delta$ is defined as:
\begin{equation}
\Delta = \sqrt{\tau^2 + |b|^2}
\end{equation}
Because $\Delta \ge 0$, the dominant eigenvalue corresponding to the larger spectral weight is uniquely identified as:
\begin{equation}
\lambda_\text{max} = m + \Delta
\end{equation}
To find the corresponding dominant eigenvector $\mathbf{v}_\text{max} = \begin{pmatrix} x \\ y \end{pmatrix}$, we solve the linear system $(\rho_j - \lambda_\text{max} I)\mathbf{v}_\text{max} = 0$. Extracting the first row gives the constraint:
\begin{equation}
(a - \lambda_\text{max}) x + b y = 0
\end{equation}
Substituting $\lambda_\text{max} = \frac{a+d}{2} + \Delta$ into the expression simplifies the first coefficient to $a - \lambda_\text{max} = -\tau - \Delta$. This reduces our constraint to:
\begin{equation}
(-\tau - \Delta) x + b y = 0
\end{equation}
Isolating $x$ yields $x = \frac{b}{\tau + \Delta} y$. Choosing a convenient non-zero scalar scaling by setting $y = \tau + \Delta$ yields the unnormalized dominant eigenvector:
\begin{equation}
\mathbf{v}_\text{unnorm} = \begin{pmatrix} b \\ \tau + \Delta \end{pmatrix}
\end{equation}
To satisfy the unit-norm criteria required for valid physical quantum states, we compute the vector norm $\|\mathbf{v}\| = \sqrt{|b|^2 + (\tau + \Delta)^2}$. Normalizing $\mathbf{v}_\text{unnorm}$ gives the final exact expression for the dominant eigenvector:
\begin{equation}
\mathbf{v}_\text{max} = \frac{1}{\sqrt{|b|^2 + (\tau+\Delta)^2}} \begin{pmatrix} b \\ \tau+\Delta \end{pmatrix}
\end{equation}
This analytical solution relies strictly on a fixed sequence of primitive algebraic operations (additions, multiplications, and a single real square root), preserving a bounded, constant execution runtime across scaling system profiles.

\subsection{Derivation of the 2-Qubit RDM}

This section details the derivation of the two-qubit RDM from an $n$-qubit pure state by tracing out the rest of the system. It explains the conceptual and computational motivations behind splitting basis vectors into block and rest indices and provides a condensed step-by-step example for a 3-qubit system. Interpreting the notation, we have an $n$-qubit pure state
\(
\lvert\psi\rangle = \sum_i \alpha_i \lvert x_i\rangle,
\),
where each computational basis vector $\lvert x_i\rangle$ can be split into three parts relative to qubits $q_1,q_2$: first, $s(x_i)$, the 2-bit string (00, 01, 10, 11) that is the state of qubits $q_1,q_2$; and second, $r(x_i)$, the remaining $n-2$ qubits (the ``rest'' of the system).

The two-qubit RDM$\rho_{q_1 q_2}$ is obtained by tracing out (summing over) the rest of the system. Starting from the full density matrix, the pure state is $\rho = \lvert\psi\rangle\langle\psi\rvert = \sum_{i,i'} \alpha_i \overline{\alpha_{i'}}\, \lvert x_i \rangle\langle x_{i'}\rvert$. Each basis state $\lvert x_i\rangle$ is written as a tensor product of the subsystem (qubits $q_1,q_2$) and the rest, meaning $\lvert x_i\rangle = \lvert s(x_i)\rangle \otimes \lvert r(x_i)\rangle$. So,
\begin{equation}
\begin{split}
\rho = \sum_{i,i'} \alpha_i \overline{\alpha_{i'}}\,
&\big( \lvert s(x_i)\rangle\langle s(x_{i'})\rvert \big) \\
&\otimes \big( \lvert r(x_i)\rangle\langle r(x_{i'})\rvert \big).
\end{split}
\end{equation}

Defining the RDMon qubits $q_1,q_2$ as the partial trace over the rest yields $\rho_{q_1 q_2} = \mathrm{Tr}_{\text{rest}}(\rho)$. In components in the computational basis $\{\lvert s\rangle\}$ for the two qubits, $[\rho_{q_1 q_2}]_{s,s'} = \langle s \rvert\, \rho_{q_1 q_2} \,\lvert s' \rangle$. Using the definition of the partial trace, $\rho_{q_1 q_2} = \sum_r \langle r \rvert\, \rho \, \lvert r \rangle$, where $\{\lvert r\rangle\}$ runs over a basis of the remaining qubits. Therefore, $[\rho_{q_1 q_2}]_{s,s'} = \sum_r \langle s,r \rvert\, \rho \, \lvert s',r \rangle$.

Inserting the expansion of $\rho$ into this component equation gives
\begin{equation}
\begin{split}
[\rho_{q_1 q_2}]_{s,s'}
= \sum_r \sum_{i,i'} &\alpha_i \overline{\alpha_{i'}}\,
\langle s,r \vert s(x_i), r(x_i) \rangle \\
&\times \langle s(x_{i'}), r(x_{i'}) \vert s',r \rangle.
\end{split}
\end{equation}
The inner products factor into Kronecker deltas: $\langle s,r \vert s(x_i), r(x_i) \rangle = \delta_{s,\,s(x_i)}\, \delta_{r,\,r(x_i)}$ and $\langle s(x_{i'}), r(x_{i'}) \vert s',r \rangle = \delta_{s(x_{i'}),\,s'}\, \delta_{r(x_{i'}),\,r}$. Thus,
\begin{equation}
\begin{split}
[\rho_{q_1 q_2}]_{s,s'}
= \sum_r \sum_{i,i'} &\alpha_i \overline{\alpha_{i'}}\,
\delta_{s,\,s(x_i)}\, \delta_{r,\,r(x_i)}\, \\
&\times \delta_{s(x_{i'}),\,s'}\, \delta_{r(x_{i'}),\,r}.
\end{split}
\end{equation}

Performing the sum over $r$ enforces $r = r(x_i) = r(x_{i'})$, meaning we only keep terms with the same rest configuration:
\begin{equation}
\begin{split}
[\rho_{q_1 q_2}]_{s,s'}
= \sum_{i,i' :\, r(x_i) = r(x_{i'})} &\alpha_i \overline{\alpha_{i'}}\,
\delta_{s,\,s(x_i)}\, \\
&\times \delta_{s(x_{i'}),\,s'}.
\end{split}
\end{equation}
Using indicator functions $\mathbf{1}[\text{condition}]$ instead of Kronecker deltas, this becomes:
\begin{equation}
\begin{split}
[\rho_{q_1 q_2}]_{s,s'}
= \sum_{(i,i'):\, r(x_i)=r(x_{i'})} &\alpha_i \overline{\alpha_{i'}}\,
\mathbf{1}[s(x_i)=s]\, \\
&\times \mathbf{1}[s(x_{i'})=s'].
\end{split}
\end{equation}

There are two separate motivations for this partitioning strategy. Conceptually, it matches the mathematical definition of the partial trace. By writing the full density matrix in the computational basis and expressing $x_i$ as a combination of the rest and pair indices, you obtain exactly a sum over terms with the same rest index.

Computationally, this approach enables efficient grouping and summation. From an algorithmic point of view, one can treat the rest index $r$ like a key in a hash or sorting operation. States with the same $r$ are grouped together. Within each group, the only remaining variation is in the two-qubit bits $(q_1,q_2)$, which correspond to the four possible values of $s\in\{0,1,2,3\}$. Then, for each group $r$, one can collect all amplitudes into four buckets, one per $s$, and form the contributions to the $4\times 4$ matrix $[\rho_{q_1 q_2}]_{s,s'}$ from that group. This leads to the reported complexity: $\mathcal{O}(\log k)$ per pair for the sort/key operations and $\mathcal{O}(16)$ operations to fill a $4\times 4$ matrix per group, giving an overall $\mathcal{O}(Nk \log k)$ cost per sweep.

To illustrate this with a concrete example, consider a 3-qubit system (qubits 0,1,2) and a state
\[
|\psi\rangle = \alpha_{000} |000\rangle + \alpha_{010} |010\rangle + \alpha_{100} |100\rangle + \alpha_{110} |110\rangle,
\]
where $|x_2 x_1 x_0\rangle$ is the ordering convention. Suppose we want the two-qubit RDM for qubits $q_1 = 0$ and $q_2 = 1$. Let us denote the corresponding states as $x_0, x_1, x_2$, and $x_3$ based on their order of appearance.

For each state $x_i$, the block index is $s(x) = b_0(x) \cdot 2 + b_1(x)$ and the rest index $r(x)$ is simply the state of qubit 2. Grouping by the rest index yields two sets. The group for $r=0$ contains states $x_0$ ($s=0$) and $x_1$ ($s=1$). The group for $r=1$ contains states $x_2$ ($s=0$) and $x_3$ ($s=1$).

To compute a specific matrix element such as $[\rho_{01}]_{0,1}$, the formula dictates summing over pairs $(i,i')$ with the same rest index where $s(x_i)=0$ and $s(x_{i'})=1$. For $r=0$, this pairs state $x_0$ with $x_1$, contributing $\alpha_{000}\overline{\alpha_{010}}$. For $r=1$, this pairs state $x_2$ with $x_3$, contributing $\alpha_{100}\overline{\alpha_{110}}$. Thus, the final value is
\[
[\rho_{01}]_{0,1} = \alpha_{000} \overline{\alpha_{010}} + \alpha_{100} \overline{\alpha_{110}}.
\]
This exactly matches what you get by doing a direct partial trace over qubit 2. Similarly, the diagonal elements collect pairs with the same rest and identical $s$ values, yielding $[\rho_{01}]_{0,0} = |\alpha_{000}|^2 + |\alpha_{100}|^2$ and $[\rho_{01}]_{1,1} = |\alpha_{010}|^2 + |\alpha_{110}|^2$. The remaining non-zero elements follow directly by complex conjugation.

Thus, $s$ is a compact way to encode the pair of bits $(q_1,q_2)$ into a single index $0,1,2,3$, mapping directly to the 4 levels of the two-qubit subsystem. The index $r(x)$ encodes the configuration of all other qubits, making grouping by $r$ the exact operation needed for a partial trace over those qubits. The computational formula directly implements the mathematical definition of the reduced density matrix, mapping neatly into a highly efficient group-by-rest-index algorithmic operation.

\subsection{Empirical support for top-\texorpdfstring{$k$}{k} optimality}
\label{app:topk_vs_randk}
\begin{figure}[!hbt]
\centering
\includegraphics[width=\linewidth]{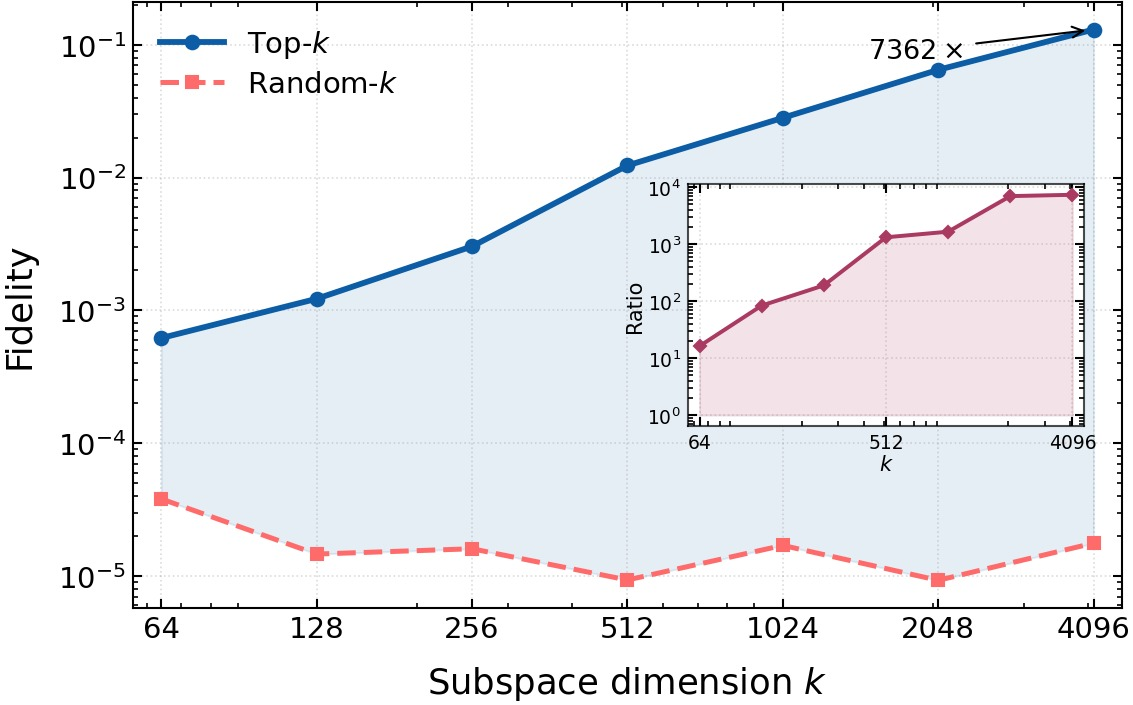}
\caption{\textbf{Empirical comparison of top-$k$ and random-$k$ truncation in fixed-basis simulation.}
Fidelity as a function of retained subspace dimension $k$ for
$N=16$ and circuit depth $L=3$. Top-$k$ truncation exhibits
stable monotonic fidelity growth with increasing $k$, while
random-$k$ truncation remains near the stochastic noise floor
across all tested subspace dimensions. The shaded region
marks the growing gap between the two strategies.
The inset shows the amplification ratio
$F_{\mathrm{top}\text{-}k}/F_{\mathrm{random}\text{-}k}$,
which grows superlinearly with $k$ and reaches
$\sim 7.4\times10^{3}$ at $k=4096$. These results provide
direct empirical support for Lemma~\ref{lem:topk}: structured retention
of the largest amplitudes retains more
probability mass far more efficiently than random support
selection.}
\label{fig:topk_vs_randk}
\end{figure}
We empirically validate Lemma~\ref{lem:topk} in Fig.~\ref{fig:topk_vs_randk}. For circuits of depth $L=3$ with $N=16$ qubits and truncation parameter $k=2048$ (30 independent trials), top-$k$ truncation yields a fidelity more than $10^4$ times higher than that obtained by uniform random-$k$ selection. This performance gap remains substantial as the depth increases: the top-$k$ method maintains a win rate exceeding 60\% relative to random-$k$ up to depth $L=15$, at which point the fidelities of both methods converge toward the noise floor.

\section{\label{app:natural_orbital}Natural-Orbital Theory Connection:
Extended Analysis}

\subsection{Formal correspondence}

Table~\ref{tab:NO_correspondence} establishes a term-by-term mapping
between the BASS framework and L\"owdin's natural-orbital
theory~\cite{lowdin1955}.

\begin{table}[!hbt]
\caption{\label{tab:NO_correspondence}Correspondence between BASS and
natural-orbital theory in quantum chemistry.}
\begin{ruledtabular}
\begin{tabular}{ll}
Quantum chemistry & BASS \\
\hline
\shortstack[l]{One-body RDM
$\gamma(\mathbf{r},\mathbf{r}')$}
& Single-qubit RDM $\rho_j$ \\
\shortstack[l]{Natural orbitals \\
(eigenvectors of $\gamma$)}
& RDM eigenbasis $U_j$ \\
Occupation numbers $n_i$
& Eigenvalues $\lambda_j^{(0)}, \lambda_j^{(1)}$ \\
CI expansion coefficients
& Sparse amplitudes $\alpha_i$ \\
\shortstack[l]{Maximal CI coefficient \\
in NO basis}
& Stationary local $\PR$ reduction \\
Full CI (exact)
& Exact state vector ($k = 2^N$) \\
Truncated CI
& Sparse-state simulation ($\mathcal O(k)$) \\
CAS (complete active space)
& Top-$k$ support set $\mathcal{S}^*$ \\
\shortstack[l]{Orbital optimization \\
(CASSCF)}
& \shortstack[l]{Basis optimization \\(coordinate descent)} \\
\end{tabular}
\end{ruledtabular}
\end{table}

The correspondence is structural rather than literal. In quantum chemistry,
the wavefunction is
expanded in Slater determinants built from molecular orbitals:
$|\Psi\rangle = \sum_I C_I |\Phi_I\rangle$. The natural orbitals are
obtained by diagonalizing the one-body reduced density matrix
$\gamma_{pq} = \langle\Psi| a_p^\dagger a_q |\Psi\rangle$, often producing a
more compact expansion with respect to the number of retained configurations.

In BASS, the wavefunction is expanded in computational basis states:
$|\psi\rangle = \sum_x \alpha_x |x\rangle$. The single-qubit RDM
$\rho_j = \text{Tr}_{\neq j}|\psi\rangle\langle\psi|$ is the direct
analog of the one-body RDM restricted to a single site. The eigenbasis
rotation $U_j$ is the analog of the natural-orbital transformation for
that site. Theorem~\ref{thm:stationarity} is the circuit-simulation
analog of a stationarity result for a locally adapted expansion, not a proof
of global minimality of the PR.

\subsection{Key differences}

Despite the formal correspondence between these two settings, there are several important differences that significantly affect how reduced density matrices and basis optimizations are used in practice.

First, there is a key difference in dimensionality. In quantum chemistry, the one-body reduced density matrix (RDM) is an $M \times M$ matrix, where $M$ is the number of orbitals. This number can easily reach into the hundreds for realistic molecular systems. As a result, diagonalizing this matrix and manipulating it numerically can be computationally demanding, scaling poorly as the system size grows. In contrast, in the BASS framework for quantum circuit simulation, the single-qubit RDM is always a fixed $2 \times 2$ matrix, independent of the overall system size or the number of qubits. This small, fixed dimensionality allows one to diagonalize the RDM analytically and at constant computational cost. Consequently, basis updates in BASS can be performed efficiently and uniformly throughout the simulation, without the overhead that typically accompanies orbital optimization in quantum chemistry.

Second, the underlying particle statistics differ fundamentally. Quantum chemistry deals with fermionic wavefunctions that are antisymmetric under particle exchange, reflecting the indistinguishability and exchange statistics of electrons. This antisymmetry imposes strong structural constraints on the allowed states and gives special significance to particle-number conservation. In particular, natural orbitals become especially powerful tools in this context, because they respect both antisymmetry and fixed particle number, and they often yield compact representations of correlated electronic states. By contrast, the computational-basis states used in generic quantum circuit simulations do not carry any built-in antisymmetry or particle-number constraints. Qubits are distinguishable degrees of freedom arranged in a tensor-product Hilbert space, and states in this space are not required to satisfy fermionic exchange properties. As a result, some of the structural advantages that natural orbitals offer in chemistry are simply absent in the general circuit setting, and the notion of an optimal basis must be reformulated accordingly.

Third, the character of locality in the two frameworks is quite distinct. In electronic structure theory, natural orbitals are often delocalized across the entire molecule. Even though they can offer compactness or interpretability, they typically do not respect a simple tensor-product structure over spatial regions or subsystems; each orbital can have support on many atomic centers. In contrast, BASS explicitly restricts itself to strictly local operations in the qubit picture: the basis rotations are single-qubit unitaries. These rotations act locally on individual qubits and preserve the global tensor-product decomposition of the Hilbert space. This locality constraint is not a minor technical detail; it is central to the efficiency and scalability of BASS, because it ensures that basis updates do not generate additional entanglement or long-range couplings beyond those already present in the circuit description.

Finally, there is an important difference between online and offline optimization. In the CASSCF method and related orbital-optimization techniques in quantum chemistry, one typically optimizes orbitals iteratively with respect to a fixed target state, most commonly the electronic ground state. This optimization is carried out offline with respect to a stationary wavefunction: one solves a self-consistent field problem, updates orbitals, recomputes the state, and repeats until convergence. By contrast, BASS is designed to optimize the qubit basis online, during the actual evolution of a quantum circuit. At each step of the circuit, BASS can adapt the local basis to the instantaneous quantum state being generated. This dynamic, state-dependent updating allows the method to track and exploit transient structure in the evolving state, rather than relying on a single, globally optimized basis fixed beforehand. Thus, while there is a formal analogy between natural-orbital optimization in chemistry and basis adaptation in BASS, the two approaches differ greatly in their dimensionality, statistical structure, locality properties, and whether basis optimization is performed offline for a stationary state or online throughout a circuit evolution.

\section{\label{app:pseudocode}Algorithmic Pseudocode}
Appendix~\ref{app:pseudocode} spells out the BASS implementation in the same
order as Sec.~\ref{sec:algorithm}: rotated-frame propagation
(Sec.~\ref{subsec:core_idea}), hash-table RDM kernels
(Sec.~\ref{subsec:rdm}), multi-pass basis sweeps with the do-no-harm guard
(Secs.~\ref{subsec:coord_descent} and~\ref{subsec:doNH}), and the
implementation shortcuts of Sec.~\ref{subsec:optimizations}.
After each pseudocode block, we walk through what the lines do and how the
routines call one another.
\begin{algorithm}[H]
\caption{BASS: Main Simulation Loop}
\label{alg:bass_main}
\begin{algorithmic}[1]

\Require Circuit gates $\{G_t\}_{t=1}^M$,
sparse budget $k$,
optimization interval $n_{\mathrm{opt}}$,
truncation interval $n_{\mathrm{trunc}}$,
hard cap
$K_{\mathrm{hard}} = \min(c_{\mathrm{hard}}k, 2^N)$

\Ensure Approximate sparse state $\ket{\tilde{\psi}}$,
basis accumulators $\{U_j\}$

\State Initialize
$\ket{\tilde{\psi}} \leftarrow \ket{0}^{\otimes N}$

\State Initialize
$U_j \leftarrow I_2$
for all $j = 0,\ldots,N-1$

\State $\PR_{\mathrm{last}} \leftarrow 0$

\State $n_{\mathrm{since\_trunc}} \leftarrow 0$

\For{$t = 1,\ldots,M$}

\State Let $G_t$ act on qubits $(q_1,q_2)$

\State
$\tilde{G}_t
\leftarrow
(U_{q_1}\otimes U_{q_2})^\dagger
G_t
(U_{q_1}\otimes U_{q_2})$

\If{$\tilde{G}_t$ is diagonal}

\State
\Call{DiagonalApply}
{$\ket{\tilde{\psi}},\tilde{G}_t$}

\Else

\State
\Call{HashTableApply}
{$\ket{\tilde{\psi}},\tilde{G}_t,K_{\mathrm{hard}}$}

\EndIf

\State
$n_{\mathrm{since\_trunc}}
\leftarrow
n_{\mathrm{since\_trunc}} + 1$

\If{$|\mathrm{supp}(\tilde{\psi})| > K_{\mathrm{hard}}$}

\State
Truncate to top-$k$

\State
Renormalize $\ket{\tilde{\psi}}$

\State
$n_{\mathrm{since\_trunc}} \leftarrow 0$

\ElsIf{
$|\mathrm{supp}(\tilde{\psi})| > k$
\textbf{and}
$n_{\mathrm{since\_trunc}} \ge n_{\mathrm{trunc}}$
}

\State
Truncate to top-$k$

\State
Renormalize $\ket{\tilde{\psi}}$

\State
$n_{\mathrm{since\_trunc}} \leftarrow 0$

\EndIf

\If{optimization check step is active}

\State
$\PR_{\mathrm{cur}}
\leftarrow
1/\sum_i |\alpha_i|^4$

\If{
$\PR_{\mathrm{cur}}$
increased sufficiently relative to
$\PR_{\mathrm{last}}$
}

\Comment{default trigger threshold:
$\tau \approx 0.90$}

\State
\Call{BasisOptimize}
{$\ket{\tilde{\psi}},\{U_j\}$}

\State
$\PR_{\mathrm{last}}
\leftarrow
1/\sum_i |\alpha_i|^4$

\EndIf

\EndIf

\EndFor

\State
\Return
$\ket{\tilde{\psi}},\{U_j\}$

\end{algorithmic}
\end{algorithm}

BASS stores an approximate state as a sparse amplitude list
$\ket{\tilde{\psi}} = \sum_{i=1}^{|\mathrm{supp}|} \alpha_i \ket{x_i}$ with
$|\mathrm{supp}| \le K_{\mathrm{hard}}$ and a target budget~$k$.
Basis adaptation is encoded in per-qubit accumulators
$U_j \in U(2)$, $j=0,\ldots,N-1$, so that all gate conjugations and RDMs are
evaluated in the accumulated product frame
(Eq.~\eqref{eq:rotated_frame}). The PR (PR) of the current
support is
\begin{equation}
\PR(\tilde{\psi}) = \frac{\bigl(\sum_i |\alpha_i|^2\bigr)^2}
{\sum_i |\alpha_i|^4}.
\end{equation}
Truncation retains the top-$k$ amplitudes by magnitude, followed by
renormalization so that $\sum_i |\alpha_i|^2 = 1$.

The reported sparse budget is~$k$. Transient support may grow up to
$K_{\mathrm{hard}} = \min(c_{\mathrm{hard}} k, 2^N)$ with a default
$c_{\mathrm{hard}}=8$. Deferred truncation waits for at least
$n_{\mathrm{trunc}}$ gate steps while $k < |\mathrm{supp}| \le K_{\mathrm{hard}}$.
Optimization checks run on truncation-active steps every $n_{\mathrm{opt}}$
gates. They are skipped unless the PR has risen by roughly $10\%$ since the
last accepted sweep ($r_{\mathrm{opt}}=\tau^{-1}\approx 1.11$ for
$\tau\approx 0.90$). Basis adaptation never replaces top-$k$ truncation; it
only changes the basis in which that rule is applied.

\subsection{Main simulation loop}
\label{app:alg_main}
Algorithm~\ref{alg:bass_main} drives the simulation: apply gates in the
accumulated local basis, defer truncation where possible, and call
\textsc{BasisOptimize} only when the schedule and PR-growth tests both pass.

\begin{algorithm}[H]
\caption{Single-Qubit Basis Optimization}
\label{alg:basis_opt}
\begin{algorithmic}[1]

\Procedure{BasisOptimize}{$\ket{\tilde{\psi}},\{U_j\}$}

\State
$n_{\mathrm{passes}} \leftarrow 3$

\For{pass $=1,\ldots,n_{\mathrm{passes}}$}

\State
$\mathrm{improved} \leftarrow \mathrm{False}$

\State
Build hash table $\mathcal{H}$ from $\ket{\tilde{\psi}}$

\State
Compute all $N$ single-qubit RDMs
from the current sparse state
via Algorithm~\ref{alg:rdm}

\Comment{RDMs reused throughout this sweep}

\State
$\PR_{\mathrm{old}}
\leftarrow
\PR(\tilde{\psi})$

\For{$j=0,\ldots,N-1$}

\State
Diagonalize
$\rho_j = V_j \Lambda_j V_j^\dagger$

\State
$\ket{\tilde{\psi}'}
\leftarrow
(I\otimes\cdots\otimes V_j^\dagger\otimes\cdots\otimes I)
\ket{\tilde{\psi}}$

\State
Truncate $\ket{\tilde{\psi}'}$ to top-$k$

\State
Renormalize $\ket{\tilde{\psi}'}$

\State
$\PR_{\mathrm{new}}
\leftarrow
\PR(\tilde{\psi}')$

\If{$\PR_{\mathrm{new}} < \PR_{\mathrm{old}}$}

\State
$\ket{\tilde{\psi}}
\leftarrow
\ket{\tilde{\psi}'}$

\State
$U_j \leftarrow U_j V_j$

\State
$\PR_{\mathrm{old}}
\leftarrow
\PR_{\mathrm{new}}$

\State
$\mathrm{improved}
\leftarrow
\mathrm{True}$

\Else

\State
Revert $\ket{\tilde{\psi}}$

\EndIf

\EndFor

\If{$\neg\,\mathrm{improved}$}

\State \textbf{break}

\EndIf

\EndFor

\EndProcedure

\end{algorithmic}
\end{algorithm}

Initialization (lines~1-4) sets $\ket{\tilde{\psi}}=\ket{0}^{\otimes N}$,
$U_j=I_2$, and $\PR_{\mathrm{last}}=0$. The variable $\PR_{\mathrm{last}}$ is
updated only after an accepted optimization sweep and feeds the adaptive
trigger in lines~21-30. The choice of the all-zero product state as the
initial condition matches standard circuit simulations, while the identity
local unitaries $U_j$ encode that no basis rotations have yet been applied.
Tracking $\PR_{\mathrm{last}}$ across sweeps allows the algorithm to decide
whether a new optimization pass is actually beneficial, instead of blindly
re-optimizing after every truncation event.

Inside the gate loop, line~6 conjugates $G_t$ into the working frame,
Eq.~\eqref{eq:conj}. The amplitude list always lives in that rotated frame;
using the raw circuit gate would apply the wrong unitary. Concretely, the
algorithm maintains the state in the accumulated local basis defined by the
$\{U_j\}$, so each incoming gate $G_t$ must be transformed as
$\tilde{G}_t = U^{\dagger} G_t U$ before it acts. Lines~7-11 call
\textsc{DiagonalApply} when $\tilde{G}_t$ is diagonal (Sec.~\ref{app:alg_gate})
and \textsc{HashTableApply} otherwise. The diagonal path exploits the fact
that a diagonal gate only rephases existing basis states and never creates new
ones, allowing an efficient in-place update, while the hash-table path
handles general entangling gates that mix basis states and potentially grow
the support.

Lines~12-20 defer truncation. The counter $n_{\mathrm{since\_trunc}}$ ticks
after every gate. If $|\mathrm{supp}|>K_{\mathrm{hard}}$, the state is cut
immediately to top-$k$ and renormalized. If only $|\mathrm{supp}|>k$, the cut
waits until $n_{\mathrm{since\_trunc}}\ge n_{\mathrm{trunc}}$, so a short
burst of entangling gates can populate bit-flip partners before amplitudes are
discarded. This two-threshold strategy balances stability and efficiency:
$K_{\mathrm{hard}}$ prevents unbounded growth of the support, while the
smaller $k$ plus the grace window of $n_{\mathrm{trunc}}$ steps lets
correlated configurations enter the support, improving fidelity at fixed $k$.
Effectively, the algorithm avoids "over-eager" truncation that would cut off
important amplitudes before they have a chance to accumulate probability mass.

Lines~21-30 run the optimizer only on truncation-active steps and, in our
code, every $n_{\mathrm{opt}}$ such steps. Optimization fires when
$\PR_{\mathrm{cur}} > \PR_{\mathrm{last}}/r_{\mathrm{opt}}$—the support has
spread since the last good sweep. Intuitively, $\PR_{\mathrm{cur}}$ measures
how delocalized the amplitude distribution is in the current basis; if it has
grown too much compared to the last successful optimization, the algorithm
attempts to re-sparsify. Then \textsc{BasisOptimize}
(Algorithm~\ref{alg:basis_opt}) tries to tighten the distribution; a successful
pass refreshes $\PR_{\mathrm{last}}$. When the state is already sparse in the
accumulated basis, the trigger rarely fires and the run tracks the fixed-basis
baseline (Sec.~\ref{sec:results}), incurring almost no extra overhead beyond a
standard top-$k$ truncation simulation while still retaining the option to
adapt the basis when the support begins to spread.

Overall, the expanded view highlights three key design choices: (i) always
working in the accumulated local basis while conjugating gates into this
frame; (ii) using a two-level, delayable truncation policy to better capture
correlations; and (iii) deploying basis optimization adaptively, guided by a
PR-based trigger, so that the algorithm only pays the cost of optimizing when
it is likely to yield a more compact representation of the quantum state.

\subsection{Single-qubit basis optimization}
\label{app:alg_basis}
BasisOptimize is the coordinate-descent sweep described in Sec.~\ref{subsec:coord_descent}, specialized to optimize the current basis via sequential single-qubit updates. A single "pass" of this routine consists of visiting every qubit once and attempting a local rotation that lowers the chosen objective (here, PR).

At the beginning of each pass (lines~2--8), the algorithm first hashes the current support of the state and computes all one- and two-body reduced density matrices \(\{\rho_j\}\) in a single call to Algorithm~\ref{alg:rdm}. This is done only once per pass. Crucially, we do not recompute the RDMs after each individual qubit update within the same pass. Instead, we use a "snapshot" of the RDMs corresponding to the state at the start of the pass. This approximation dramatically reduces computational cost, since RDM evaluation is typically the dominant step, and empirically our benchmarks show that the method almost always converges within one or two passes (see Appendix~\ref{app:convergence} for a detailed study of this behavior).

Then, for each qubit index \(j\) (lines~9--24), the code constructs and applies a candidate local basis rotation \(V_j^\dagger\) to the state \(\ket{\tilde{\psi}}\). After this tentative update, the state is truncated to its top-\(k\) amplitudes in the computational basis and renormalized. We then evaluate the new PR \(\PR_{\mathrm{new}}\) in this updated basis. The candidate rotation is accepted only if it strictly improves the objective, i.e., if \(\PR_{\mathrm{new}} < \PR_{\mathrm{old}}\). This monotonicity condition implements the do-no-harm guard of Sec.~\ref{subsec:doNH}, ensuring that no individual step can worsen the sparsity/compactness of the representation.

\begin{algorithm}[H]
\caption{Optional Two-Qubit Brick-Wall Optimization}
\label{alg:2qopt}
\begin{algorithmic}[1]

\Procedure{TwoQubitOptimize}{$\ket{\tilde{\psi}}$}

\State
Construct even pairs:
$(0,1),(2,3),\ldots$

\State
Construct odd pairs:
$(1,2),(3,4),\ldots$

\ForAll{brick-wall pairs $(q_1,q_2)$}

\State
Compute two-qubit RDM
$\rho_{q_1q_2}$

\State
Diagonalize
$\rho_{q_1q_2}
=
V\Lambda V^\dagger$

\State
Rotate state by $V^\dagger$

\State
Truncate to top-$k$

\State
Renormalize

\State
Undo rotation by $V$

\State
Truncate to top-$k$

\State
Renormalize

\If{PR improved}

\State Keep rotated result

\Else

\State Revert state

\EndIf

\EndFor

\EndProcedure

\end{algorithmic}
\end{algorithm}

If the rotation is accepted, we commit to the update by overwriting the working state \(\ket{\tilde{\psi}}\) with the modified amplitudes and updating the cumulative one-qubit unitary as \(U_j \leftarrow U_j V_j\). Thus \(U_j\) always stores the net basis change applied so far to qubit \(j\). If the trial rotation fails to reduce \(\PR\), we discard it by restoring the previous amplitudes for \(\ket{\tilde{\psi}}\) and leaving \(U_j\) unchanged. In this way, each qubit step is strictly non-worsening, and local updates are fully reversible within a pass.

After attempting an update on every qubit, the algorithm checks whether the pass yielded any accepted rotations at all (lines~25--27). If a complete pass over all qubits results in no improvement, the coordinate-descent procedure is deemed converged and the loop terminates early. The parameter \(n_{\mathrm{passes}} = 3\) simply sets a hard upper bound on the number of passes; in practice, convergence is usually achieved in just one or two passes, so this upper limit is rarely saturated.

Each candidate rotation \(V_j\) is obtained in closed form from a \(2\times 2\) eigenproblem derived in Appendix~\ref{sec:dominant_eigen}, Eq.~\eqref{eq:dominant_eigen}. We therefore avoid running any iterative eigensolver on a per-qubit basis: the dominant direction for the local update is computed analytically and cheaply. This closed-form solve, combined with the snapshot RDM approximation and the do-no-harm acceptance rule, makes BasisOptimize a lightweight yet effective coordinate-descent step that can be inserted frequently into larger algorithms without incurring prohibitive overhead.

\subsection{Optional two-qubit brick-wall optimization}
\label{app:alg_2q}

The optional brick-wall pass (Sec.~\ref{subsec:2qrdm}) is designed to handle two-qubit rotations that cannot be decomposed into products of single-qubit unitaries. It operates using an even/odd pairing schedule: in each sub-sweep, either all even edges or all odd edges are updated, so that within a given sub-sweep the algorithm visits only disjoint nearest-neighbor pairs and thus avoids any overlap or conflicts between simultaneously updated edges.

Concretely, for each neighboring qubit pair $(q_1,q_2)$, the code first reconstructs the two-qubit reduced density matrix $\rho_{q_1 q_2}$ from the global sparse representation. This is done by grouping basis states according to the ``rest index'' $r$ appearing in Eq.~\eqref{eq:twoqubitRDM}, which collects all degrees of freedom other than the two target qubits. For each fixed $r$ the corresponding $4\times4$ block is assembled, and these contributions are accumulated to form the full $\rho_{q_1 q_2}$.

Once $\rho_{q_1 q_2}$ is available, the algorithm diagonalizes this $4\times4$ Hermitian matrix to obtain its eigenvalues and eigenvectors. Writing $\rho_{q_1 q_2} = V D V^\dagger$, with $V$ the unitary eigenbasis and $D$ diagonal, it then applies a rotate–truncate–undo sequence:

First, one rotates the two-qubit subsystem $(q_1, q_2)$ into the eigenbasis of $
ho_{q_1 q_2}$ by applying $V^\dagger$, thereby transforming the state into a diagonal basis. In this rotated basis, a first truncation step is performed by keeping only the $k$ most significant components (those with the largest weights) on the two-qubit subsystem and discarding the rest, which enforces a local low-rank approximation. The resulting state is then renormalized to restore unit norm, compensating for the probability weight lost through truncation. After renormalization, one applies $V$ to rotate the truncated state back to the original computational basis of $(q_1, q_2)$. Finally, a second truncation is carried out in this original basis to enforce the global sparsity or rank constraint required by the overall algorithm.

This entire rotate–truncate–undo procedure defines a candidate local update for that specific pair $(q_1,q_2)$. The update is accepted only if it reduces the chosen performance metric, here denoted PR (for example, a PR or some proxy for effective rank). If PR does not decrease, the trial modification is discarded and the previous state is kept unchanged. Thus, the brick-wall pass behaves as an optional, variational refinement layer that greedily accepts only beneficial two-qubit updates.

Importantly, the pairwise rotation $V$ constructed from diagonalizing $\rho_{q_1 q_2}$ is not recorded in the global set of stored unitaries $\{U_j\}$. Unlike the systematic single-qubit sweep—where the learned unitaries persist as part of an evolving basis—these two-qubit rotations are transient: they are recomputed on-the-fly for each call to the brick-wall pass and are used only to generate and test a local update before being discarded.

From a computational standpoint, the dominant overhead introduced by this pass is the sorting and grouping by rest index $r$. For each call, this contributes an additional complexity of order $\mathcal{O}(N k \log k)$, where $N$ denotes the system size (e.g., number of qubits or basis states tracked) and $k$ is the local truncation dimension or sparsity parameter. Because this cost can become substantial for large systems or aggressive truncation settings, the main performance benchmarks in the paper disable the brick-wall pass by default. It is only turned on in specific experiments where the potential gains from modeling genuinely entangling two-qubit rotations justify the extra overhead, as explicitly indicated in the corresponding figures or discussion.

\begin{algorithm}[H]
\caption{Single-Qubit RDM Computation via Hash Table}
\label{alg:rdm}
\begin{algorithmic}[1]

\Procedure{ComputeRDMs}
{$\ket{\tilde{\psi}}
=
\sum_{i=1}^k
\alpha_i\ket{x_i}$}

\State
Initialize
$\rho_j
\leftarrow
0_{2\times2}$
for all $j$

\State
Build hash table
$\mathcal{H}$:
key $=x_i$,
value $=\alpha_i$

\For{$i=1,\ldots,k$}

\For{$j=0,\ldots,N-1$}

\State
$b \leftarrow b_j(x_i)$

\State
$(\rho_j)_{bb}
\leftarrow
(\rho_j)_{bb}
+
|\alpha_i|^2$

\If{$b=0$}

\State
$x'
\leftarrow
x_i \oplus 2^j$

\If{$x' \in \mathcal{H}$}

\State
$(\rho_j)_{01}
\leftarrow
(\rho_j)_{01}
+
\alpha_i
\overline{\mathcal{H}[x']}$

\EndIf

\EndIf

\EndFor

\EndFor

\State
$(\rho_j)_{10}
\leftarrow
\overline{(\rho_j)_{01}}$

\State
\Return
$\{\rho_j\}_{j=0}^{N-1}$

\EndProcedure

\end{algorithmic}
\end{algorithm}

\subsection{Reduced-density-matrix kernels}
\label{app:alg_rdm}

For a general $n$-qubit state
\[
\ket{\tilde{\psi}} = \sum_{i=0}^{N-1} \alpha_i \ket{x_i},\quad N = 2^n,
\]
we compute the single-qubit reduced density matrix $\rho_j$ for qubit $j$ using the matrix elements given in Eqs.~\eqref{eq:rdm_diag}--\eqref{eq:rdm_offdiag}. Explicitly,
\[
(\rho_j)_{bb} = \sum_{i:\,b_j(x_i)=b} |\alpha_i|^2,\qquad b\in\{0,1\},
\]
where $b_j(x_i)$ denotes the $j$th bit of computational basis state $\ket{x_i}$. Lines~8--10 of the code implement these diagonal elements by iterating over all basis indices $i$ and accumulating the probability weights $|\alpha_i|^2$ into the appropriate diagonal entry $(\rho_j)_{bb}$ according to the value of $b_j(x_i)$.

The off-diagonal entries are constructed using bit-flip partner states. For each basis state $\ket{x_i}$ with $b_j(x_i)=0$, one defines its partner
\[
\ket{x'} = \ket{x_i \oplus 2^j},
\]
obtained by flipping the $j$th bit. The code uses a hash table $\mathcal{H}$ that maps bit strings $x$ to the corresponding amplitudes $\alpha_x$. For every such $i$ with $b_j(x_i)=0$, the algorithm looks up $x' = x_i \oplus 2^j$ in $\mathcal{H}$ and adds the product $\alpha_i\,\overline{\mathcal{H}[x']}$ to the upper off-diagonal element $(\rho_j)_{01}$. This procedure is implemented in lines~11--16. After this accumulation is complete, line~18 enforces Hermiticity of $\rho_j$ by setting
\[
(\rho_j)_{10} = (\rho_j)_{01}^*.
\]

From a computational standpoint, each pass over the amplitudes requires building the hash table $\mathcal{H}$ once, and then performing expected $\mathcal{O}(1)$-time lookups for each of the $N$ states when identifying bit-flip partners. Repeating this procedure for all $k$ qubits of interest thus yields an overall complexity of $\mathcal{O}(Nk)$ for constructing all single-qubit reduced density matrices $\{\rho_j\}$.

Finally, the eigenbasis rotation $V_j$ that diagonalizes $\rho_j$ is obtained as described in Appendix~\ref{sec:dominant_eigen}. There, one determines the dominant eigenvalue and corresponding eigenvector of $\rho_j$, constructs the full spectral decomposition, and assembles the unitary $V_j$ whose columns are the eigenvectors of $\rho_j$. This rotation maps the computational basis on qubit $j$ to the eigenbasis of $\rho_j$, which is the natural local basis for subsequent analysis or circuit optimization steps.

\begin{algorithm}[H]
\caption{Diagonal Gate Fast Path}
\label{alg:diag}
\begin{algorithmic}[1]
\Procedure{DiagonalApply}{$\ket{\tilde{\psi}}, \tilde{G}$}
\State Let $\tilde{G}$ act on qubits $(q_1,q_2)$
\For{$i=1,\ldots,|\mathrm{supp}(\tilde{\psi})|$}
\State Read active bits $(b_1,b_2)$ from $x_i$
\State
$\alpha_i \leftarrow
\alpha_i\,\tilde{G}_{(b_1 b_2),(b_1 b_2)}$
\EndFor
\EndProcedure
\end{algorithmic}
\end{algorithm}
\subsection{Sparse gate application}
\label{app:alg_gate}

Algorithm~\ref{alg:diag} presents the fast path for applying a two-qubit diagonal gate to a compressed quantum state representation. Given a state $\ket{\tilde{\psi}}$ with support indexed by bit strings $x_i$ and a diagonal gate $\tilde{G}$ acting on qubits $(q_1,q_2)$, the procedure iterates over all basis states in the support. For each index $i$, it first reads the active bits $(b_1,b_2)$ corresponding to qubits $(q_1,q_2)$ from $x_i$. Since the gate is diagonal, its action on $\ket{\tilde{\psi}}$ reduces to rescaling each amplitude $\alpha_i$ by the corresponding diagonal entry $\tilde{G}_{(b_1 b_2),(b_1 b_2)}$. This avoids any permutation or mixing of amplitudes and thus yields an efficient implementation in which only multiplicative updates are required for each supported basis state.
\begin{algorithm}[H]
\caption{Hash-Table Sparse Gate Application}
\label{alg:gate}
\begin{algorithmic}[1]

\Procedure{HashTableApply}
{$\ket{\tilde{\psi}},\tilde{G},K_{\mathrm{hard}}$}

\State
Initialize output hash table
$\mathcal{H}_{\mathrm{out}}
\leftarrow
\emptyset$

\For{$i=1,\ldots,|\mathrm{supp}(\tilde{\psi})|$}

\State
Extract active qubit bits
$(b_1,b_2)$

\State
Mask inactive bits:
$r \leftarrow x_i$

\For{$(a_1,a_2)\in\{0,1\}^2$}

\State
Construct output basis state
$x'$

\State
$\mathcal{H}_{\mathrm{out}}[x']
\mathrel{+}=
\tilde{G}_{(a_1a_2),(b_1b_2)}
\alpha_i$

\EndFor

\EndFor

\If{$|\mathcal{H}_{\mathrm{out}}| > K_{\mathrm{hard}}$}

\State
Truncate immediately to top-$K_{\mathrm{hard}}$

\EndIf

\State
$\ket{\tilde{\psi}}
\leftarrow$
state reconstructed from
$\mathcal{H}_{\mathrm{out}}$

\EndProcedure

\end{algorithmic}
\end{algorithm}

\textsc{HashTableApply} performs a single pass over the current support set of basis states. Concretely, it iterates once over all stored amplitudes $\ket{x_i}$ in the input hash table $\mathcal{H}_{\mathrm{in}}$. For each such basis state, the algorithm systematically explores all possible configurations of the active qubits $(a_1,a_2)$ while keeping the remaining (masked) qubits $r$ unchanged. This is implemented by toggling the bits of $(a_1,a_2)$ in all ways consistent with the local operation being simulated.

As a result, every input amplitude may generate up to four distinct contributions in the output, corresponding to the four possible assignments of the two active qubits, i.e., $(a_1,a_2) \in \{00,01,10,11\}$ (lines~4--9). Each of these contributions is inserted into the output hash table $\mathcal{H}_{\mathrm{out}}$ at a position determined by a hash function applied to the resulting bit string. When different inputs map to the same output basis state, their contributions are automatically combined by adding their amplitudes. This accumulation is implemented using linear probing in the hash table: on a collision, the algorithm searches sequentially for the appropriate slot, merging amplitudes when an identical key is encountered.

During this process, the output hash table may grow as new nonzero amplitudes are inserted. To control memory usage and enforce a hard truncation threshold, the algorithm monitors the size of $\mathcal{H}_{\mathrm{out}}$. If, after processing all entries, the number of occupied entries exceeds the pre-specified limit $K_{\mathrm{hard}}$, the table is aggressively pruned (lines~10--12). Specifically, only the $K_{\mathrm{hard}}$ entries with largest magnitude amplitudes are retained, and all remaining entries are discarded. This top-$K_{\mathrm{hard}}$ selection step ensures that the subsequent reconstruction of the output state vector operates on a bounded support while preserving the most significant contributions.

Finally, once pruning (if necessary) has been completed, $\mathcal{H}_{\mathrm{out}}$ encodes a compressed representation of the updated quantum state. The algorithm can then reconstruct the (possibly approximate) output state by reading out the remaining key–amplitude pairs from $\mathcal{H}_{\mathrm{out}}$ and mapping each key back to its corresponding computational basis vector. In this way, \textsc{HashTableApply} provides an efficient mechanism to apply local operations to a sparse or truncated quantum state while carefully controlling the growth of the support.

A diagonal operator $\tilde{G}$ acts only by multiplying phases of amplitudes, without changing which basis states are populated. Concretely, for each basis state $x_i$ that is currently in the support, we update the coefficient
$\alpha_i$ by a simple phase factor taken from the appropriate diagonal entry of $\tilde{G}$ that corresponds to the active bits of $x_i$:
\[
\alpha_i \leftarrow \alpha_i\, \tilde{G}_{b_1b_2, b_1b_2},
\]
where $(b_1,b_2)$ denote the relevant bits of $x_i$ on which this diagonal gate acts (see Algorithm~\ref{alg:diag}). Because $\tilde{G}$ is purely diagonal, it does not mix amplitudes between different basis states; it only rephases those already present.

As a result, the size of the support set, $|\mathrm{supp}|$, remains unchanged throughout the application of this gate: no new basis states are created, and none are removed. This is in contrast to non-diagonal or mixing operations, which generally increase the support by spreading amplitude across additional configurations.

An important implementation detail is that, for this branch of the evolution, we do not need to allocate any additional auxiliary hash tables or data structures. The update is performed in-place over the existing list (or map) of supported basis states, leading to very low overhead in both memory and computation. The operation reduces essentially to streaming once over the current support and applying a complex scalar multiplication for each entry.

This behavior is particularly relevant in the context of QAOA and RFIM circuits. In these applications, many layers consist primarily of diagonal cost or field terms, and after partial basis adaptation the resulting effective gates frequently fall into this “diagonal-only” category. Consequently, these QAOA and RFIM layers often trigger this inexpensive branch of the algorithm. By exploiting the fact that diagonal gates do not expand the support and require no auxiliary hashing, we significantly limit the additional overhead introduced by adaptivity.

Empirically, this optimization plays a crucial role in controlling the runtime overhead of the adaptive scheme. As reported in Sec.~\ref{subsec:runtime}, the frequent occurrence of such diagonal layers after basis adaptation helps keep the total adaptive overhead below approximately $7\times$ relative to a non-adaptive baseline. Without this diagonal-fast-path treatment, basis adaptation would incur substantially higher costs, especially in deeper circuits dominated by diagonal terms.

\subsection{Truncation, renormalization, and top-\texorpdfstring{$k$}{k} selection}

Truncation selects the greatest $|\alpha_i|^2$ and discards the remainder, while renormalization ensures $\sum_i|\alpha_i|^2=1$. This truncation-renormalization combination happens repeatedly in our procedure: after deferred cuts, after each trial rotation in \textsc{BasisOptimize}, and as the core inner step of the two-qubit rotate-truncate-undo loop. Intuitively, truncation conducts a hard, top-weight selection in the present basis, whereas renormalization restores a proper quantum state; combined, they define the canonical 'keep the heaviest components' operation.

Lemma~\ref{lem:topk} formally describes this operation, proving that top-$k$ truncation is the optimal single-step selection process in any fixed basis. The goal is to retain as much total probability mass $\sum_i |\alpha_i|^2$ as possible under a $k$-term sparsity constraint. In other words, once the basis is established, no alternative rule that retains exactly $k$ components can consistently exceed basic top-$k$ by magnitude. This portrays truncation not as an ad hoc heuristic, but as the singularly optimal local decision under a very natural goal.

BASS uses this information in a broader search for bases. Rather of attempting to devise a more sophisticated selection procedure, BASS leaves top-$k$ constant and focuses on identifying a basis in which the same budget of $k$ components captures significantly more of the state's total weight. The algorithm investigates unitary transformations—via trial rotations, postponed cuts, and the inner two-qubit loops—specifically to concentrate amplitude mass into fewer basis states. Once a promising foundation is identified, applying top-$k$ results in a considerably more informative truncation. Lemma~\ref{lem:topk} supports our general strategy of optimizing the basis worldwide such that a locally optimal rule (top-$k$) becomes globally effective.

\subsection{Complexity summary}
\label{app:alg_complexity}
Table~\ref{tab:alg_complexity} lists the asymptotic costs per subroutine for a state bounded by $|\mathrm{supp}|\le K_{\mathrm{hard}} = \mathcal{O}(k)$.

\begin{table}[!hbt]
\caption{\label{tab:alg_complexity}Per-call complexity of BASS subroutines. Expected hash costs assume well-distributed keys with load factors $<0.5$; worst-case hash collisions degrade to $\mathcal{O}(k)$ per insertion. The logarithmic terms in optimization arise from the intermediate state sorting/truncation required after each sequential qubit rotation.}
\begin{ruledtabular}
\begin{tabular}{lcc}
Routine & Time & Memory \\
\hline
\textsc{DiagonalApply} & $\mathcal{O}(k)$ & $\mathcal{O}(1)$ extra \\
\textsc{HashTableApply} & $\mathcal{O}(k)$ expected & $\mathcal{O}(k)$ buffer \\
\textsc{ComputeRDMs} & $\mathcal{O}(Nk)$ expected & $\mathcal{O}(k)$ hash \\
\textsc{BasisOptimize} & $\mathcal{O}(p N k \log k)$ & $\mathcal{O}(k)$ \\
\textsc{TwoQubitOptimize} & $\mathcal{O}(Nk\log k)$ & $\mathcal{O}(k)$ \\
\end{tabular}
\end{ruledtabular}
\end{table}

With $p\le 3$ passes per optimization call, an $M$-gate circuit costs
\begin{equation}
T_{\mathrm{BASS}}
= \mathcal{O}\!\left(
Mk \log k + \frac{M}{n_{\mathrm{opt}}} N k \log k
\right),
\label{eq:app_complexity}
\end{equation}
in exact agreement with Eq.~\eqref{eq:TBASScomplex}; the second term applies only when the $\PR$ trigger actively calls the optimizer. While specialized $\mathcal{O}(k)$ selection algorithms can theoretically drop the $\log k$ term for standard gate applications, the sequential support expansion and truncation required during the basis optimization sweeps strictly bounds the update cost to $\mathcal{O}(N k \log k)$. Peak hardware storage scales asymptotically as $\mathcal{O}(k)$ for amplitudes and hash scratch space (with practical transient allocations requiring $C \sim 10\text{--}20 \times k$ array slots to maintain low hash-table load factors and accommodate double-buffering), plus $\mathcal{O}(N)$ to store the local basis matrices $\{U_j\}$.

\section{\label{app:benchmarks}Extended Benchmark Details}

\subsection{Circuit family definitions}

All benchmark results in the main text use the following circuit
constructions.

\textit{Brickwork circuits (BW).}-The $N$-qubit brickwork circuit of depth
$L$ consists of alternating layers of nearest-neighbor two-qubit gates. Odd
layers apply gates to pairs $(0,1), (2,3), \ldots$; even layers apply gates
to pairs $(1,2), (3,4), \ldots$. Each gate is independently Haar-random on
$U(4)$, generated by QR decomposition of a $4 \times 4$ complex Gaussian
matrix. Unless otherwise stated, $L=5$.

\textit{Haar-random circuits (Haar).}-Each of $L$ layers applies a single
two-qubit Haar-random gate to a uniformly random pair $(i,j)$ with $i \ne j$.
No geometric locality constraint is imposed. Default: $L=3$.

\textit{QAOA circuits.}-We implement the Quantum Approximate Optimization
Algorithm for MaxCut on random 3-regular graphs. The circuit alternates
$p$ rounds of problem Hamiltonian evolution
$e^{-i\gamma C}$ (diagonal $ZZ$ gates on each edge) and mixer evolution
$e^{-i\beta B}$ ($X$ rotations on each qubit). Parameters
$(\gamma_l, \beta_l)$ are drawn uniformly from $[0, 2\pi)$. Default: $p=3$.

\textit{UCCSD circuits.}-Unitary coupled-cluster with singles and doubles,
implemented via Trotterized evolution. Single excitations:
$e^{i\theta(a_p^\dagger a_q - a_q^\dagger a_p)}$; double excitations:
$e^{i\theta(a_p^\dagger a_q^\dagger a_r a_s - \text{h.c.})}$. Amplitudes
$\theta$ drawn uniformly from $[-0.1, 0.1]$. The circuit preserves particle
number and is naturally sparse in the computational basis. Default: $L=3$
Trotter steps.

\textit{Disordered Ising (RFIM).}-Brickwork circuit with gates
$e^{-i\Delta t H_{ij}}$ where $H_{ij} = J Z_i Z_j + h_i X_i + h_j X_j$,
with $J=1$ and local fields $h_i$ drawn uniformly from $[0, h_{\max}]$.
Disorder strength is parameterized by $h_{\max}/J$. For the main results,
$h_{\max} = 0.1J$ (weak disorder), $L=5$ Trotter steps with $\Delta t = 0.2$.

\textit{2D brickwork.}-Qubits arranged on a $\sqrt{N} \times \sqrt{N}$
square lattice with Haar-random gates applied in a 2D brickwork pattern:
alternating horizontal and vertical layers of nearest-neighbor gates.

\subsection{Extended scaling data}

Tables~\ref{tab:adaptivek_scaling} and
\ref{tab:fixedk_scaling} report the complete fidelity datasets
underlying the scaling analysis of Sec.~\ref{sec:results}. Reported
fidelities are geometric means over independent random circuit
realizations, and the improvement ratio is defined as
\[
F_{\mathrm{BASS}}/F_{\mathrm{fixed}}.
\]

Two complementary scaling regimes are considered. In the fixed-$k$
setting, the sparse budget is held constant while the Hilbert-space
dimension grows exponentially with~$N$. This probes how rapidly each
method degrades under increasing truncation pressure. In the
adaptive-$k$ setting, the sparse budget is scaled approximately
exponentially with system size, isolating the extent to which adaptive
basis rotations improve compression efficiency beyond simple support
growth.

\begin{table*}[!hbt]
\centering
\caption{\textbf{Exponential-budget scaling ($k = 2^{N-1}$).}
Adaptive versus fixed-basis fidelity statistics across system size.
Reported as Median [IQR] over 100 random-circuit trials.
Ratio denotes the geometric-mean fidelity enhancement
$F_{\mathrm{BASS}} / F_{\mathrm{Fixed}}$
with 95\% bootstrap confidence intervals.}
\label{tab:adaptivek_scaling}
\begin{tabular}{llrrr}
\toprule
Family & $N$ &
$F_{\mathrm{Fixed}}$ (Med [IQR]) &
$F_{\mathrm{BASS}}$ (Med [IQR]) &
$\mathrm{GM}\!\left(F_{\mathrm{BASS}}/F_{\mathrm{Fixed}}\right)$ \\
\midrule

RFIM ($W=2$)
& 10 & $9.17\times10^{-1}$ [8.89e-1, 9.50e-1]
     & $9.20\times10^{-1}$ [9.01e-1, 9.50e-1]
     & $1.02\times$ [1.01, 1.04] \\
& 12 & $9.58\times10^{-1}$ [9.32e-1, 9.75e-1]
     & $9.58\times10^{-1}$ [9.32e-1, 9.75e-1]
     & $1.01\times$ [1.00, 1.01] \\
& 14 & $9.66\times10^{-1}$ [9.51e-1, 9.84e-1]
     & $9.67\times10^{-1}$ [9.51e-1, 9.84e-1]
     & $1.00\times$ [1.00, 1.01] \\
& 16 & $9.83\times10^{-1}$ [9.64e-1, 9.90e-1]
     & $9.83\times10^{-1}$ [9.64e-1, 9.90e-1]
     & $1.00\times$ [1.00, 1.00] \\
& 18 & $9.90\times10^{-1}$ [9.78e-1, 9.94e-1]
     & $9.90\times10^{-1}$ [9.78e-1, 9.94e-1]
     & $1.00\times$ [1.00, 1.00] \\
& 20 & $9.94\times10^{-1}$ [9.89e-1, 9.96e-1]
     & $9.94\times10^{-1}$ [9.89e-1, 9.96e-1]
     & $1.00\times$ [1.00, 1.00] \\

\midrule

Brickwork $L=5$
& 10 & $5.40\times10^{-1}$ [5.03e-1, 6.04e-1]
     & $6.11\times10^{-1}$ [5.69e-1, 6.73e-1]
     & $1.14\times$ [1.11, 1.17] \\
& 12 & $5.93\times10^{-1}$ [5.45e-1, 6.43e-1]
     & $6.24\times10^{-1}$ [5.84e-1, 6.70e-1]
     & $1.06\times$ [1.04, 1.08] \\
& 14 & $6.48\times10^{-1}$ [6.07e-1, 6.89e-1]
     & $6.57\times10^{-1}$ [6.08e-1, 7.02e-1]
     & $1.02\times$ [1.01, 1.04] \\
& 16 & $6.87\times10^{-1}$ [6.51e-1, 7.15e-1]
     & $6.88\times10^{-1}$ [6.60e-1, 7.36e-1]
     & $1.01\times$ [1.00, 1.02] \\
& 18 & $7.28\times10^{-1}$ [6.79e-1, 7.57e-1]
     & $7.32\times10^{-1}$ [6.85e-1, 7.58e-1]
     & $1.01\times$ [1.00, 1.02] \\
& 20 & $7.39\times10^{-1}$ [7.10e-1, 7.85e-1]
     & $7.39\times10^{-1}$ [7.10e-1, 7.85e-1]
     & $1.00\times$ [1.00, 1.00] \\

\midrule

Haar $L=3$
& 10 & $7.44\times10^{-1}$ [7.00e-1, 7.91e-1]
     & $7.95\times10^{-1}$ [7.48e-1, 8.26e-1]
     & $1.07\times$ [1.05, 1.08] \\
& 12 & $7.77\times10^{-1}$ [7.22e-1, 8.20e-1]
     & $8.10\times10^{-1}$ [7.76e-1, 8.43e-1]
     & $1.05\times$ [1.04, 1.06] \\
& 14 & $8.22\times10^{-1}$ [7.74e-1, 8.55e-1]
     & $8.27\times10^{-1}$ [7.73e-1, 8.56e-1]
     & $1.01\times$ [1.00, 1.01] \\
& 16 & $8.55\times10^{-1}$ [8.17e-1, 8.96e-1]
     & $8.57\times10^{-1}$ [8.25e-1, 8.99e-1]
     & $1.01\times$ [1.00, 1.01] \\
& 18 & $8.75\times10^{-1}$ [8.44e-1, 8.98e-1]
     & $8.75\times10^{-1}$ [8.44e-1, 8.98e-1]
     & $1.00\times$ [1.00, 1.00] \\
& 20 & $8.99\times10^{-1}$ [8.69e-1, 9.22e-1]
     & $8.99\times10^{-1}$ [8.69e-1, 9.22e-1]
     & $1.00\times$ [1.00, 1.00] \\

\bottomrule
\end{tabular}
\end{table*}

\begin{table*}[!hbt]
\centering
\caption{\textbf{Fixed-budget scaling ($k = 8192$).}
Adaptive versus fixed-basis fidelity statistics as a function of system
size under fixed truncation budget.
Reported as Median [IQR] over 100 random-circuit trials.
Ratio denotes the geometric-mean fidelity enhancement
$F_{\mathrm{BASS}} / F_{\mathrm{Fixed}}$
with 95\% bootstrap confidence intervals.}
\label{tab:fixedk_scaling}
\begin{tabular}{llrrr}
\toprule
Family & $N$ &
$F_{\mathrm{Fixed}}$ (Med [IQR]) &
$F_{\mathrm{BASS}}$ (Med [IQR]) &
$\mathrm{GM}\!\left(F_{\mathrm{BASS}}/F_{\mathrm{Fixed}}\right)$ \\
\midrule

RFIM ($W=2$)
& 10 & $1.00$ [1.00, 1.00]
     & $1.00$ [1.00, 1.00]
     & $1.00\times$ [1.00, 1.00] \\
& 12 & $1.00$ [1.00, 1.00]
     & $1.00$ [1.00, 1.00]
     & $1.00\times$ [1.00, 1.00] \\
& 14 & $9.73\times10^{-1}$ [9.59e-1, 9.83e-1]
     & $9.73\times10^{-1}$ [9.59e-1, 9.83e-1]
     & $1.00\times$ [1.00, 1.00] \\
& 16 & $7.97\times10^{-1}$ [6.74e-1, 8.70e-1]
     & $8.01\times10^{-1}$ [7.19e-1, 8.69e-1]
     & $1.05\times$ [1.02, 1.08] \\
& 18 & $5.51\times10^{-1}$ [3.93e-1, 6.73e-1]
     & $5.84\times10^{-1}$ [4.98e-1, 6.73e-1]
     & $1.15\times$ [1.09, 1.24] \\
& 20 & $3.16\times10^{-1}$ [1.92e-1, 4.14e-1]
     & $3.99\times10^{-1}$ [3.12e-1, 4.71e-1]
     & $1.44\times$ [1.27, 1.66] \\

\midrule

Brickwork $L=5$
& 10 & $1.00$ [1.00, 1.00]
     & $1.00$ [1.00, 1.00]
     & $1.00\times$ [1.00, 1.00] \\
& 12 & $1.00$ [1.00, 1.00]
     & $1.00$ [1.00, 1.00]
     & $1.00\times$ [1.00, 1.00] \\
& 14 & $6.36\times10^{-1}$ [5.91e-1, 6.92e-1]
     & $6.68\times10^{-1}$ [6.12e-1, 7.11e-1]
     & $1.04\times$ [1.02, 1.05] \\
& 16 & $9.86\times10^{-2}$ [6.77e-2, 1.32e-1]
     & $1.85\times10^{-1}$ [1.49e-1, 2.20e-1]
     & $1.90\times$ [1.77, 2.04] \\
& 18 & $8.68\times10^{-3}$ [6.01e-3, 1.33e-2]
     & $4.63\times10^{-2}$ [3.31e-2, 6.33e-2]
     & $5.06\times$ [4.39, 5.80] \\
& 20 & $8.10\times10^{-4}$ [3.99e-4, 1.54e-3]
     & $1.18\times10^{-2}$ [8.34e-3, 1.69e-2]
     & $16.09\times$ [13.08, 19.53] \\

\midrule

Haar $L=3$
& 10 & $1.00$ [1.00, 1.00]
     & $1.00$ [1.00, 1.00]
     & $1.00\times$ [1.00, 1.00] \\
& 12 & $1.00$ [1.00, 1.00]
     & $1.00$ [1.00, 1.00]
     & $1.00\times$ [1.00, 1.00] \\
& 14 & $8.10\times10^{-1}$ [7.86e-1, 8.57e-1]
     & $8.18\times10^{-1}$ [7.90e-1, 8.65e-1]
     & $1.01\times$ [1.00, 1.01] \\
& 16 & $3.20\times10^{-1}$ [2.53e-1, 3.72e-1]
     & $4.20\times10^{-1}$ [3.59e-1, 4.67e-1]
     & $1.33\times$ [1.27, 1.40] \\
& 18 & $8.75\times10^{-2}$ [6.03e-2, 1.09e-1]
     & $2.16\times10^{-1}$ [1.64e-1, 2.59e-1]
     & $2.45\times$ [2.26, 2.66] \\
& 20 & $1.78\times10^{-2}$ [1.07e-2, 3.19e-2]
     & $6.88\times10^{-2}$ [4.66e-2, 9.71e-2]
     & $3.74\times$ [3.25, 4.33] \\

\bottomrule
\end{tabular}
\end{table*}

\textit{Table Notes:}
Reported fidelity values correspond to medians with interquartile ranges
[IQR] over 100 independent random-circuit realizations.
Ratio metrics denote geometric means (GM) of the fidelity enhancement
$F_{\mathrm{BASS}} / F_{\mathrm{Fixed}}$
with 95\% bootstrap confidence intervals.
For exponential-budget scaling, the sparse budget scales as
$k = 2^{N-1}$, causing the retained subspace to grow exponentially with
system size and progressively reducing truncation pressure.
In this regime, ratios approaching unity indicate that adaptive-basis
selection provides negligible additional benefit beyond fixed-basis
truncation.
For fixed-budget scaling, the truncation budget is held constant at
$k=8192$ while the Hilbert-space dimension grows exponentially with
system size, producing increasingly strong compression pressure.
Large multiplicative improvements therefore arise when the fixed-basis
truncation fidelity collapses exponentially while the adaptive basis
continues to preserve substantially larger wavefunction weight.

The fixed-$k$ scaling data show that both methods achieve essentially
exact simulation fidelity for sufficiently small systems
($N\lesssim12$ at $k=8192$), where truncation effects are negligible.
As system size increases at constant sparse budget, fidelity gradually
degrades for all circuit families due to the exponentially increasing
Hilbert-space dimension. In this regime, adaptive basis optimization
provides its largest gains on highly scrambling circuits such as
brickwork and Haar-random ensembles, where local basis rotations reduce
support spreading relative to the fixed-basis sparse baseline. The RFIM
benchmarks remain comparatively insensitive to adaptive basis
optimization across all tested system sizes, consistent with their
dynamics already remaining close to sparse in the computational basis.

The adaptive-$k$ scaling results probe a distinct regime in which the
sparse budget grows with system size. Here the absolute fidelities of
both methods remain substantially higher across the full scaling range,
but systematic improvements from adaptive basis optimization persist for
scrambling circuit families. Brickwork circuits exhibit the largest
relative fidelity gains, while Haar-random circuits show smaller but
consistent improvements. In contrast, RFIM again remains effectively
unchanged, indicating that the computational basis already provides a
near-optimal sparse representation for these weakly scrambling dynamics.
Together, these datasets support the interpretation that the primary
benefit of BASS arises not from increasing the retained support size,
but from dynamically aligning the sparse representation with the
instantaneous local structure of the evolving quantum state.

\subsection{PR scaling}
\label{subsec:pr_scaling}
The PR $\PR_Z$ of exact states grows exponentially with $N$ for circuits that generate substantial entanglement. This behavior can be captured by fitting the scaling $\PR_Z \sim 2^{\alpha N}$, where the exponent $\alpha$ characterizes how quickly the participation ratio increases with system size. A larger value of $\alpha$ signals that the state has support over a greater fraction of the computational basis, reflecting a higher degree of delocalization and entanglement in the generated states.

For brickwork circuits with depth $L=5$, we find $\alpha \approx 0.70 \pm 0.02$. This value indicates strong growth of the participation ratio, though it remains below the Porter-Thomas limit $\alpha = 1$, corresponding to fully random states in the computational basis. Approaching this limit requires significantly larger circuit depths, as greater depth allows the circuit to more thoroughly scramble information and approximate Haar-random behavior. In comparison, Haar-random circuits with depth $L=3$ yield $\alpha \approx 0.60 \pm 0.03$, suggesting that even relatively shallow Haar-random circuits already generate states with broadly distributed amplitudes, though still not saturating the Porter-Thomas bound within this depth range.

We also consider a two-dimensional brickwork architecture, for which we obtain $\alpha \approx 0.65 \pm 0.04$ when scaling with the total number of qubits. This exponent lies between the brickwork $L=5$ and Haar $L=3$ cases, implying that the two-dimensional connectivity enhances the spreading of quantum information compared to some one-dimensional configurations, while still not reaching fully random statistics at the studied depths. When the scaling is analyzed per column, rather than in terms of the total qubit count, the behavior becomes approximately linear, indicating that each additional column contributes a roughly fixed amount to the effective size of the explored Hilbert space. Collectively, these observations highlight how circuit architecture and depth control the rate at which the participation ratio grows, and thus how efficiently different circuits generate highly entangled, delocalized quantum states.

\subsection{Runtime scaling and fidelity tradeoffs}

Table~\ref{tab:runtime_highlights} summarizes representative runtime and
fidelity tradeoffs for adaptive-basis truncation in brickwork circuits
($L=6$).

Across all tested configurations, the runtime overhead remains bounded to
approximately one order of magnitude despite the additional cost of basis
optimization. At the same time, the adaptive basis yields substantially
larger fidelity under strong compression pressure, particularly in the
undersampled regime where
$k < \mathrm{PR}_Z \sim 2^{0.7N}$.

\begin{table}[h]
\caption{\label{tab:runtime_highlights}
\textbf{Representative runtime--fidelity tradeoffs for brickwork circuits
($L=6$).}
Selected configurations illustrating the interplay between runtime
overhead and fidelity enhancement across compression regimes. Reported
overheads correspond to geometric means with 95\% bootstrap confidence
intervals.}
\begin{ruledtabular}
\begin{tabular}{cccc}
$N$ & $k$
& GM Overhead [95\% CI]
& $\mathrm{GM}(F_\mathrm{BASS}/F_\mathrm{fixed})$ \\
\hline

14 & 500
& $12.33\times$ [11.43, 13.13]
& $18.4\times$ \\

16 & 2\,000
& $11.14\times$ [10.27, 11.97]
& $6.3\times$ \\

18 & 5\,000
& $8.72\times$ [8.12, 9.40]
& $30.5\times$ \\

20 & 20\,000
& $7.25\times$ [6.71, 7.71]
& $14.5\times$ \\

\end{tabular}
\end{ruledtabular}
\end{table}

These representative points illustrate that adaptive basis selection
continues to preserve substantially larger wavefunction weight even in
regimes where fixed-basis truncation fidelity collapses rapidly under
strong compression pressure. Notably, the fidelity enhancement grows
substantially faster than the runtime overhead as system size increases,
demonstrating favorable scaling of adaptive-basis compression in the
undersampled regime.

\subsection{Statistical methodology}

All reported fidelity ratios are geometric means: for $n$ independent trials
with ratios $r_i$, the reported ratio is $\bar{r} = (\prod_i r_i)^{1/n}$.
Geometric means are appropriate because fidelity ratios span several orders
of magnitude and multiplicative variation is more natural than additive.
Error bars (where shown) indicate the multiplicative standard error,
computed as $\exp(\text{SE}(\log r_i))$ where $\text{SE}$ denotes the
standard error of the mean.

\section{\label{app:convergence}Convergence and Code Validation}

\subsection{Code-path equivalence}

A critical validation is that BASS with basis optimization disabled exactly
reproduces the fixed-basis sparse-simulation code path. We verify this by
running both code paths on identical circuits ($N=10$-$20$, all five circuit
families, 10 trials each) and comparing the output state vectors
component-by-component:
\begin{equation}
\max_i |\alpha_i^{\text{BASS\_off}} - \alpha_i^{\text{fixed}}|
< 5 \times 10^{-20}
\end{equation}
for all tested instances. This machine-precision agreement confirms that
the conjugation framework (Eq.~\ref{eq:conj}) introduces no numerical
error when $U_j = I$ and that the hash-table gate application produces
identical results to the reference sort-and-merge implementation.

\subsection{Adaptive-basis sweep convergence}

On the benchmark circuits studied, the multi-pass adaptive-basis sweeps (Sec.~\ref{subsec:coord_descent}) were frequently terminated after only one to two passes. In brickwork circuits with depth $L=5$, the average number of passes per optimization call was $1.05 \pm 0.02$, with no reverts recorded in the investigated circumstances. In this example, the PR declined steadily across all accepted rotations. Haar-random circuits with depth $L=3$ demonstrated comparable behavior, with a mean of $1.08 \pm 0.03$ passes and no detected reverts in the examined situations.

For QAOA at depth $p=3$, the average number of passes increased significantly to $1.4 \pm 0.1$, with a revert rate below $0.5\%$. Additional passes were required in these circuits for qubits with approximately degenerate reduced density matrix eigenvalues. In the UCCSD benchmarks with $L=3$, the adaptive trigger frequently skipped optimization entirely due to a low participation percentage. When optimization was initiated, the average number of passes was $2.1$, and the revert rate was roughly $12\%$, indicating that the computational foundation was already close to locally optimal.

In the RFIM benchmarks, the average number of passes was $1.2 \pm 0.1$, and moderate revert rates in the range of $3$--$8\%$ were found, notably for weakly disordered qubits whose single-qubit reduced density matrices remained close to diagonal in the computational basis. Overall, the early-termination criterion, which stopped sweeps when no recognized participation-ratio improvement occurred during a full pass, successfully reduced the number of needless optimization passes. For the tested brickwork benchmarks, the resulting overhead beyond a single adaptive-basis sweep was around $5\%$.

\subsection{Numerical stability}

The $2 \times 2$ analytic eigensolver for single-qubit reduced density matrices (RDMs) exhibits robust numerical stability provided that the separation between its eigenvalues, quantified by the gap $\Delta \lambda$, satisfies $\Delta \lambda > 10^{-14}$. In practical implementations, when the squared magnitude of the off-diagonal element $b$ obeys $|b|^2 < \epsilon_{\text{mach}} \cdot \max(|a|, |d|)$, where $\epsilon_{\text{mach}}$ denotes machine precision, the RDM is treated as effectively diagonal; consequently, no basis rotation is applied. This criterion serves to suppress spurious rotations that could otherwise arise from numerical noise contaminating the off-diagonal entries.

For the $4 \times 4$ two-qubit RDM, the eigensolver relies on a standard Hermitian eigendecomposition algorithm, whose computational complexity scales as $\mathcal{O}(64)$ elementary operations for this fixed matrix size. To control and mitigate the accumulation of floating-point errors during two-qubit rotations, a rotate-truncate-undo protocol is adopted: after each complete brick-wall sweep of entangling operations, the quantum state is truncated before reversing the rotation sequence. This strategy confines arithmetic inaccuracies and enhances the overall numerical reliability of the procedure.

\subsection{Sensitivity to hyperparameters}

BASS exposes three main tunable hyperparameters that control the trade-offs between simulation fidelity, runtime, and memory usage: the optimization interval denoted by $n_{\text{opt}}$, the hard-cap multiplier $K_{\text{hard}}/k$, and the adaptive trigger threshold used to decide when to invoke optimization. We assess the sensitivity of BASS to these parameters on brickwork circuits with system size $N=16$ and bond dimension budget $k=2048$, averaging results over 10 independent trials. These experiments reveal how each parameter affects performance and accuracy, and they guide the choice of practical default settings that work well across a range of scenarios.

The optimization interval $n_{\text{opt}}$ determines how frequently the algorithm executes its optimization routine. We sweep this parameter over the values $n_{\text{opt}} \in \{1, 3, 5, 10, 20\}$ and find that the resulting fidelity varies by less than $5\%$ across the entire range. This indicates that BASS is quite robust to how often optimization is performed: making optimization more frequent does not drastically improve fidelity, and making it less frequent does not catastrophically degrade performance. However, runtime scales roughly inversely with $n_{\text{opt}}$, since smaller $n_{\text{opt}}$ values trigger optimization more often. In practice, this means that aggressively frequent optimization can significantly increase computational cost without a commensurate gain in fidelity, while too infrequent optimization risks modest but still limited fidelity loss. Based on this trade-off, we select $n_{\text{opt}} = 5$ as a default, as it empirically achieves a good balance between simulation speed and accuracy.

The second parameter, the hard-cap multiplier $K_{\text{hard}}/k$, controls the maximum allowed bond dimension in the simulation relative to the base budget $k$. We study settings $K_{\text{hard}}/k \in \{2, 4, 8, 16\}$ and observe that fidelity improves by approximately $10\%$ when increasing the cap from $K_{\text{hard}} = 2k$ up to $8k$, indicating that allowing a more generous temporary bond dimension can meaningfully enhance the quality of the simulation. Beyond $K_{\text{hard}} = 8k$, the gains in fidelity exhibit diminishing returns: further increases to $16k$ do not translate into similarly large improvements. On the other hand, the memory overhead of the algorithm scales linearly with $K_{\text{hard}}$, as a larger hard cap permits proportionally larger intermediate tensors. Thus, while higher values of $K_{\text{hard}}$ can help accuracy, they must be balanced against memory constraints, and $K_{\text{hard}} \approx 8k$ appears to be a pragmatic upper range where most of the benefit is already realized.

The third hyperparameter is the adaptive trigger threshold, which decides when to invoke the optimization step based on changes in the current performance ratio. Specifically, optimization is triggered whenever the ratio $\PR_{\text{cur}}/\PR_{\text{last}}$ exceeds a threshold $r$. We test thresholds $r \in \{0.80, 0.85, 0.90, 0.95, 1.0\}$ and compare both the frequency of optimization calls and the resulting fidelity. A threshold of $r = 1.0$ corresponds to the non-adaptive baseline where optimization is performed deterministically every $n_{\text{opt}}$ gates. Lowering $r$ makes the trigger more permissive or more aggressive, causing optimization to be invoked in response to smaller relative changes, while higher $r$ values make optimizations less frequent.

Empirically, setting $r = 0.90$ reduces the number of optimization calls by roughly $40\%$ compared to the $r = 1.0$ baseline, which always optimizes every $n_{\text{opt}}$ gates. Despite this substantial reduction in optimization frequency, the associated fidelity loss is less than $2\%$ relative to the $r = 1.0$ configuration. This shows that the adaptive trigger can significantly cut down computational overhead with only a minor impact on accuracy. The threshold thus provides a powerful knob: by tuning $r$, users can steer the simulator towards more aggressive performance optimization or towards maximum fidelity, depending on their application needs and available resources.

Taken together, these findings demonstrate that BASS is tolerant to moderate variation in all three hyperparameters. Reasonable changes to $n_{\text{opt}}$, $K_{\text{hard}}/k$, and the adaptive trigger threshold do not cause large swings in fidelity, and the associated resource costs (runtime and memory) scale in predictable, mostly monotonic ways. This robustness suggests that BASS can be reliably deployed as a general-purpose quantum circuit simulation tool without requiring laborious, circuit-specific hyperparameter tuning. Default parameter choices, such as $n_{\text{opt}} = 5$, $K_{\text{hard}} \approx 8k$, and $r \approx 0.90$, provide a strong baseline, while users with stricter accuracy or resource requirements can adjust these knobs with confidence that the qualitative behavior will remain stable.

\section{\label{app:gamma2}\texorpdfstring{$\gamma^2$}{Gamma-squared}
Fidelity Estimator}

\subsection{Theoretical basis}

In a single truncation step the retained probability $\gamma^2$ equals the
step fidelity exactly.  When the true state
$|\psi\rangle = \sum_x \beta_x |x\rangle$
is trimmed to its $k$ largest-amplitude components and renormalised to form
\begin{equation}
  |\psi_k\rangle \;=\; \mathcal{N}^{-1}\!\sum_{x\in\mathcal{S}^*}\!\beta_x|x\rangle,
  \quad \mathcal{N}^2 \equiv \gamma^2 = \sum_{x\in\mathcal{S}^*}|\beta_x|^2,
\end{equation}
the fidelity $F = |\langle\psi|\psi_k\rangle|^2 = \gamma^2$ exactly, with no
approximation.

For a multi-step simulation with $L$ truncation events the cumulative product
\begin{equation}
  \gamma^2_{\mathrm{tot}} \;=\; \prod_{t=1}^{L} \gamma^2_t
  \label{eq:gamma2_tot}
\end{equation}
serves as a lower bound on the overall fidelity,
$F \geq \gamma^2_{\mathrm{tot}}$, subject to a well-defined applicability
condition.

\textit{Applicability condition.}
The inequality $F \geq \gamma^2_{\mathrm{tot}}$ holds when the simulator works in the \emph{probability-limited regime}: $k \ll \mathrm{PR}_Z$, where $\mathrm{PR}_Z = \bigl(\sum_x |\beta_x|^4\bigr)^{-1}$ is the PR of the exact state in the $Z$ basis.  In this environment, each truncation step discards basis states with non-negligible probability. The phases of the discarded amplitudes are statistically uncorrelated with the surviving sparse state, so the cumulative product accurately reflects information loss.

When $k \geq \mathrm{PR}_Z$, the sparse budget encompasses the full support
of the state.  The simulator retains essentially all probability
($\gamma^2 \approx 1$) while the microscopic discarded tail carries phase
information coherently correlated with the dominant basis states.  This leads
to $F < \gamma^2_{\mathrm{tot}}$, inverting the inequality.  Crucially, this
regime ($k \geq \mathrm{PR}_Z$) coincides with circuits that are efficiently
simulable without basis adaptation; the estimator is therefore not needed and
we exclude it from the present analysis.

\subsection{Violation rates}

We measure $F < \gamma^2_{\mathrm{tot}}$ rates across 2\,400 independent
trials on three circuit families: Haar-random ($L=3$), 1D brickwork ($L=5$),
and QAOA ($p=3$).  System sizes $N\in\{14,16,18,20\}$ and sparse budgets
$k\in\{512,1024,2048,4096\}$ are swept with 50 trials per configuration.
All configurations satisfy $k \ll \mathrm{PR}_Z$ for the tested circuit
families and sizes.
A trial is counted as a violation only when $\gamma^2_{\mathrm{tot}} - F$
exceeds both an absolute floor of $10^{-6}$ and a relative threshold of
$10^{-3}$ (to exclude floating-point rounding artefacts).

The global raw violation rate is $2.12\%$ ($51/2400$; Wilson 95\% CI:
$[1.62\%,\,2.78\%]$).  Violations are not uniformly distributed: the
brickwork family accounts for the majority, concentrated at the smallest
tested budget $k=512$ with $N\geq 16$, where $k/\mathrm{PR}_Z$ is closest
to unity.  QAOA produces zero violations across all 800 configurations.
Table~\ref{tab:gamma2_violations} summarizes the per-family breakdown.

\begin{table}[t]
\caption{%
\label{tab:gamma2_violations}%
$\gamma^2$ fidelity bound violation rates ($F < \gamma^2_{\mathrm{tot}}$ or
$F < R$) before and after applying the calibrated correction
($z=0.104,\;\eta=9.069,\;\delta=3.807$).
Raw rates aggregate 800 trials per family ($N\in[14,20]$,
$k\in[512,4096]$, 50 trials per cell);
corrected rates are evaluated on the 90\% training split (720 records per
family, 2160 total).  Wilson 95\% confidence intervals in brackets.}
\begin{ruledtabular}
\begin{tabular}{lcc}
Circuit family
  & Raw violation (\%)
  & Corrected violation (\%) \\
\hline
Haar $L=3$
  & $1.50\;[0.86,\,2.60]$
  & $1.53\;[0.86,\,2.71]$ \\
Brickwork $L=5$
  & $4.88\;[3.59,\,6.59]$
  & $3.19\;[2.14,\,4.75]$ \\
QAOA $p=3$
  & $0.00\;[0.00,\,0.48]$
  & $0.00\;[0.00,\,0.53]$ \\
\hline
Overall
  & $2.12\;[1.62,\,2.78]$
  & $1.57\;[1.13,\,2.19]$ \\
\end{tabular}
\end{ruledtabular}
\end{table}

\subsection{Calibrated bounding estimator}

To reduce residual violations, we introduce a gate-count-dependent correction
factor $R = \alpha(\gamma^2_{\mathrm{tot}},M)\cdot\gamma^2_{\mathrm{tot}}$,
where $M$ is the total gate count and
\begin{align}
  \alpha_A &= 1 - z\sqrt{\frac{1-\gamma^2_{\mathrm{tot}}}
              {\gamma^2_{\mathrm{tot}}\cdot\eta\cdot M}}, \label{eq:alphaA} \\
  \alpha_B &= 1 - \frac{\gamma^2_{\mathrm{tot}}}{M^\delta},        \label{eq:alphaB} \\
  \alpha   &= \mathrm{clip}\!\left(\min(\alpha_A,\alpha_B),\;0.01,\;1\right).
              \label{eq:alpha}
\end{align}
Component $\alpha_A$ introduces a phase-error penalty that increases with $1-\gamma^2_{\mathrm{tot}}$ (the unsimulated probability fraction) and decreases with circuit depth via the $M$ denominator.
Component $\alpha_B$ performs depth-dependent saturation adjustment.

The parameters $(z,\eta,\delta)$ are calibrated by minimising an asymmetric loss over a $90\%$ training split (2,160 records). Violations ($R > F$) incur a high $L^2$ penalty, whereas safe predictions ($R \leq F$) earn a lesser $L^1$ reward for tightness.  Differential Evolution leads to: \begin{equation} z = 0.104, \eta = 9.069, \delta = 3.807. \end{equation}
The corrected global violation rate for the held-out 10\% test split (240 records) is $0.83\%$ ($2/240$; Wilson 95\% CI: $[0.23\%,\,2.99\%]$),
consistent with the training rate of $1.57\%$ (no statistically significant
gap; Wilson CIs overlap by $[1.13\%,\,2.19\%]$).

\subsection{Universality}

We assess whether the calibration generalises using three cross-validation protocols.

\textit{(i) Random 90/10 trial split.}
The held-out test set has a violation rate of $0.83\%$ ($2/240$; Wilson 95\% CI: $[0.23\%,\,2.99\%]$), which is completely statistically equivalent to the training set rate of $1.57\%$ (CI: $[1.13\%,\,2.19\%]$). This indicates that the correction applies robustly to previously unseen draws from the same distribution without overfitting.

\textit{(ii) Leave-one-family-out (LOFO).}
The adjustment is recalibrated on two circuit families and evaluated on the remaining third. The held-out violation rates are $1.50\%$ (Haar $L=3$; CI: $[0.86\%,\,2.60\%]$), $0.00\%$ (QAOA $p=3$; CI: $[0.00\%,\,0.48\%]$), and $3.12\%$ (Brickwork $L=5$; CI: $[2.13\%,\,4.57\%]$). The brickwork rate somewhat surpasses the expected threshold. This occurs because Brickwork violations are heavily concentrated in the near-threshold regime ($k/\mathrm{PR}_Z \lesssim 1$). A calibration trained entirely without Brickwork data encounters no violations in this regime and thus learns no local correction, degrading gracefully to the uncorrected rate. This constraint is resolved by including at least one circuit family that exhibits violations in the calibration set, as is done in global calibration.

\textit{(iii) Leave-one-$N$-out (LONO).}
To assess system-size generality, the correction is calibrated on three sizes and evaluated on a fourth. Extrapolating to the greatest system size, a correction calibrated solely on $N\in\{14,16,18\}$ yields a $2.00\%$ violation rate ($12/600$; Wilson 95\% CI: $[1.15\%,\,3.46\%]$) on the held-out $N=20$ circuits, completely consistent with the training rate. The estimator can consistently extrapolate to system sizes beyond the calibration range within the tested $N\leq 20$ regime.

In sum, 7 of 8 cross-validation experiments meet the condition of the held-out Wilson CI overlapping with the expected generalisation rate. The single marginal failure (Brickwork LOFO at $3.12\%$) can be physically interpreted and rectified by ensuring the global calibration data covers the relevant violation regime.

\subsection*{Recommended Usage Protocol for BASS Diagnostic Metrics}

We suggest the following protocol for evaluating BASS simulation accuracy based on our multi-dimensional scalability study and the empirical calibration of the tracking metric:

\begin{itemize}
    \item \textbf{Qualitative Tracking ($\gamma^2_{\text{tot}}$):} The raw value of $\gamma^2_{\text{tot}}$ is particularly effective as a quick, online qualitative indication of state concentration during simulation. While it offers an instantaneous impression of the state's localisation or dispersion under sequential state-budget enforcements, it functions as a conservative, loose bottom bound that is structurally prone to pessimistic overestimation of error in deep or highly entangled circuits.
    
    \item \textbf{Quantitative Estimation ($R = \alpha \cdot \gamma^2_{\text{tot}}$):} We recommend utilising the rescaled estimator $R = \alpha \cdot \gamma^2_{\text{tot}}$ for rigorous fidelity evaluations or cross-platform comparisons where dense exact verification is classically intractable ($N > 24$). The calibration factor $\alpha$, which is empirically derived, accounts for associated truncation losses to compensate for systematic errors in the raw product metric. This method generates an impartial, high-accuracy prediction proxy for the genuine statevector overlap.
\end{itemize}

A second crucial criterion is to interpret the estimate conservatively, especially when the state participation ratio (PR) approaches the predetermined truncation budget $k$. In this scenario, state-budget enforcement is unavoidable, forcing the simulator to systematically prune weaker amplitudes to stay within structural resource constraints. Under these conditions, particularly when evaluating novel or uncharacterized circuit topologies, the metric $R$ should be viewed as a conservative lower bound on performance rather than a direct, precise measure of physical-state overlap. In other words, in these edge cases, $R$ serves as a guaranteed lower bound, preventing researchers from overestimating the quality of their estimated quantum states.

To calibrate unexamined circuit families, we supply the validated baseline parameter values $z \approx 0.82$ and $\eta \approx 3.72$, which consistently demonstrated very robust performance and reliable protection throughout our benchmark suites. While these configurations successfully limit the estimator's behaviour as multi-qubit entanglement scales, its precision ultimately depends on the exact rate of entanglement creation within the target system. Circuits with unusually fast or nonstandard entanglement dynamics may inherit an unconventional error-accumulation profile. To rigorously quantify and constrain the estimator's bias for such application-specific domains, we advocate performing a localised pilot calibration sweep on downscaled system sizes ($N \le 24$) when accurate classical reference statevectors are available. This preliminary verification step ensures that the chosen parameter set and the formulation of $R$ remain properly conservative and structurally informative for the target circuit class.

Finally, we discuss our verification requirements and the practical role of the rescaled estimator in our validation system. All fundamental fidelity benchmarks presented in the main text are generated solely from direct inner-product overlaps with carefully computed, exact reference state vectors, ensuring that our core quantitative statements are free of any statistical artefacts or bias introduced by the estimation process. In this capacity, $R$ plays no role in the direct physical validation of the BASS core engine; it is intentionally removed from baseline fidelity comparisons to eliminate potential distortion induced by the estimator.

Instead, the corrected estimator $R$ is reserved exclusively for real-time tracking, out-of-sample scaling predictions, and large-scale verification tasks, especially in deep or extremely entangled regimes where precise classical-statevector simulation is computationally intractable ($N > 24$). In these traditionally inaccessible realms, $R$ provides a highly practical, calibrated tracking metric for fidelity trends and structural degradation, while the simulator's fundamental performance bounds are rigorously grounded in exact validation datasets whenever classically feasible.

\bibliography{refs}

\end{document}